\newcommand*{\prob}[1]{\textsc{#1}}                                                                 % problem name
\newcommand*{\probc}[1]{\textsf{#1}}                                                                % problem class name
\newcommand*{\algo}[1]{\texttt{#1}}                                                                 % algorithm/procedure name
\newcommand*{\algoi}[2]{\texttt{#1}(#2)}                                                            % algorithm/procedure invocation
\DeclarePairedDelimiter{\brnX}{(}{)}
\DeclarePairedDelimiter{\brsX}{[}{]}
\DeclarePairedDelimiter{\brcX}{\{}{\}}
\DeclarePairedDelimiter{\absX}{\lvert}{\rvert}
\DeclarePairedDelimiter{\ceilX}{\lceil}{\rceil}
\DeclarePairedDelimiter{\floorX}{\lfloor}{\rfloor}
\newcommand*{\brn}{\brnX*}                                                                          % ( )
\newcommand*{\brs}{\brsX*}                                                                          % [ ]
\newcommand*{\brc}{\brcX*}                                                                          % { }
\newcommand*{\abs}{\absX*}                                                                          % | |
\newcommand*{\floor}{\floorX*}                                                                      % ⌊ ⌋
\newcommand*{\ceil}{\ceilX*}                                                                        % ⌈ ⌉
\let\Pr\relax
\DeclareMathOperator{\Pop}{P}
\DeclareMathOperator{\expop}{\mathbb{E}}
\newcommand*{\Pr}[1]{\Pop\brn{#1}}                                                                  % P( )
\newcommand*{\Ex}[1]{\expop\brs{#1}}                                                                % E[ ]
\DeclareMathOperator{\ohop}{O}
\newcommand*{\Oh}[1]{\ohop\brn{#1}}                                                                 % O( )
\DeclareMathOperator{\Vop}{V}
\DeclareMathOperator{\Eop}{E}
\DeclareMathOperator{\Nbop}{N}
\newcommand*{\V}[1]{\Vop\brn{#1}}                                                                   % V
\newcommand*{\E}[1]{\Eop\brn{#1}}                                                                   % E
\newcommand*{\Nb}[1]{\Nbop\brn{#1}}                                                                 % N(v): neighbourhood
\newcommand*{\Nbr}[2]{\Nbop_{#1}\brn{#2}}                                                           % N_C(v): restricted neighbourhood
\let\degd\deg
\let\deg\relax
\newcommand*{\deg}[1]{\degd\brn{#1}}                                                                % deg(v): degree
\newcommand*{\degr}[2]{\degd_{#1}\brn{#2}}                                                          % deg_C(v): restricted degree
\newcommand*{\N}{\mathbb{N}}                                                                        % N
\newcommand*{\Z}{\mathbb{Z}}                                                                        % Z
\newcommand*{\setb}[2]{\left\{#1 \mid #2\right\}}                                                   % set-builder form
\newcommand*{\dnimply}{\kern.6em\not\kern-.6em \implies}                                            % does not imply
\newcommand*{\YES}{\texttt{YES}}                                                                    % YES
\newcommand*{\NO}{\texttt{NO}}                                                                      % NO
\newcommand*{\NULL}{\texttt{NULL}}                                                                      % NULL
\newcommand*{\sgn}[1]{\sgn\brn{#1}}                                                                 % sgn
\newcommand*{\exS}[1]{\textsuperscript{#1}}                                                         % expression superscript for ordinal numbers
\newcommand*{\tops}[2]{\texorpdfstring{#1}{#2}}
\newcommand{\ptitle}[1]{\gdef\prob@title{#1}}                                                       % title
\newcommand{\pobject}[1]{\gdef\prob@object{#1}}                                                     % instance object
\newcommand{\pquery}[1]{\gdef\prob@query{#1}}                                                       % instance query
\newcommand{\pparam}[1]{\gdef\prob@param{#1}}                                                       % parameter
  \def\prob@type{#1}% this controls the query type of the problem (question/object)
  \def\prob@topt{opt}%
  \def\prob@title{}%
  \def\prob@object{}%
  \def\prob@query{}%
  \def\prob@param{}%
    \def\prob@qword{Solution}%
    \def\prob@qword{Question}%
    \def\prob@trow{{\large\prob@title}}% title row only has problem name
    \def\prob@trow{% title row also has parameter
      \begin{tabular*}{\textwidth}{@{\extracolsep{\fill}}lr}%
        {\large\prob@title} & \textbf{Parameter:}~\prob@param%
      \end{tabular*}%
    }%
\title{Approximation in (Poly-) Logarithmic Space\thanks{A preliminary version~\cite{BRS2020MFCS} of this article appeared in the proceedings of MFCS 2020.}
}
\author{Arindam Biswas \and Venkatesh Raman \and Saket Saurabh}
\authorrunning{A.\ Biswas, V.\ Raman and S.\ Saurabh} % if too long for running head
\institute{A.\ Biswas (\Letter) \at The Institute of Mathematical Sciences, HBNI, Chennai, India\\\email{barindam@imsc.res.in} \and V.\ Raman \at The Institute of Mathematical Sciences, HBNI, Chennai, India\\\email{vraman@imsc.res.in} \and S.\ Saurabh \at The Institute of Mathematical Sciences, HBNI, Chennai, India \and University of Bergen, Bergen, Norway\\\email{saket@imsc.res.in}}
\newcommand*{\pDS}{\prob{Dominating Set}}
\newcommand*{\pHS}{\prob{Hitting Set}}
\newcommand*{\pdHS}{d\text{--}\prob{Hitting Set}}
\newcommand*{\pIS}{\prob{Independent Set}}
\newcommand*{\pVC}{\prob{Vertex Cover}}
\newcommand*{\pSC}{\prob{Set Cover}}
\newcommand*{\pTVD}{\prob{Triangle-Free Deletion}}
\newcommand*{\pTFVS}{\prob{Tournament FVS}}
\newcommand*{\pCVD}{\prob{Cluster Deletion}}
\newcommand*{\pSVD}{\prob{Split Deletion}}
\newcommand*{\pThVD}{\prob{Threshold Deletion}}
\newcommand*{\pCoVD}{\prob{Cograph Deletion}}
\newcommand*{\pSort}{\prob{Sorting}}
\newcommand*{\pSelect}{\prob{Selection}}
\newcommand*{\pLP}{\prob{Linear Programming}}
\newcommand*{\pUSTCON}{\prob{USTCON}}
\newcommand*{\pSTCON}{\prob{STCON}}
\newcommand*{\cP}{\probc{P}}
\newcommand*{\cL}{\probc{L}}
\newcommand*{\cNL}{\probc{NL}}
\begin{document}
\maketitle
\begin{abstract}
We develop new approximation algorithms for classical graph and set problems in the RAM model under space constraints. As one of our main results, we devise an algorithm for $\pdHS{}$ that runs in time $n^{\Oh{d^2 + (d / \epsilon)}}$, uses $\Oh{(d^2 + (d / \epsilon))\log {n}}$ bits of space, and achieves an approximation ratio of $\Oh{(d / \epsilon) n^{\epsilon}}$ for any positive $\epsilon \leq 1$ and any $d \in \N$. In particular, this yields a factor-$\Oh{\log{n}}$ approximation algorithm which runs in time $n^{\Oh{\log{n}}}$ and uses $\Oh{\log^2{n}}$ bits of space (for constant $d$). As a corollary, we obtain similar bounds for $\pVC{}$ and several graph deletion problems.

For bounded-multiplicity problem instances, one can do better. We devise a factor-$2$ approximation algorithm for $\pVC{}$ on graphs with maximum degree $\Delta$, and an algorithm for computing maximal independent sets, both of which run in time $n^{\Oh{\Delta}}$ and use $\Oh{\Delta \log{n}}$ bits of space. For the more general $\pdHS{}$ problem, we devise a factor-$d$ approximation algorithm which runs in time $n^{\Oh{d{\delta}^2}}$ and uses $\Oh{d {\delta}^2 \log{n}}$ bits of space on set families where each element appears in at most $\delta$ sets.

For $\pIS{}$ restricted to graphs with average degree $d$, we give a factor-$(2d)$ approximation algorithm which runs in polynomial time and uses $\Oh{\log{n}}$ bits of space. We also devise a factor-$\Oh{d^2}$ approximation algorithm for $\pDS{}$ on $d$-degenerate graphs which runs in time $n^{\Oh{\log{n}}}$ and uses $\Oh{\log^2{n}}$ bits of space. For $d$-regular graphs, we show how a known randomized factor-$\Oh{\log{d}}$ approximation algorithm can be derandomized to run in time $n^{\Oh{1}}$ and use $\Oh{\log n}$ bits of space.

Our results use a combination of ideas from the theory of kernelization, distributed algorithms and randomized algorithms.

\keywords{approximation \and logspace \and logarithmic \and log \and space \and small \and limited \and memory \and ROM \and read-only}
% \PACS{PACS code1 \and PACS code2 \and more}
% \subclass{MSC code1 \and MSC code2 \and more}
\end{abstract}

\section{Introduction}
This paper examines the classical approximation problems $\pVC{}$, $\pHS{}$ and $\pDS{}$ in the RAM model under additional polylogarithmic space constraints. We devise approximation algorithms for these problems which use polylogarithmic space in general and $\Oh{\log{n}}$ bits of space on certain special input types.

In the absence of space constraints, the greedy heuristic is a good starting point for many approximation algorithms. For $\pSC{}$, it even yields optimal (under certain complexity-theoretic assumptions) approximation ratios~\cite{AMS2006TALG,DS2014STOC}. However, the heuristic inherently changes the input in some way. In a space-constrained setting, this is asking for too much: the input is immutable, and the amount of auxiliary space available (in our case, polylogarithmic in the input size) is not sufficient to register changes to the input.

Linear programming is another tool that plays a central role in the design of approximation algorithms. While it yields competitive approximations in polynomial time when space is not constrained, it is known that under logarithmic-space reductions, it is $\cP{}$-complete to approximate $\pLP{}$ to any constant factor~\cite{Ser1991IPL}. Such a result even holds for $\pLP{}$ instances where all coefficients are positive~\cite{TX1998ParallelProcessLett}.

\paragraph*{Machine Model.}\ We use the standard RAM model with an additional polylogarithmic space constraint. For inputs $n$ bits in length, memory is organized as words of length $\Oh{\log{n}}$, which allows any input element to be addressed using a single word of memory. Integer arithmetic operations on pairs of words and single-word memory access operations take constant time. This is referred to in the literature as the \emph{word} RAM model.

The input (a graph or family of sets) is provided to the algorithm using some canonical encoding, which can be read but not modified, i.e.\ the algorithm has read-only access to the input. The algorithm uses some auxiliary memory, to which it has read-write access, and in the setting of this paper, the amount of such memory available is bounded by a polynomial in $\log{n}$. Output is written to a stream: once something is output, the algorithm cannot read it back later on as it executes. We count the amount of auxiliary memory used in units of $1$ bit, and the objective is to use as little auxiliary memory as possible.

% \paragraph*{Input Formats.}\ Our algorithms take graphs or hypergraphs as input. Here, we show explicitly how they can be encoded as lists of elements, enabling them to be read from or written to streams. Later on, we will omit the details of such read/write operations as they can be performed with only an additional polynomial-time, logarithmic-space overhead. The characters \# and \% are used as delimiters and \$ is used to pad sets to the right size.

% \begin{itemize}
%     \item Graph with $n$ vertices and $m$ edges: $n\#u_1\%v_1\#\dotsb\#u_m\%v_m$\\ Edges are delimited by \# and \% delimits vertices. $u_i, v_i \in [n]$ and $u_i\%v_i$ denotes the edge $u_iv_i$. 

%     \item Hypergraph with $n$ vertices (elements), $m$ edges (sets) and upto $d$ vertices per edge: $n\#e^1_1\%e^2_1\%\dotsb\%e^d_1\#\dotsb\#e^1_m\%e^2_m\%\dotsb\%e^d_m$\\ Edges are delimited by \# and \% delimits vertices. $e^1_i, \dotsc, e^d_i \in [n] \cup \brc{\$}$ and $e^1_i\%\dotsb\%\ e^d_i$ denotes the edge $\brc{e^1_i, \dotsc, e^d_i}$. Edges of smaller size are padded with \$ characters.
% \end{itemize}

\subsection*{Our Results}
\paragraph*{$\pdHS{}$ and Vertex Deletion Problems.}\ An instance of the\\$\pdHS{}$ problem consists of a ground set $U$ and a family $\mathcal{F}$ of subsets of $U$ of size at most $d \in \N$, and the objective is to find a subset of the ground set that intersects every set in the family.
\begin{itemize}
    \item We develop a factor-$\Oh{(d / \epsilon) n^{\epsilon}}$ approximation algorithm for $\pdHS{}$ which runs in time $n^{\Oh{d^2 + (d/\epsilon)}}$ and uses $\Oh{(d^2 + (d / \epsilon))\log{n}}$ bits of space (Section~\ref{sect:hs_vc_del}), where $\epsilon \leq 1$ is an arbitrary positive number and $d$ is a fixed positive integer. In particular, this yields a factor-$\Oh{d \log n}$ approximation algorithm for the problem which uses $\Oh{\log^2{n}}$ bits of space. As an application, we show how the algorithm can be used to approximate various \emph{deletion} problems with similar space bounds. From this, we derive a factor-$\Oh{(1 / \epsilon) n^{\epsilon}}$ (for arbitrary positive $\epsilon \leq 1$) approximation algorithm for $\pVC{}$ that runs in time $n^{\Oh{1 /\epsilon)}}$ and uses $\Oh{(1 / \epsilon) \log{n}}$ bits of space.

    \item We give a simple factor-$2$ approximation algorithm for $\pVC{}$ on graphs with maximum degree $\Delta$ which runs in time $n^{\Oh{\Delta}}$ and uses $\Oh{\Delta \log{n}}$ bits of space (Section~\ref{ssct:bd_vc_2_approx}).

    \item For $\pdHS{}$ instances where each element appears in at most $\delta$ sets, we devise a factor-$d$ approximation algorithm (Section~\ref{ssct:bd_dhs_d_approx}), generalizing the above result. The algorithm runs in time $n^{\Oh{d {\delta}^2}}$ and uses $\Oh{d {\delta}^2 \log{n}}$ bits of space.
\end{itemize}

\paragraph*{$\pDS{}$.}\ In the $\pDS{}$ problem, the objective is to find a vertex set of minimum size in a graph such that all other vertices are adjacent to some vertex in the set.
\begin{itemize}
    \item We give a factor-$\Oh{\sqrt{n}}$ approximation algorithm for graphs excluding $C_4$ (a cycle on $4$ vertices) as a subgraph, which runs in polynomial time and uses $\Oh{\log{n}}$ bits of space (Section~\ref{ssct:c4_ds_approx}).

    \item Graphs of bounded degeneracy form a large class which includes planar graphs, graphs of bounded genus, graphs excluding a fixed graph $H$ as a (topological) minor and graphs of bounded expansion. For graphs with degeneracy $d$, we give a factor-$\Oh{d^2}$ approximation algorithm which uses $\Oh{\log^2{n}}$ bits of space. (Section~\ref{ssct:dgn_ds_approx}).

    \item Additionally, for graphs in which each vertex has degree $d$, i.e.\ $d$-regular graphs, we exhibit a factor-$\Oh{\log{d}}$ approximation algorithm for\\$\pDS{}$ (Section~\ref{ssct:reg_ds_approx}) which is an adaptation of known results to the constrained-space setting.
\end{itemize}

\paragraph*{$\pIS{}$.}\ An instance of the $\pIS{}$ problem consists of a graph, and the objective is to find an \emph{independent set} of maximum size i.e.\ a set of vertices with no edges between them.
\begin{itemize}
    \item We show how a known factor-$(2d)$ approximation algorithm for\\$\pIS$ on graphs with average degree $d$ can be implemented to run in polynomial time and use $\Oh{\log n}$ bits of space (Section~\ref{ssct:avd_is_approx}).

    \item For the related problem of finding maximal independent sets, we devise an algorithm which runs in time $n^{\Oh{\Delta}}$ and uses $\Oh{\Delta \log{n}}$ bits of space (Theorem~\ref{thrm:bd_maxl_is}) on graphs with maximum degree $\Delta$.
\end{itemize}

\paragraph*{Remark.}\ In various statements, we use the verb \emph{enumerate} to explicitly indicate that the structures being computed are produced in serial fashion, and read by other procedures later on or output as a stream.

\subsection*{Related Work}
Small-space models such as the streaming model and the in-place model have been the subject of much research over the last two decades (see~\cite{McG2014SIGMODRec,CMR2018TALG,CMRS2018ESA} and references therein). In the streaming model, in addition to the space constraint, the algorithm is also required to read the input in a specific (possibly adversarial) sequence in one or more passes. The in-place model, on the other hand, allows the memory used for storing the input to be modified. The read-only RAM model we use is distinct from both these models. 

Historically, the read-only model has been studied from the perspective of time–space tradeoff lower bounds, particularly for problems like $\pSort{}$~\cite{BC1982SICOMP,BFK+1981JCSS,Bea1991SICOMP,PR1998FOCS,PP2002SODA} and $\pSelect{}$~\cite{MP1980TCS,Fre1987JCSS,MR1996TCS,RR1999NordJComput}.

The earliest graph problems studied in this model were the undirected and directed graph reachability problems (resp.\ $\pUSTCON$ and $\pSTCON$) in connection with the complexity classes $\cL{}$ and $\cNL$. Savitch~\cite{Sav1970JCSS} showed that on input graphs with $n$ vertices, $\pSTCON$ (and therefore also $\pUSTCON$) can be solved using $\Oh{\log^2{n}}$ bits of space. This bound was gradually whittled down over more than two decades, and eventually Reingold~\cite{Rei2008JACM} showed that that $\pUSTCON{}$ can be solved using $\Oh{\log{n}}$ bits of space. Graph recognition problems for classes such as bipartite~\cite{Rei1984JACM} and planar~\cite{AM2004InfComput} graphs have been shown to reduce in logarithmic space to $\pUSTCON{}$, which implies logarithmic-space algorithms for these problems as well. The more general problem of recognizing bounded-genus graphs is also known to be solvable in logarithmic space, by an algorithm of Elberfeld and Kawarabayashi~\cite{EK2014STOC}.

For the task of enumerating BFS and DFS traversal sequences, a number of algorithms have been developed~\cite{EHK2015STACS,AIK+2014ISAAC,CRS2017JCSS,Hag2020Algorithmica} which run in polynomial (many of them close to linear) time and their space usage in bits is linear in the number of vertices or edges. The read-only RAM model has also been studied in relation to polynomial-time-solvable search problems~\cite{Yam2013COCOA} and the approximation properties of search problems that can be solved in nondeterministic logarithmic space~\cite{Tan2007TheoryComputSyst}.

% Problems studied in the in-place model include permutation inversion~\cite{Hua1981IPL,EMR2016ISAAC}, nearest-neighbour search~\cite{BCC2004SoCG,CC2008SODA}, binary counter representation~\cite{EK2013MFCS,Ras2017ICALP} and graph search~\cite{EHK2015STACS,CMRS2018ESA,KS2019CIAC}.

Algorithms for the PRAM model---where a number of processors run in parallel and access a common area of memory to solve problems---can sometimes be translated into sequential algorithms that use small space. A known reduction~\cite{Pap1994book} allows one to convert any PRAM algorithm with parallel running time $s(n)$ to a sequential algorithm that uses $s(n) \log^c{n}$ ($c = 1$ or $2$, depending on the PRAM variant) bits of space. These algorithms, however, do not necessarily have polynomial running times.
%However, on certain special input types, our algorithms are more space efficient. 
%For further discussion, see Appendix~\ref{apdx:pram_disc}.
The PRAM algorithm of Luby~\cite{Lub1986SICOMP} for finding maximal independent sets in a graph can be used to $2$-approximate $\pVC$ (better approximation ratios are unlikely~\cite{KR2008JCSS}). Implemented in the sequential RAM model, it uses $\Oh{\log^2{n}}$ bits of space. For the more general problem of finding maximal independent sets in hypergraphs, the recent PRAM algorithm of Harris~\cite{Har2019TALG} yields a polylogarithmic-space algorithm for hypergraphs with edges of fixed size.

Our scheme for $\pdHS{}$ trades approximation factor against space to yield a family of algorithms that use $\Oh{(d^2 + (d / \epsilon))\log n)}$ bits of space and produce $\Oh{(d / \epsilon) n^{\epsilon}}$-approximate solutions for any positive $\epsilon \leq 1$. As a corollary, we obtain an $\Oh{d \log{n}}$-approximation algorithm that uses $\Oh{\log^2 n}$ bits of space. On graphs with maximum degree $\Delta$, our approximation algorithm for $\pVC{}$ uses $\Oh{\Delta \log{n}}$ bits of space and produces $2$-approximate solutions.

Berger et al.~\cite{BRS1994JCSS} gave a PRAM algorithm for $\pSC$ which can be implemented in the sequential RAM model to $\Oh{\log{n}}$-approximate $\pDS$ in $\Oh{\log^4{n}}$ bits of space. See also~\cite{Tre1998Algorithmica,LN1993STOC}, which give parallel approximation algorithms for $\pLP{}$, and see~\cite{LO2017SPAA}, which gives tight approximation ratios for CSP's using semi-definite programming in the PRAM model. Our algorithms for $\pDS{}$ are simpler and more direct, and work for a large class of graphs while using $\Oh{\log^2{n}}$ bits of space.

\subsection*{Our Techniques}
As noted earlier, the greedy heuristic causes changes to the input, which our model does not permit. To get around this, we use a \emph{staggered} greedy approach in which the solution is constructed in a sequence of greedy steps to approximate $\pVC$ and $\pdHS{}$ on bounded-multiplicity instances (Section~\ref{sect:layering}). By combining this with data reduction rules from kernelization algorithms, we also obtain approximations for $\pVC{}$ and more generally $\pdHS{}$ (Section~\ref{sect:hs_vc_del}), and restricted versions of $\pDS$ (Sections~\ref{ssct:c4_ds_approx}~and~\ref{ssct:dgn_ds_approx}). In Section~\ref{sect:randomization}, we use $2$-universal hash families constructible in logarithmic space to derandomize certain randomized sampling procedures for approximating $\pIS$ on graphs of bounded average degree and $\pDS$ on regular graphs.

\section{Preliminaries}
\paragraph*{Notation.}\ $\N$ denotes the set of natural numbers $\brc{0, 1, \dotsc}$ and $\Z^+$ denotes the set of positive integers $\brc{1, 2, \dotsc}$. For $n \in \Z^+$, $[n]$ denotes the set $\brc{1, 2, \dotsc, n}$. Let $G$ be a graph. Its vertex set is denoted by $\V{G}$, and its edge set by $\E{G}$. The degree of a vertex $v$ is denoted by $\deg{v}$, and for a set $S \subseteq \V{G}$ or a subgraph $H$ of $G$, $\degr{S}{v}$ denotes the degree of $v$ in $G[S]$ and $\degr{H}{v}$ denotes the degree of $v$ in $H$.

\paragraph*{Known Results.}\ The following result combines a known logarithmic-space implementation of the Buss kernelization rule~\cite{BG1993SICOMP} for $\pVC{}$ with the observation that the kernel produced is itself a vertex cover.

\begin{proposition}[Cai et al.~\cite{CCDF1997AnnPureApplLogic}, Theorem 2.3]\label{prop:ceal_vc_kernel}
	There is an algorithm which takes as input a graph $G$ and $k \in \N$, and either determines that $G$ has no vertex cover of size at most $k$ or enumerates a vertex cover for $G$ with at most $2k^2$ edges. The algorithm runs in time $\Oh{kn}$ and uses $\Oh{\log{n}}$ bits of space. 
\end{proposition}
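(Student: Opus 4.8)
The plan is to implement Buss's kernelization for $\pVC{}$ in logarithmic space and then observe that the vertex set retained by the kernel already covers every edge of $G$. Two reduction rules drive the kernelization. First, any vertex $v$ with $\deg{v} > k$ must lie in every vertex cover of size at most $k$, for otherwise all of its more than $k$ neighbours would be forced in; call these the \emph{heavy} vertices. If a cover of size at most $k$ exists there can be at most $k$ of them, so upon counting a $(k+1)$-th heavy vertex the algorithm may immediately report that $G$ has no such cover. Second, once the heavy vertices are set aside, every remaining (\emph{light}) vertex has degree at most $k$, and any edge with two light endpoints can only be covered by one of those endpoints; at most $k$ light cover-vertices of degree at most $k$ together cover at most $k^2 \le 2k^2$ such edges, so if more than $k^2$ light--light edges are found the instance is again infeasible.

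The observation that turns the kernel into a vertex cover is immediate. Let $H$ be the heavy vertices and $L$ the endpoints of the light--light edges. Every edge of $G$ either meets $H$, and is covered there, or has both endpoints light, and is then covered by $L$; hence $H \cup L$ is a vertex cover of $G$. Its kernel---the subgraph carrying the light--light edges---has at most $k^2 \le 2k^2$ edges by the second rule (and $\Oh{k^2}$ vertices), which is exactly what the statement promises to enumerate.

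The delicate part is the space bound, and it dictates the whole implementation: since $|H|$ and $|L|$ may each be $\Om{k}$ and $k$ can be as large as $n$, neither set---nor the heavy/light labelling of the vertices---can be stored. I would therefore recompute degrees on demand. Classifying a single vertex as heavy or light costs $\Oh{k}$ time and $\Oh{\log{n}}$ bits: scan its adjacency list with one counter and one pointer, stopping the instant the neighbour count exceeds $k$. The vertex cover is then streamed in two sweeps---emit each heavy vertex in a pass over the vertices, and, in a pass over the edges, emit the endpoints of every edge whose two endpoints both test light (a downstream reader treating the stream as a set)---while two capped counters, for the heavy vertices and the light--light edges, detect infeasibility. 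Because each vertex and each edge is examined a constant number of times and each examination triggers an $\Oh{k}$ degree test, the working memory never exceeds $\Oh{\log{n}}$ bits and the total running time is $\Oh{kn}$. The one point requiring care is precisely this repeated recomputation: with no room to cache the heavy/light partition, every membership query must re-scan an adjacency list, and one must verify that charging an $\Oh{k}$ test to each edge still keeps the aggregate within the claimed $\Oh{kn}$ bound and that the streamed output, read once downstream, faithfully represents both the cover $H \cup L$ and its bounded-edge kernel.
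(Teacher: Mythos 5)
Your proposal reconstructs exactly what the paper intends: the paper gives no proof of this proposition, citing Cai et al.\ and remarking only that it ``combines a known logarithmic-space implementation of the Buss kernelization rule with the observation that the kernel produced is itself a vertex cover,'' which is precisely your two rules (heavy vertices forced into the cover, at most $k^2$ light--light edges remaining) together with the observation that $H \cup L$ covers every edge. The one loose point is the running-time accounting: a pass over the edges that runs an $\Oh{k}$ lightness test on each endpoint costs $\Oh{km}$ rather than $\Oh{kn}$, and even routing the enumeration through the adjacency lists of light vertices (each of length at most $k$) still incurs an $\Oh{k}$ test per neighbour for $\Oh{k^2 n}$ total, so the stated $\Oh{kn}$ bound is not actually justified by pure recomputation --- though the polynomial-time, $\Oh{\log n}$-space guarantee you do establish is all that any later use of this proposition requires.
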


$\pVC{}$ is a special case of $\pdHS{}$ ($d \in \N$, a constant), an instance of which comprises a family $\mathcal{F}$ of subsets of a ground set which all have size at most $d$. The objective is to compute a minimum \emph{hitting set} for $\mathcal{F}$, i.e.\ a subset of the ground set which intersects each set in $\mathcal{F}$. The next proposition shows that a result similar to the one above also holds for this generalization.
\begin{proposition}[Fafianie and Kratsch~\cite{FK2015MFCS}, Theorem 1]\label{prop:fk_hs_kernel}
	There is an algorithm which takes as input a family $\mathcal{F}$ of $d$-subsets ($d \in \N$, a constant) of a ground set $U$ and $k \in \N$, and either determines that $\mathcal{F}$ has no hitting set of size at most $k$ or enumerates a subfamily $\mathcal{F}'$ of $\mathcal{F}$ with $\Oh{{(k + 1)}^d}$ sets which is equivalent to it: $\mathcal{F}$ has a hitting set of size at most $k$ if and only if $\mathcal{F}'$ has a hitting set of size at most $k$. The algorithm runs in time $n^{\Oh{d^2}}$ and uses $\Oh{d^2 \log n}$ bits of space.
\end{proposition}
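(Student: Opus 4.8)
The plan is to implement the classical sunflower-lemma kernelization for $\pdHS{}$ in a read-only, low-space, enumerative fashion. Recall that a \emph{sunflower} with core $C$ and petals $P_1, \dots, P_{p}$ is a collection of sets $C \cup P_i$ where the $P_i$ are nonempty, pairwise disjoint, and disjoint from $C$; the Erd\H{o}s--Ko--Rado Sunflower Lemma guarantees that any family of more than $d! \cdot p^d$ sets, each of size at most $d$, contains a sunflower with $p+1$ petals. The core reduction rule is: if $\mathcal{F}$ contains a sunflower with $k+1$ petals and core $C$, then any hitting set of size at most $k$ must contain an element of $C$ (otherwise it would need $k+1$ distinct elements to hit the $k+1$ disjoint petals), so the set $C \cup P_{k+1}$ may be deleted without affecting whether a hitting set of size at most $k$ exists. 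The degenerate case $C = \emptyset$ gives the infeasibility test: if $\mathcal{F}$ contains $k+1$ pairwise disjoint sets, the algorithm reports that there is no hitting set of size at most $k$.

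First I would fix the input order $F_1, F_2, \dots, F_m$ and define an order-dependent membership predicate for the kernel $\mathcal{F}'$: keep $F_i$ unless there is a core $C \subseteq F_i$ such that the already-kept earlier sets contain $k+1$ sets forming a sunflower with core $C$ whose petals are disjoint from $F_i \setminus C$. Equivalence follows by induction: whenever $F_i$ is dropped, the $k+1$ witnessing sets lie in $\mathcal{F}'$ and force every size-$\le k$ hitting set of $\mathcal{F}'$ to meet $C \subseteq F_i$, so $F_i$ is automatically satisfied and $\mathcal{F}'$ stays equivalent to $\mathcal{F}$. For the size bound, suppose the kept family contained a sunflower with $k+2$ petals and core $C$; processing its last (highest-index) member, the predicate would have seen the $k+1$ earlier kept petals disjoint from the last petal and rejected it, a contradiction. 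Hence the kept family has no $(k+2)$-petal sunflower, so by the Sunflower Lemma it has at most $d!\,(k+1)^d = \Oh{(k+1)^d}$ sets for fixed $d$. The enumeration is then simply: scan $i = 1, \dots, m$ and output $F_i$ whenever the predicate holds.

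The hard part will be evaluating this predicate within $\Oh{d^2 \log n}$ bits, since we can store neither the partial kernel (which has up to $\Oh{(k+1)^d}$ sets) nor the union of $k+1$ candidate petals (up to $(k+1)d$ elements). The resolution is to spend the generous $n^{\Oh{d^2}}$ time budget on recomputation rather than storage: the current set $F_i$ and a single candidate core $C \subseteq F_i$ each occupy only $\Oh{d \log n}$ bits, and there are only $2^d$ cores to try per set; for each core, whether $k+1$ disjoint admissible petals exist among the kept earlier sets is decided by a brute-force search over the read-only input that re-derives kept-status and petal disjointness on demand, keeping only indices and $\Oh{d}$ sets of size $\le d$ in memory at any time. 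The main obstacle to address carefully is precisely this space-for-time exchange: arguing that the inherently stateful, kernel-maintaining sunflower procedure can be replaced by repeated passes that recompute everything from the immutable input, so that total working memory never exceeds $\Oh{d^2}$ words while the recomputation cost stays bounded by $n^{\Oh{d^2}}$.
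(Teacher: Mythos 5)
First, a point of reference: the paper does not prove this statement at all --- it is imported wholesale as Theorem~1 of Fafianie and Kratsch, so your attempt can only be judged against that cited construction. Your combinatorial skeleton (a one-pass, order-dependent sunflower rule; safety because the $k+1$ witnessing sets survive into $\mathcal{F}'$ and force the core to be hit; the size bound because a $(k+2)$-petal sunflower inside the kept family would have caused its last member to be rejected) is essentially sound, up to minor slips: the lemma is due to Erd\H{o}s and Rado, not Erd\H{o}s--Ko--Rado, and your insistence on nonempty petals leaves a small hole in the size-bound argument when the sunflower produced by the lemma contains the core $C$ itself as one of its sets (then only $k$ of the earlier witnesses have nonempty petals and your predicate does not fire); this is fixable by allowing one degenerate witness.

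The genuine gap is in the resource analysis, exactly at the point you yourself flag as the hard part. Your retention predicate for $F_i$ quantifies over \emph{earlier kept} sets, so it is self-referential: deciding whether $F_i$ is kept requires deciding whether various $F_j$ with $j<i$ are kept, and so on down. ``Re-deriving kept-status on demand'' therefore unwinds a recursion whose depth is governed by the number of sets in $\mathcal{F}$ (up to $n^{d}$), not by $d$; the recursion stack alone costs $\Om{\abs{\mathcal{F}}\log n}$ bits unless you memoize, and memoizing means storing the partial kernel, which costs $\Oh{(k+1)^{d} d \log n}$ bits --- either way far outside $\Oh{d^{2}\log n}$, and the recomputation blow-up also destroys the $n^{\Oh{d^2}}$ time bound. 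A second, independent obstruction: even for one fixed core $C$, certifying the existence of $k+1$ pairwise disjoint admissible petals requires holding either a $(k+1)$-tuple of indices or the union of the petals selected so far, i.e.\ $\Om{k\log n}$ bits, which again exceeds the budget since the proposition must hold for every $k\in\N$. The stated bounds $n^{\Oh{d^{2}}}$ and $\Oh{d^{2}\log n}$ are the signature of a recursion of depth $d$ (over subcores of a single set), each level using $\Oh{d\log n}$ bits and making $n^{\Oh{d}}$ oracle calls; any correct low-space construction must make the membership test for $\mathcal{F}'$ evaluable without recursing through the history of earlier retention decisions, which is precisely the property your sunflower-based, history-dependent rule lacks and which the counting-based rule of Fafianie and Kratsch is engineered to provide.
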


\subsection{Solving $\pVC{}$ and $\pIS{}$ on trees}
When the input is a tree on $n$ vertices, one can enumerate \emph{exact} solutions for $\pVC{}$ and $\pIS{}$ in time $n^{\Oh{1}}$ and $\Oh{\log{n}}$ bits of space, as we show below.

The decision versions of both problems can be expressed in monadic second-order logic (MSOL), allowing them to be solved using Courcelle's theorem~\cite{Cou1990InfComput}, and a result of Elberfeld et al.~\cite{EJT2010FOCS} shows that this can be done in polynomial time and logarithmic space on graphs of constant treewidth. Consequently, there are polynomial-time logarithmic-space algorithms for the decision versions of $\pVC{}$ and $\pIS{}$ on trees. Using pre-existing algoritihms, the search versions of these problems can be solved in polynomial time using $\Oh{\log^2{n}}$ bits of space~\cite{BCR+2015COCOON}. For completeness, we give here elementary algorithms for the special case of trees.

\begin{lemma}\label{lemm:tree_vc}
	Given a tree $T$ on $n$ vertices, \hyperref[algo:TreeVtxCover]{TreeVtxCover} enumerates a minimum vertex cover for $T$ in time $n^{\Oh{1}}$ using $\Oh{\log{n}}$ bits of space.
\end{lemma}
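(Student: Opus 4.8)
The plan is to exhibit a concrete greedy characterization of a minimum vertex cover of a tree that can be decided \emph{locally} at each vertex, and then to implement that decision in $\Oh{\log n}$ space by a stack-free traversal. First I would root $T$ at a canonical vertex $r$ (say the minimum-index vertex) and define the candidate cover $C = \setb{v \in \V{T}}{v \text{ has at least one child not in } C}$, with leaves (having no children) excluded from $C$. This is exactly the value of the bottom-up game in which a vertex is ``in'' precisely when one of its children is ``out''. That $C$ is a vertex cover is immediate: for any edge $\brc{u, p}$ with $p$ the parent of $u$, if $u \notin C$ then all children of $u$ lie in $C$, so $u$ is an ``out'' vertex, whence $u$ is itself a child of $p$ that is not in $C$, forcing $p \in C$.

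For minimality I would argue by induction on $\abs{\V{T}}$ using the standard exchange lemma: any vertex cover must cover the edge at a deepest leaf $\ell$, and replacing $\ell$ by its parent $p$ never increases the cover's size while covering all edges at $p$, so some minimum cover contains $p$. Since $\ell$ is deepest, every child of $p$ is a leaf and is therefore out of $C$, while $p$ is in $C$. Deleting $p$ together with its leaf-children leaves a smaller tree $T'$ on which I would check that the decision value of every remaining vertex is unchanged: the only vertex that loses a neighbour is $p$'s parent $q$, and it loses the \emph{in}-vertex $p$, which can neither have caused nor prevented $q$ from having an out-child. Thus $C$ restricted to $T'$ is the analogous cover of $T'$, and the induction closes with $\abs{C} = 1 + \abs{C \cap \V{T'}} = 1 + \mathrm{OPT}(T') = \mathrm{OPT}(T)$.

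With correctness in hand, \algo{TreeVtxCover} would run a single counter over all $v \in \V{T}$, decide membership $v \in C$, and output $v$ when it holds; reusing the workspace across iterations keeps the total space at $\Oh{\log n}$ plus the cost of one membership test. The membership test is the evaluation of the rooted \textsc{and}/\textsc{or} (game) value at $v$, and the main obstacle is doing this without the linear-depth recursion stack that the recurrence naively demands. I would perform a pointer-based depth-first traversal of the subtree of $v$, maintaining only the current vertex, a single descend/ascend flag, and one bit holding the value being returned, never ascending above $v$. The key is to short-circuit the \textsc{or} at each node: as soon as a child is found to be ``out'' the node is declared ``in'' and control ascends immediately, so whenever the traversal re-enters a node that is still being processed it is guaranteed that no previously examined child was ``out''. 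This invariant is precisely what makes a single returned bit sufficient and eliminates the need to store per-level partial accumulations.

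The two supporting primitives are enumerating the children of a vertex in index order (a scan of its adjacency list in $\Oh{\log n}$ space) and computing the parent of a vertex, needed to ascend. For the latter I would invoke the logspace solvability of $\pUSTCON{}$~\cite{Rei2008JACM}: the parent of $u \neq r$ is the unique neighbour of $u$ from which $r$ is reachable in $T$ with $u$ deleted, a test that runs in $\Oh{\log n}$ space. Each membership test then traverses every edge of the subtree a constant number of times, with a polynomial-time reachability subroutine per ascent, so a single test runs in $n^{\Oh{1}}$ time, and the outer loop over all $n$ vertices keeps the overall running time polynomial while the space remains $\Oh{\log n}$. The essential content is therefore the stack-elimination via short-circuiting together with the logspace parent computation; the optimality of the greedy is classical.
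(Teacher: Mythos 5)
Your proposal is correct and follows essentially the same route as the paper: the identical local characterization of the cover (a vertex is included exactly when some child is excluded, which the paper derives from its Rule VCT) evaluated by the same short-circuited, stack-free post-order traversal in $\Oh{\log{n}}$ bits of space. The only differences are cosmetic---you prove optimality via an explicit leaf-exchange induction and compute parents using logspace reachability~\cite{Rei2008JACM}, whereas the paper gives a one-line optimality observation and cites a known logspace DFS enumeration~\cite{CM1987JAlgorithms}.
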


In any graph, the complement of a vertex cover is an independent set, so if \hyperref[algo:TreeVtxCover]{TreeVtxCover} outputs a minimum vertex cover $C$ for a tree $T$, its complement $\V{T} \setminus C$ is a maximum independent set in $T$. Combining this with the fact that given oracles for $\V{T}$ and $C$, the complement can be enumerated in time $n^{\Oh{1}}$ using $\Oh{\log{n}}$ bits of additional space, we have the following corollary.

\begin{corollary}\label{corr:tree_is}
	In a tree of order $n$, one can enumerate a maximum independent set in time $n^{\Oh{1}}$ using $\Oh{\log{n}}$ bits of space.
\end{corollary}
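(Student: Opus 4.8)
The plan is to reduce the problem to Lemma~\ref{lemm:tree_vc} via the duality between vertex covers and independent sets, and then to enumerate the complement of the returned vertex cover within the stated resource bounds. First I would invoke \hyperref[algo:TreeVtxCover]{TreeVtxCover} on $T$, which by Lemma~\ref{lemm:tree_vc} produces a minimum vertex cover $C$ of $T$ as a stream in time $n^{\Oh{1}}$ using $\Oh{\log n}$ bits of space. The elementary graph-theoretic fact I would use is that in any graph a set $S$ is a vertex cover if and only if $\V{T} \setminus S$ is an independent set; since $\abs{C} + \abs{\V{T} \setminus C} = n$, a vertex cover of minimum size corresponds exactly to an independent set of maximum size, so $\V{T} \setminus C$ is a maximum independent set in $T$.

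The remaining task is to enumerate $\V{T} \setminus C$ as a stream without ever storing $C$, which would exceed the space budget. The key step is to iterate over the vertices $v \in \V{T}$ in their canonical order, and for each one decide the membership $v \in C$ by \emph{re-running} \hyperref[algo:TreeVtxCover]{TreeVtxCover} and scanning its output stream for $v$, comparing one enumerated element at a time against $v$; if $v$ never appears in the stream I output $v$, and otherwise I move on. This simulates an oracle for $C$ (as assumed in the discussion preceding the statement) at the cost of one fresh execution of \hyperref[algo:TreeVtxCover]{TreeVtxCover} per vertex.

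For the accounting, the outer loop maintains only a counter for the current vertex $v$ together with a single found-bit, using $\Oh{\log n}$ bits; each membership test reuses the $\Oh{\log n}$ bits required internally by \hyperref[algo:TreeVtxCover]{TreeVtxCover}, and that working space is released before the next test begins, so the total auxiliary space stays $\Oh{\log n}$. There are $n$ vertices, and each triggers a single polynomial-time run of \hyperref[algo:TreeVtxCover]{TreeVtxCover}, for an overall running time of $n \cdot n^{\Oh{1}} = n^{\Oh{1}}$.

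I do not expect any genuine obstacle, as the entire argument is a routine logspace composition. The only point requiring care is the insistence on not materializing $C$: membership queries must be answered by re-execution rather than by storage. This ``recompute instead of store'' device is precisely what keeps the space logarithmic while leaving the running time polynomial, and it is the one detail I would state explicitly in the write-up.
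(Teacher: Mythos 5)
Your proposal is correct and matches the paper's own argument: the paper likewise observes that the complement of the minimum vertex cover produced by \hyperref[algo:TreeVtxCover]{TreeVtxCover} is a maximum independent set and enumerates it using membership queries to $C$ (which the paper obtains directly from Procedure~\hyperref[proc:TreeVtxCover.IsInVC]{IsInVC} rather than by rescanning the output stream, a cosmetic difference). The resource accounting is the same recompute-instead-of-store composition.
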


\begin{algorithm}
\KwIn{$T = (V, E)$, a tree}
\KwOut{a mininmum vertex cover for $T$}

	let $r$ be an arbitrary vertex of $T$\;
	\ForEach{$v \in \V{T}$}{
		\If{\algoi{IsInVC}{$v, r, T$}}{
			output $v$\;
		}
	}
    \SetKwFunction{vcproc}{IsInVC}
    \SetKwProg{proc}{Procedure}{}{}
    \proc{\vcproc{$v, r, T$}\tcp*[f]{$T$ a tree rooted at $r$, $v$ a vertex in $\V{T}$}}{\label{proc:TreeVtxCover.IsInVC}
		generate a post-order traversal $L$ for $T$ with $r$ as the root\;
		seek $L$ to the first leaf in the subtree of $T$ rooted at $v$\;
		$visited\_vertex \gets \NULL{}$\;
		$visited\_included \gets \NO{}$\;
		\ForEach{$u \in L$}{
			\eIf{$u$ is a leaf}{
				$visited\_vertex \gets u$\;
				$visited\_included \gets \NO{}$\;
			}(\tcp*[f]{$u$ is not a leaf; $u$ is the parent of $visited\_vertex$}){
				$visited\_vertex \gets u$\;
				\eIf{not $visited\_included$}{
					\If{$u = v$}{
						\Return{\YES{}}\;
					}
					$visited\_included \gets \YES{}$\tcp*{include $u$}
				}(\tcp*[f]{last-visited vertex was included}){
					\If{$u = v$}{
						\Return{\NO{}}\;
					}
					$visited\_included \gets \NO{}$\tcp*{do not include $u$}
					seek $L$ to $u$'s parent\tcp*{vertices in subtrees of $u$'s unvisited siblings can be ignored}
				}
			}
		}
    }

\caption{TreeVtxCover: enumerate a minimum vertex cover}\label{algo:TreeVtxCover}
\end{algorithm}

We now prove Lemma~\ref{lemm:tree_vc}. \hyperref[algo:TreeVtxCover]{TreeVtxCover} operates by rooting $T$ at an arbitrary vertex $r \in \V{T}$ and enumerates a vertex cover $S$ obtained by repeatedly applying the following rule.

\begin{quote}\label{quot:rule_vct} 
	\textsf{Rule VCT} \emph{Include the the parents of leaves on the bottom level of $T$ in $S$, then delete from $T$ the included vertices, their children, and all edges incident with them.}
\end{quote}

The fact that $S$ is a minimum vertex cover follows directly from the observation that to cover the edges of $T$ incident with the leaves at the bottom level, picking the parents of those leaves is at least as good as any other choice of covering vertices.

\begin{proof}[Proof of Lemma~\ref{lemm:tree_vc}]
Observe that at any intermediate stage in the repeated application of Rule \hyperref[quot:rule_vct]{VCT}, a vertex is a leaf on the bottom level of $T$ if a previous application of the rule deleted all of its children, i.e.\ all of them were included in $S$. Thus, any vertex $v \in \V{T}$, is in $S$ if and only if $S$ does not contain all of its children. 

Instead of repeatedly deleting vertices from $T$, Procedure~\hyperref[proc:TreeVtxCover.IsInVC]{IsInVC} in the algorithm determines membership in $S$ by performing what is essentially a post-order traversal of $T$. In the post-order traversal, to determine if a vertex $v$ is in $S$, the only information necessary is whether at least one of $v$'s children is not in $S$, which the procedure stores in the variable $visited\_included$. If such a child vertex is encountered, the procedure determines that $v$ is in $S$, and skips the rest of the subtree rooted at $v$. 

The post-order traversal used by the procedure can be generated from a DFS traversal of $T$, which can be enumerated using $\Oh{\log{n}}$ bits of space~\cite{CM1987JAlgorithms}. The constantly-many variables appearing in the algorithm also use $\Oh{\log{n}}$ bits of space total. Therefore, the overall space usage of the algorithm is $\Oh{\log{n}}$ bits and it runs in time $n^{\Oh{1}}$.
\end{proof}

\subsection{Presenting modified structures using oracles}\label{ssct:graph_oracles}
Our algorithms repeatedly ``delete'' vertices (or elements), but as they only have read-only access to the graph (or set family), we require a way to implement these deletions using a small amount of auxiliary space. Towards that, we prove the following theorem.

\begin{theorem}\label{thrm:graph_oracles}
	Let $G = G_0 = (V, E)$ be a graph on $n$ vertices and let $G_i$ ($i \in [k]$) be obtained from $G_{i - 1}$ by deleting a set $S_i \subseteq \V{G_{i - 1}}$ consisting of all vertices $v \in \V{G_{i - 1}}$ which satisfy a property that can be checked (given access to $G_{i - 1}$) using $\Oh{\log{n}}$ bits of space.

	Given read-only access to $G$, one can, for each $i \in [k]$, enumerate and answer membership queries for $S_i$, $V_i = \V{G_i}$ and $E_i = \E{G_i}$ in time $n^{\Oh{i}}$ using $\Oh{i \log{n}}$ bits of space.
\end{theorem}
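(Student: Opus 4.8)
The natural approach is induction on $i$, building an oracle for $G_i$ out of an oracle for $G_{i-1}$. The plan is to maintain, for each level $i$, three membership-testing subroutines --- one each for $V_i$, $E_i$ and $S_i$ --- and to show that each query can be answered by invoking the level-$(i-1)$ subroutines together with the property-checker, while only the property-checker's workspace is added at each level. Enumeration is then immediate: to list $S_i$ or $V_i$ I loop a single $\Oh{\log n}$-bit counter over all $n$ vertices and output those that pass the membership test, and to list $E_i$ I loop over all $\Oh{n^2}$ candidate pairs (equivalently, over the edges streamed by the level-$(i-1)$ edge oracle).

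For the base case $i = 0$ there is nothing to do: read-only access to $G$ answers membership in $V_0 = \V{G}$ and $E_0 = \E{G}$ directly in $\Oh{\log n}$ bits. For the inductive step I would express the three tests as follows. A vertex $v$ lies in $S_i$ exactly when $v \in V_{i-1}$ and $v$ satisfies the deletion property; the latter check runs the given property-checker, which by hypothesis uses $\Oh{\log n}$ bits but expects read-only access to $G_{i-1}$. I supply that access by answering each of its queries with the level-$(i-1)$ oracle. Since $V_i = V_{i-1} \setminus S_i$ and $G_i$ is the induced subgraph on $V_i$, membership in $V_i$ reduces to one $V_{i-1}$-query and one $S_i$-query, and membership of a pair $\{u, v\}$ in $E_i$ reduces to testing $\{u, v\} \in E_{i-1}$ together with $u, v \in V_i$. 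All of these are Boolean combinations of calls to lower-level subroutines and the property-checker.

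The crux --- and the only place the space bound is delicate --- is the composition principle for read-only computations: when the property-checker issues a query about $G_{i-1}$, I do not store $G_{i-1}$ but recompute the answer by a fresh call to the level-$(i-1)$ oracle, reusing that workspace across successive queries. Hence the space charged at level $i$ is the property-checker's own $\Oh{\log n}$ bits, plus the space of a single in-flight level-$(i-1)$ call, plus $\Oh{\log n}$ for the query register holding the $\Oh{\log n}$-bit vertex or edge name. Writing $s_i$ for the space of a level-$i$ query, this gives $s_i \le s_{i-1} + \Oh{\log n}$, so $s_i = \Oh{i \log n}$; the Boolean combinations in the $V_i$ and $E_i$ tests contribute nothing further, since their constituent calls are performed one at a time in the same reused space. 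I must also confirm the subroutines are well-defined, i.e.\ that recomputation never causes runaway recursion: each level strictly decreases the index, so the recursion depth is exactly $i$.

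For the running time, treat each oracle query as unit cost. A halting deterministic $\Oh{\log n}$-space property-checker visits each of its $n^{\Oh{1}}$ configurations at most once, so it issues at most $n^{\Oh{1}}$ queries to the level-$(i-1)$ oracle; if $T_i$ denotes the time for one level-$i$ query, this yields $T_i \le n^{\Oh{1}} \cdot T_{i-1}$ and hence $T_i = n^{\Oh{i}}$. Enumerating $V_i$ or $S_i$ costs $n \cdot T_i$ and enumerating $E_i$ costs $\Oh{n^2} \cdot T_i$, both $n^{\Oh{i}}$, completing the bounds. I expect the main obstacle to be stating the composition principle cleanly enough that the per-level overhead is manifestly additive in $\Oh{\log n}$ rather than a multiplicative blow-up; everything else is routine bookkeeping.
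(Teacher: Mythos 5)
Your proposal is correct and follows essentially the same route as the paper: the same recursive oracle composition with predicates $\brs{v \in V_i} \equiv \brs{v \in V_{i-1}} \wedge \neg\algo{Check}_i(G_{i-1},v)$ (and similarly for edges), the same space recurrence $s_i = s_{i-1} + \Oh{\log n}$ and time recurrence $T_i \le n^{\Oh{1}}\cdot T_{i-1}$, and the same enumeration-by-filtering. Your explicit treatment of workspace reuse across successive oracle calls is a point the paper leaves implicit, but the argument is the same.
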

\begin{proof}
	For each $i \in [k]$ let $\algo{Check}_i(G_{i - 1}, v)$ be the algorithmic check which, given (oracle) access to $G_{i - 1}$, determines whether $v \in V_{i - 1}$ satisfies the condition for inclusion in $S_i$.

	To provide oracle access to $G_i, V_i$ and $E_i$, it suffices to compute, for $v \in V$ and $uw \in E$, the predicates $\brs{v \in V_i}$ and $\brs{uw \in E_i}$. A vertex is in $V_i$ if and only if it is in $V_{i - 1}$ and it is not in $S_i$. Similarly, an edge is in $E_i$ if and only if it is in $E_{i - 1}$ and neither of its endpoints are in $S_i$. Thus, we have the following relations.
	\begin{align}
		\brs{v \in V_i} &\equiv \brs{v \in V_{i - 1}} \wedge \neg \algo{Check}_i(G_{i - 1}, v)\label{lemm:oracle_rel_1}\\
		\brs{uw \in E_i} &\equiv \brs{uw \in E_{i - 1}} \wedge \neg (\algo{Check}_i(G_{i - 1}, u) \vee \algo{Check}_i(G_{i - 1}, w))\label{lemm:oracle_rel_2}
	\end{align}

	To compute each of these predicates for $G_i$, we require oracle access to $G_{i - 1}$, which in turn involves computing the predicates $\brs{v \in V_{i - 1}}$ and $\brs{uw \in E_{i - 1}}$. Since $\algo{Check}_i(G_{i - 1}, v)$ uses $\Oh{\log{n}}$ bits of space, the number of operations needed to compute it is at most $n^{\Oh{1}}$.
	
	Let $p_i$ (resp.\ $q_i$) be the amount of space used to compute the predicate $\brs{v \in V_i}$ (resp.\ $\brs{uw \in E_i}$), and let $s_i$ (resp.\ $t_i$) be the time needed to compute the predicate $\brs{v \in V_i}$ (resp.\ $\brs{uw \in E_i}$). From Relations~\ref{lemm:oracle_rel_1}~and~\ref{lemm:oracle_rel_2} and the fact that $\algo{Check}_i$ accesses $G_{i - 1}$ at most $n^{\Oh{1}}$ times, we see that these quantities satisfy the following relations.
	\begin{align}
		&p_i = p_{i - 1} + \Oh{\log{n}},\ &q_i = q_{i - 1} + \Oh{\log{n}}\\
		&s_i = s_{i - 1} + n^{\Oh{1}} (s_{i - 1} + t_{i - 1}),\ &t_i = t_{i - 1} + n^{\Oh{1}} (s_{i - 1} + t_{i - 1})
	\end{align}
	
	It is easy to see that these recurrences solve to $p_i, q_i = \Oh{i \log{n}}$ and $s_i, t_i = n^{\Oh{i}}$, so both predicates can be computed in time $n^{\Oh{i}}$ using $\Oh{i \log{n}}$ bits of space.
	
	With oracle access to $G_{i - 1}$, the predicate $\brs{v \in S_i}$ can be computed simply as $\algo{Check}_i(G_{i - 1}, v)$, from which enumerating $V_i$ (resp. $E_i$ and $S_i$) is straightforward: enumerate $V$ (resp. $E$ and $V$) and suppress vertices $v$ (resp. edges $uw$ and vertices $z$) which fail the predicate $\brs{v \in V_i}$ (resp. $\brs{uw \in E_i}$ and $\brs{z \in S_i}$).
	
	As the most space-hungry operations are the membership queries, the enumeration can also be performed using $\Oh{i \log{n}}$ bits of space. The enumeration needs time $n^{\Oh{i}}$ for each element of $V$ and $E$, and since $\abs{V}, \abs{E} = \Oh{n^2}$, the total time needed is also $n^{\Oh{i}}$.
\end{proof}

The above result also generalizes to set families if we assume that at each stage where elements are deleted, the algorithmic check which determines the elements to be deleted uses $\Oh{c \log{n}}$ bits of space ($c > 0$).

\begin{theorem}\label{thrm:hypergraph_oracles}
	Let $U$ be a ground set with $n$ elements and $\mathcal{F} = \mathcal{F}_0$ be a family of subsets of $U$, each of size at most $d$ ($d \in \N$, a constant). Let $\mathcal{F}_i$ ($i \in [k]$) be a subfamily of $\mathcal{F}_{i - 1}$ obtained by deleting all sets in $\mathcal{F}_{i - 1}$ which intersect some set $S_i \subseteq U_{i - 1}$ and let $U_i = U_{i - 1} \setminus S_i$. Suppose additionally that given access to $U_{i - 1}$ and $\mathcal{F}_{i - 1}$, $S_i$ can be determined using $\Oh{c \log{n}}$ bits of space.

	Given read-only access to $\mathcal{F}$, one can, for each $i \in [k]$, enumerate and answer membership queries for $S_i$, $U_i$ and $\mathcal{F}_i$ in time $n^{\Oh{ic}}$ using $\Oh{ic \log{n}}$ bits of space.
\end{theorem}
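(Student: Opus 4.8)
The plan is to mirror the proof of Theorem~\ref{thrm:graph_oracles}, carrying the set-family structure and the extra factor of $c$ through the recursion. For each $i \in [k]$, let $\algo{Check}_i$ denote the procedure which, given oracle access to $U_{i - 1}$ and $\mathcal{F}_{i - 1}$, decides for an element $x \in U_{i - 1}$ whether $x \in S_i$; by hypothesis it uses $\Oh{c \log n}$ bits of space. To furnish oracle access to $U_i$, $\mathcal{F}_i$ and $S_i$ it suffices, exactly as in the graph case, to compute the predicates $\brs{x \in U_i}$, $\brs{F \in \mathcal{F}_i}$ and $\brs{x \in S_i}$ for an element $x \in U$ and a set $F \in \mathcal{F}$, where $\brs{x \in S_i}$ is just $\algo{Check}_i(U_{i - 1}, \mathcal{F}_{i - 1}, x)$.

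Next I would write down the recursive relations, the natural analogues of Relations~\ref{lemm:oracle_rel_1}~and~\ref{lemm:oracle_rel_2}. An element survives to $U_i$ iff it lay in $U_{i - 1}$ and was not placed in $S_i$:
\begin{align*}
	\brs{x \in U_i} &\equiv \brs{x \in U_{i - 1}} \wedge \neg \algo{Check}_i(U_{i - 1}, \mathcal{F}_{i - 1}, x).
\end{align*}
The key structural difference from the graph case is that a set is deleted precisely when it intersects $S_i$, so a set survives to $\mathcal{F}_i$ iff it lay in $\mathcal{F}_{i - 1}$ and none of its (at most $d$) elements are in $S_i$:
\begin{align*}
	\brs{F \in \mathcal{F}_i} &\equiv \brs{F \in \mathcal{F}_{i - 1}} \wedge \bigwedge_{x \in F} \neg \algo{Check}_i(U_{i - 1}, \mathcal{F}_{i - 1}, x).
\end{align*}
Since $\abs{F} \leq d$ and $d$ is constant, the conjunction over $x \in F$ contributes only a constant factor.

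I would then set up the space and time recurrences. Writing $p_i, q_i$ for the space to evaluate $\brs{x \in U_i}, \brs{F \in \mathcal{F}_i}$ and $s_i, t_i$ for the corresponding times, each evaluation at level $i$ invokes $\algo{Check}_i$—which needs $\Oh{c \log n}$ bits of its own plus oracle access to level $i - 1$—a constant number of times, and space is reused between oracle calls. This yields $p_i = p_{i - 1} + \Oh{c \log n}$ and $q_i = q_{i - 1} + \Oh{c \log n}$. For the running time, the crucial point (the analogue of the $n^{\Oh{1}}$ bound used before) is that a procedure running in $\Oh{c \log n}$ bits of space halts within $2^{\Oh{c \log n}} = n^{\Oh{c}}$ steps, so $\algo{Check}_i$ queries its oracle at most $n^{\Oh{c}}$ times; hence $s_i = s_{i - 1} + n^{\Oh{c}}(s_{i - 1} + t_{i - 1})$ and $t_i = t_{i - 1} + n^{\Oh{c}}(s_{i - 1} + t_{i - 1})$. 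These solve to $p_i, q_i = \Oh{ic \log n}$ and $s_i, t_i = n^{\Oh{ic}}$, giving the stated bounds for membership queries.

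Enumeration then follows the usual pattern: to enumerate $U_i$ (resp.\ $\mathcal{F}_i$, $S_i$) one streams through $U$ (resp.\ $\mathcal{F}$, $U_{i - 1}$) and suppresses the elements or sets failing the corresponding predicate, so its cost is dominated by the membership queries and stays within $n^{\Oh{ic}}$ time and $\Oh{ic \log n}$ bits. The step I expect to require the most care is the bookkeeping of the factor $c$: unlike in Theorem~\ref{thrm:graph_oracles}, each level's check costs $\Oh{c \log n}$ rather than $\Oh{\log n}$, and this must be threaded through both the additive space recurrence (producing $ic$ rather than $i$) and, more delicately, the exponent of the time bound, where the $n^{\Oh{c}}$ oracle-call count per level compounds to $n^{\Oh{ic}}$ over $i$ levels. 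Checking that this compounding remains $n^{\Oh{ic}}$—and that the at-most-$d$ membership checks per set do not inflate it—is the only nonroutine point; everything else transfers essentially verbatim from the graph case.
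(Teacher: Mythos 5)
Your proposal is correct and follows essentially the same route as the paper's own proof: the same two membership predicates (your conjunction of negations is just the De Morgan form of the paper's negated disjunction), the same space and time recurrences, and the same enumeration-by-suppression step. If anything, your bound of $n^{\Oh{c}}$ on the number of oracle calls made by $\algo{Check}_i$ (via the $2^{\Oh{c \log n}}$ configuration count) is the cleaner statement of what the paper intends, and it is exactly what makes the recurrences solve to $n^{\Oh{ic}}$.
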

\begin{proof}
	The proof is essentially the same as that of Theorem~\ref{thrm:graph_oracles}. For each $i \in [k]$ let $\algo{Check}_i(U_{i - 1}, \mathcal{F}_{i - 1}, e)$ be the algorithmic check which, given (oracle) access to $U_{i - 1}$ and $\mathcal{F}_{i - 1}$, determines whether $e \in U_{i - 1}$ satisfies the condition for inclusion in $S_i$.

	As in the proof of Theorem~\ref{thrm:graph_oracles}, it suffices to compute, for $e \in U$ and $A \in \mathcal{F}$, the predicates $\brs{e \in U_i}$ and $\brs{A \in \mathcal{F}_i}$. An element is in $U_i$ if and only if it is in $U_{i - 1}$ and it is not in $S_i$. Similarly, a set is in $\mathcal{F}_i$ if and only if it is in $\mathcal{F}_{i - 1}$ and it does not intersect $S_i$. Thus, we have the following relations.
	\begin{align}
		\brs{e \in U_i} &\equiv \brs{e \in U_{i - 1}} \wedge \neg \algo{Check}_i(U_{i - 1}, \mathcal{F}_{i - 1}, e)\label{lemm:hyp_oracle_rel_1}\\
		\brs{A \in \mathcal{F}_i} &\equiv \brs{A \in \mathcal{F}_{i - 1}} \wedge \neg \brn{\bigvee_{e \in A} \algo{Check}_i(U_{i - 1}, \mathcal{F}_{i - 1}, e)}\label{lemm:hyp_oracle_rel_2}
	\end{align}

	To compute these predicates for $U_i$ and $\mathcal{F}_i$, we require oracle access to $U_{i - 1}$ and $\mathcal{F}_{i - 1}$, which in turn involves computing the predicates $\brs{u \in U_{i - 1}}$ and $\brs{A \in \mathcal{F}_{i - 1}}$. Since $\algo{Check}_i(U_{i - 1}, \mathcal{F}_{i - 1}, e)$ uses $\Oh{c \log{n}}$ bits of space, the number of operations needed to compute it is at most $n^{\Oh{c \log{n}}}$.
	
	Let $p_i$ (resp.\ $q_i$) be the amount of space used to compute the predicate $\brs{e \in U_i}$ (resp.\ $\brs{A \in \mathcal{F}_i}$), and let $s_i$ (resp.\ $t_i$) be the time needed to compute the predicate $\brs{e \in u_i}$ (resp.\ $\brs{A \in \mathcal{F}_i}$). From Relations~\ref{lemm:hyp_oracle_rel_1}~and~\ref{lemm:hyp_oracle_rel_2} and the fact that $\algo{Check}_i$ accesses $\mathcal{F}_{i - 1}$ at most $n^{\Oh{c \log{n}}}$ times, we see that these quantities satisfy the following relations.
	\begin{align}
		&p_i = p_{i - 1} + \Oh{c\log{n}},\ &q_i = q_{i - 1} + \Oh{c\log{n}}\\
		&s_i = s_{i - 1} + n^{\Oh{c \log{n}}} (s_{i - 1} + t_{i - 1}),\ &t_i = t_{i - 1} + n^{\Oh{c \log{n}}} (s_{i - 1} + t_{i - 1})
	\end{align}
	
	It is easy to see that these recurrences solve to $p_i, q_i = \Oh{ic \log{n}}$ and $s_i, t_i = n^{\Oh{ic}}$, so both predicates can be computed in time $n^{\Oh{ic}}$ using $\Oh{ic \log{n}}$ bits of space.
	
	With oracle access to $U_{i - 1}$, the predicate $\brs{e \in S_i}$ can be computed as $\algo{Check}_i(U_{i - 1}, \mathcal{F}_{i - 1}, e)$, from which enumerating $U_i$ (resp. $\mathcal{F}_i$ and $S_i$) is straightforward: enumerate $U$ (resp. $\mathcal{F}$ and $U$) and suppress elements $e$ (resp. sets $A$ and elements $f$) which fail the predicate $\brs{e \in U_i}$ (resp. $\brs{A \in \mathcal{F}_i}$ and $\brs{f \in S_i}$).

	The most space-hungry operations are the membership queries, so the enumeration can also be performed using $\Oh{ic \log{n}}$ bits of space. The enumeration needs time $n^{\Oh{ic}}$ for each element of $U$ and $\mathcal{F}$. Since $\abs{U}, \abs{\mathcal{F}} = \Oh{n^d}$, and $d$ is constant, the total time needed is $n^{\Oh{ic}}$.
\end{proof}

\subsection{Derandomization using universal hash families}\label{ssct:uhash}
Some of our algorithms use the trick of randomized sampling to obtain a certain structure with good probability and then derandomize this procedure by using a $2$-\emph{universal} family of hash functions. A $2$-\emph{universal hash family} is a family $\mathcal {F}$ of functions from $[n]$ to $[k]$ ($n, k \in \N$ and $k \leq n$) such that for any pair $i$ and $j$ of elements in $[n]$, the number of functions from $\mathcal {F}$ that map $i$ and $j$ to the same element in $[k]$ is at most $|\mathcal {F}| /k$.

The following proposition is a combination of a result of Carter and Wegman~\cite{CW1979JCSS} showing the existence of such families, and the observation that these families can be enumerated in logarithmic space~\cite{Vol1999book}. Later on, we use it to derandomize sampling procedures in some of our algorithms.

\begin{proposition}[Carter and Wegman~\cite{CW1979JCSS}, Proposition 7]\label{prop:cw_uhash}
	Let $n, k \in \N$ with $n \geq k$. One can enumerate a $2$-universal hash family for $\brs{[n] \to [k]}$ in time $n^{\Oh{1}}$ using $\Oh{\log{n}}$ bits of space.	
\end{proposition}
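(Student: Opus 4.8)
The plan is to use the classical construction of Carter and Wegman and to verify that each of its components can be carried out in logarithmic space. Fix a prime $p$ with $n \le p < 2n$; such a prime exists by Bertrand's postulate. For $a \in \brc{1, \dotsc, p - 1}$ and $b \in \brc{0, \dotsc, p - 1}$ define $h_{a, b} \colon [n] \to [k]$ by $h_{a, b}(x) = ((a x + b) \bmod p) \bmod k$, and let $\mathcal{F}$ be the family of all $p(p - 1) = n^{\Oh{1}}$ such functions, indexed by the pair $(a, b)$.

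First I would address the space needed to set up the construction. To locate $p$, scan the candidates $n, n + 1, \dotsc, 2n - 1$ in order and test each for primality by trial division, i.e.\ by checking divisibility against every integer in $\brc{2, \dotsc, \floor{\sqrt{p}}}$. This needs only a constant number of $\Oh{\log n}$-bit counters, hence $\Oh{\log n}$ bits of space, and runs in time $n^{\Oh{1}}$. Enumerating $\mathcal{F}$ then amounts to iterating the pair $(a, b)$ over $\brc{1, \dotsc, p - 1} \times \brc{0, \dotsc, p - 1}$, which again uses two $\Oh{\log n}$-bit counters; evaluating a given $h_{a, b}$ on an input $x$ requires only the computation of $a x + b$ followed by two reductions modulo $p$ and $k$, all of which are arithmetic operations on $\Oh{\log n}$-bit integers and therefore take $\Oh{\log n}$ space and $\Oh{1}$ time in the word-RAM model.

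It remains to verify $2$-universality in the precise form demanded by the definition: for distinct $i, j \in [n]$, the number of pairs $(a, b)$ with $h_{a, b}(i) = h_{a, b}(j)$ is at most $\abs{\mathcal{F}} / k = p(p - 1)/k$. Since $i, j < p$ we have $i \not\equiv j \pmod p$, so as $(a, b)$ ranges over $\brc{1, \dotsc, p - 1} \times \brc{0, \dotsc, p - 1}$ the pair $((a i + b) \bmod p, (a j + b) \bmod p)$ ranges bijectively over all ordered pairs of distinct elements of $\Z_p$. A collision $h_{a, b}(i) = h_{a, b}(j)$ occurs exactly when these two distinct residues are congruent modulo $k$. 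Writing $c_r$ for the number of elements of $\brc{0, \dotsc, p - 1}$ congruent to $r$ modulo $k$, the number of colliding pairs is $\sum_{r} c_r (c_r - 1) = \sum_r c_r^2 - p$. Using $c_r \le \ceil{p / k}$ together with $\sum_r c_r = p$ and $\ceil{p/k} - 1 \le (p - 1)/k$, this is at most $p\brn{\ceil{p/k} - 1} \le p(p - 1)/k$, as required.

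I expect the main obstacle to be this last step: getting the collision count to match the definition's exact bound $\abs{\mathcal{F}}/k$ rather than a looser constant-factor bound. This is precisely why the construction excludes $a = 0$ --- including it would force every input pair to collide for each of the $p$ values of $b$, pushing the count above $p^2/k$. The remaining ingredients (prime finding and modular arithmetic in logarithmic space) are routine once one observes that trial division needs only $\Oh{\log n}$-bit counters.
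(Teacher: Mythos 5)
The paper does not prove this proposition at all: it is imported as a known result, combining Carter and Wegman's construction with the observation (credited to Vollmer's book) that the family is logspace-enumerable. Your proof is the standard $h_{a,b}(x) = ((ax+b) \bmod p) \bmod k$ construction from that very reference, and the collision count and the logspace/polytime analysis are both correct; the only nit is that with $n \le p$ you can have $i = p$ when $p = n$, so the justification should read ``distinct elements of $[n] \subseteq \brc{1, \dotsc, p}$ remain distinct modulo $p$'' (or simply choose $p > n$, which Bertrand's postulate also permits).
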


\section{Approximation by layering}\label{sect:layering}
We begin this section with the observation that in a directed graph with maximum out-degree $1$, every connected component contains (as an induced subgraph or otherwise) at most one (undirected) cycle. For such a directed graph $D$, consider the graph $G$ obtained by ignoring arc directions. Because every connected component in $G$ also has at most one cycle, one can find a \emph{minimum} vertex cover for $G$ in polynomial time and logarithmic space using a modified post-order traversal procedure on the connected components. The following lemma formalizes this discussion.

\begin{lemma}\label{lemm:outdeg_1_vc}
	Let $D$ be a digraph on $n$ vertices with maximum out-degree $1$ and let $G$ be the undirected graph obtained by ignoring arc directions in $D$. One can find a minimum vertex cover for $G$ in time $n^{\Oh{1}}$ using $\Oh{\log{n}}$ bits of space.
\end{lemma}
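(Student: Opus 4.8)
The plan is to exploit the structural fact highlighted just before the statement: since every vertex of $D$ has out-degree at most $1$, each connected component of $G$ has at most as many edges as vertices, hence is either a tree or \emph{unicyclic} (i.e.\ contains exactly one cycle). Trees are already handled by \hyperref[algo:TreeVtxCover]{TreeVtxCover} (Lemma~\ref{lemm:tree_vc}), so the whole task reduces to dealing with the single cycle in each unicyclic component. Writing $\tau(H)$ for the size of a minimum vertex cover of a graph $H$, my goal is to construct a \emph{forcing set} $F \subseteq \V{G}$ containing exactly one cycle vertex from each unicyclic component, chosen so that $F$ together with a minimum vertex cover of the \emph{forest} $G - F$ is a minimum vertex cover of $G$. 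Since $\tau$ is additive over connected components, it suffices to make the correct choice inside each unicyclic component independently.

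First I would show how to locate the cycle in logarithmic space. The key observation is that an edge $uv$ of a unicyclic component lies on its unique cycle if and only if $u$ and $v$ remain connected after $uv$ is deleted, because deleting a cycle edge leaves a tree whereas deleting a bridge disconnects the component. This is a single $\pUSTCON{}$ query, answerable in $\Oh{\log n}$ bits of space by Reingold's algorithm~\cite{Rei2008JACM}; deciding whether a component is unicyclic at all amounts to comparing its vertex and edge counts, both computable in logarithmic space by iterating over $\V{G}$ and $\E{G}$ and testing component membership via $\pUSTCON{}$. Thus, for each unicyclic component $K$ I can identify a canonical cycle edge $xy$ (say the lexicographically least one).

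Next I would make the forcing choice. Because $xy$ is an edge, every vertex cover of $K$ contains $x$ or $y$, and a standard exchange argument gives that a minimum vertex cover of $K$ subject to containing $x$ has size $1 + \tau(K - x)$, and likewise for $y$; hence $\tau(K) = \min\!\bigl(1 + \tau(K-x),\, 1 + \tau(K-y)\bigr)$. Crucially, deleting the cycle vertex $x$ (or $y$) destroys the only cycle, so $K - x$ and $K - y$ are forests, and I can compute $\tau(K-x)$ and $\tau(K-y)$ by running \hyperref[algo:TreeVtxCover]{TreeVtxCover} on each tree of the forest and counting the vertices it emits. I then put into $F$ whichever endpoint yields the smaller value, breaking ties by vertex index. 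Membership of a vertex in $F$ is therefore decidable in $\Oh{\log n}$ bits of space, so Theorem~\ref{thrm:graph_oracles} supplies oracle access to the forest $G - F$; running \hyperref[algo:TreeVtxCover]{TreeVtxCover} on each of its trees and prepending $F$ produces the output. All sub-routines ($\pUSTCON{}$, component identification by smallest-index representative, \hyperref[algo:TreeVtxCover]{TreeVtxCover}, counting) are logarithmic-space procedures composed through a constant number of oracle layers, so by Theorem~\ref{thrm:graph_oracles} the total space is $\Oh{\log n}$ bits and the running time is $n^{\Oh{1}}$.

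The hard part is not any single calculation but arranging the cycle-handling so that it stays within logarithmic space: one cannot store a component, the cycle, or even the list of chosen endpoints explicitly. The $\pUSTCON{}$-based cycle-edge test, the use of canonical (lexicographically least / smallest-index) representatives to make every choice reconstructible on the fly, and the oracle framework of Theorem~\ref{thrm:graph_oracles} to realize the successive deletions are what make the reduction to the tree case feasible in $\Oh{\log n}$ bits. Verifying the exchange identity for $\tau(K)$ and the additivity of $\tau$ over components is then routine.
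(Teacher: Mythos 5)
Your proposal is correct and follows essentially the same route as the paper: decompose $G$ into components, each of which is a tree or unicyclic, locate a cycle edge, branch on its two endpoints (each deletion leaves a forest handled by \hyperref[algo:TreeVtxCover]{TreeVtxCover}), and keep the smaller cover, composing everything through logspace oracles and Reingold's connectivity algorithm. The only substantive difference is how the cycle is found --- the paper walks along out-neighbours in $D$ for up to $n$ steps to detect a cycle vertex and uses its unique out-edge, whereas you use a $\pUSTCON{}$-based bridge test on $G$; both are valid $\Oh{\log n}$-space procedures (and your explicit treatment of $K-x$ as a forest rather than a single tree is in fact slightly more careful than the paper's wording).
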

\begin{proof}
	We prove the result via a sequence of claims about the structure of $D$ which enables us to apply \hyperref[algo:TreeVtxCover]{TreeVtxCover} to $G$ and enumerate a minimum vertex cover. 

	\begin{claim}
		Every connected component of $G$ has at most one cycle.
	\end{claim}
	Observe that any path in $G$ corresponds to a directed path in $D$: every vertex of the path except the last has out-degree exactly $1$ in $D$. Similarly, every vertex in a cycle also has out-degree exactly $1$ in $D$. Now consider a connected component in $G$. If it contains two cycles, then the corresponding subgraph of $D$ also contains two directed cycles. They cannot overlap, as this would mean that one of the vertices common to both cycles has out-degree more than $1$ in $D$. In the other case, i.e.\ there is a directed path from a vertex of one cycle to a vertex of the other, the start vertex of this path has out-degree greater than $1$ which is also a contradiction. Thus, the claim is true.

	\begin{claim}
		One can enumerate a minimum vertex cover for every component of $G$ in polynomial time using $\Oh{\log{n}}$ bits of space. 
	\end{claim}
	Consider the following procedure, which finds the cycle (if it exists) in any connected component $C$ of $G$.
	\begin{enumerate}
		\item For each vertex $v \in \V{C}$ with out-degree $1$, set $c \gets 1$ and perform the following steps.
		\item
			\begin{enumerate}
				\item\label{item:tree_vc_cycle} Let $u$ be $v$'s out-neighbour. Set $v \gets u$ and $c \gets c + 1$.
				\item If $c > n$, return $u$.
				\item If $v$ has an out-neighbour, go back to Step~\ref{item:tree_vc_cycle}.
			\end{enumerate}
		\item Return \NO{}.
	\end{enumerate}

	If the above procedure returns \NO{}, then $C$ is cycle-free, i.e.\ $C$ is a tree. In this case, using Algorithm~\ref{algo:TreeVtxCover} on $C$, one can enumerate a minimum vertex cover in polynomial time using $\Oh{\log{n}}$ bits of space.

	In the other case, i.e.\ where the procedure returns a vertex $u$, the component $C$ contains a cycle and $u$ is a vertex in the cycle. Observe that the edge from $u$ to its unique out-neighbour $w$ (in $D$) must be covered by any minimum vertex cover for $C$, i.e.\ either $u$ or $w$ must be in the vertex cover. The graph obtained by deleting either endpoint from $C$ is a tree, since $C$ contains exactly one cycle. We construct two vertex covers, obtained by running Algorithm~\ref{algo:TreeVtxCover} on $C - u$ (resp.\ $C - w$) and augmenting the result with $u$ (resp.\ $w$) to obtain a vertex cover $S_u$ (resp.\ $S_w$) for $C$. One of the two is clearly a minimum vertex cover for $C$, and we enumerate the smaller of the two as the minimum vertex cover.

	The overall process consists merely of running Algorithm~\ref{algo:TreeVtxCover} in a loop where the iteration stops after $\Oh{n}$ steps. Thus, the process takes polynomial time and uses $\Oh{\log{n}}$ bits of space.

	We now prove the main claim, i.e.\ one can enumerate a minimum vertex cover for $G$ in polynomial time using $\Oh{\log{n}}$ bits of space. Observe that combining minimum vertex covers for each component of $G$ produces a minimum vertex cover for all of $G$. Thus, by enumerating the connected components of $G$, one can enumerate a minimum vertex cover for each component in sequence, producing a minimum vertex cover for all of $G$.

	To determine the components and enumerate them, we use the connectivity algorithm of Reingold~\cite{Rei2008JACM}, which runs in polynomial time and uses $\Oh{\log{n}}$ bits of space. This process has a polynomial time overhead and an $\Oh{\log{n}}$-bit space overhead, and thus the entire vertex cover can be enumerated in time $n^{\Oh{1}}$ using $\Oh{\log{n}}$ bits of space as claimed.
\end{proof}

The next result follows directly from Lemma~\ref{lemm:outdeg_1_vc} using the fact that in any graph, the complement of a vertex cover an independent set.

\begin{lemma}\label{lemm:outdeg_1_is}
	Let $D$ be a digraph on $n$ vertices with maximum out-degree $1$ and let $G$ be the undirected graph obtained by ignoring arc directions in $D$. One can find a maximum independent set in $G$ in time $n^{\Oh{1}}$ using $\Oh{\log{n}}$ bits of space.
\end{lemma}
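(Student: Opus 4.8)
The plan is to reduce directly to Lemma~\ref{lemm:outdeg_1_vc}, exactly mirroring how Corollary~\ref{corr:tree_is} was obtained from Lemma~\ref{lemm:tree_vc}. The single structural fact I would invoke is that in any graph, a set $C \subseteq \V{G}$ is a vertex cover if and only if its complement $\V{G} \setminus C$ is an independent set: $C$ covers every edge precisely when no edge has both endpoints outside $C$. Since $\abs{C} + \abs{\V{G} \setminus C} = n$, minimizing $\abs{C}$ is the same as maximizing $\abs{\V{G} \setminus C}$, so the complement of a \emph{minimum} vertex cover is a \emph{maximum} independent set.

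Concretely, I would first invoke Lemma~\ref{lemm:outdeg_1_vc} to obtain the algorithm that enumerates a minimum vertex cover $C$ for $G$ in time $n^{\Oh{1}}$ using $\Oh{\log{n}}$ bits of space. I would then enumerate the maximum independent set $\V{G} \setminus C$ by iterating over every vertex $v \in \V{G}$ and outputting exactly those $v$ that do not belong to $C$.

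The only point requiring care is that the auxiliary space budget does not permit storing $C$ explicitly (it may contain $\Om{n}$ vertices). To decide membership of a single vertex $v$ in $C$, I would treat the vertex-cover algorithm of Lemma~\ref{lemm:outdeg_1_vc} as an oracle: re-run it from scratch and scan its output stream, reporting that $v \in C$ if and only if $v$ is emitted at some point. Each such membership test costs one full run of that algorithm, i.e.\ $n^{\Oh{1}}$ time and $\Oh{\log{n}}$ bits of space, and the enclosing loop over $\V{G}$ adds only a counter of $\Oh{\log{n}}$ bits. Hence the complement can be enumerated in time $n^{\Oh{1}}$ using $\Oh{\log{n}}$ bits of space, which establishes the lemma. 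There is no genuine obstacle here: the lemma is essentially a restatement of Lemma~\ref{lemm:outdeg_1_vc} through vertex-cover/independent-set complementation, and the mild subtlety is merely the standard logspace idiom of recomputing $C$ on demand rather than storing it.
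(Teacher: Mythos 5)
Your proposal is correct and follows exactly the route the paper takes: the paper derives this lemma directly from Lemma~\ref{lemm:outdeg_1_vc} via the vertex-cover/independent-set complementation, just as Corollary~\ref{corr:tree_is} is derived from Lemma~\ref{lemm:tree_vc}. Your additional detail about recomputing $C$ on demand to answer membership queries within the $\Oh{\log n}$ space budget is the same standard idiom the paper relies on when enumerating complements.
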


\subsection{$\pVC{}$ on graphs of bounded degree}\label{ssct:bd_vc_2_approx}
We now show that by layering multiple applications of Lemma~\ref{lemm:outdeg_1_vc}, one can compute a $2$-approximate minimum vertex cover in a bounded-degree graph. Our approach is inspired by a distributed algorithm of Polishchuk and Suomela~\cite{PS2009IPL} which computes $3$-approximate solution.

\begin{theorem}\label{thrm:bd_min_vc}
	There is an algorithm which takes as input a graph $G$ on $n$ vertices with maximum degree $\Delta$, and enumerates a $2$-approximate minimum vertex cover for $G$. The algorithm runs in time $n^{\Oh{\Delta}}$ and uses $\Oh{\Delta \log{n}}$ bits of space.
\end{theorem}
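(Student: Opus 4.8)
The plan is to reduce the construction of a factor-$2$ vertex cover to that of a \emph{maximal matching}. If $M$ is any maximal matching of $G$, then the set $V(M)$ of its endpoints is a vertex cover: no edge can have both endpoints outside $V(M)$, for otherwise $M$ could be extended by that edge. Moreover $\abs{V(M)} = 2\abs{M} \le 2\,\mathrm{OPT}$, since every vertex cover must contain at least one endpoint of each of the disjoint edges of $M$. Thus it suffices to enumerate a maximal matching of $G$ within the stated bounds. This already explains the improvement over the factor-$3$ local algorithm of Polishchuk and Suomela: endpoints of a maximal matching are inherently a $2$-approximation, and the whole task becomes implementing one in read-only, small space.

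To compute such a matching in small space I would mimic the sequential greedy maximal-matching procedure, but organise it into $\Delta$ rounds so that each round acts on a graph of maximum out-degree $1$, to which the machinery behind Lemma~\ref{lemm:outdeg_1_vc} applies. First orient every edge towards its higher-ranked endpoint, so each vertex has out-degree at most $\Delta$. Assigning the $j$-th out-edge of each vertex (ordered by the rank of its head) to layer $E_j$ partitions $\E{G}$ into $\Delta$ layers, each of which, with the inherited orientation, has maximum out-degree $1$; testing membership of an edge in a given $E_j$ needs only $\Oh{\log n}$ space. I would then process the layers in sequence, maintaining the graph $G_j$ obtained from $G$ by deleting all vertices committed to the cover so far, presented through the oracle construction of Theorem~\ref{thrm:graph_oracles}. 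In round $j$, I restrict $E_j$ to the edges surviving in $G_{j-1}$; being a sub-orientation of an out-degree-$1$ digraph this is again a pseudoforest, so a canonical maximal matching $M_j$ of it can be enumerated in $\Oh{\log n}$ space by the post-order traversal underlying \hyperref[algo:TreeVtxCover]{TreeVtxCover}. The deletion set for the round is $S_j = V(M_j)$, and since membership of a vertex in $S_j$ is decidable from $G_{j-1}$ in $\Oh{\log n}$ space, Theorem~\ref{thrm:graph_oracles} applies with $k = \Delta$ stages, giving oracle access to every $G_j$ in time $n^{\Oh{\Delta}}$ and space $\Oh{\Delta \log n}$. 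The cover enumerated is $\bigcup_{j} S_j$.

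For correctness, the edges $\bigcup_j M_j$ form a matching of $G$: within a round $M_j$ is a matching, and across rounds the endpoints of earlier matchings have been deleted, so no vertex is matched twice. This matching is moreover maximal: if some edge $e \in E_j$ had both endpoints uncovered at the end, then—since vertices are only ever deleted, never restored—both endpoints were present when round $j$ ran, so $e$ survived into the round-$j$ pseudoforest and could have been added to $M_j$, contradicting its maximality. Hence $\bigcup_j V(M_j)$ is a vertex cover, and the factor-$2$ bound follows from the matching argument of the first paragraph.

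The hard part will be the small-space bookkeeping rather than the approximation ratio. I must fix a \emph{canonical} maximal matching on pseudoforests whose membership a single vertex can recognise using only $\Oh{\log n}$ bits while consulting $G_{j-1}$, since this is precisely what lets each $S_j$ satisfy the hypothesis of Theorem~\ref{thrm:graph_oracles}; and I must then check that stacking these $\Delta$ deletion stages keeps the space at $\Oh{\Delta \log n}$ and the time at $n^{\Oh{\Delta}}$, exactly as the recurrences in that theorem predict. Once the layered matching is shown to be maximal the factor $2$ is immediate, so the delicate point is confining the per-round maximal-matching test and the $\Delta$-deep composition of oracles to the stated resources.
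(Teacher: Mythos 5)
Your proposal is correct and shares the paper's overall architecture---split the edges into $\Delta$ layers of out-degree $1$, process the layers in sequence, and present the shrinking graphs via the oracle composition of Theorem~\ref{thrm:graph_oracles}---but it swaps out the per-layer primitive, and with it the correctness accounting. The paper runs Lemma~\ref{lemm:outdeg_1_vc} on each layer to extract a \emph{minimum vertex cover} $S_i$ of the pseudoforest $H_{i-1}$, deletes only $S_i$, and brings in maximal matchings of the layers purely in the analysis (via $\abs{S_i} \le 2\abs{M_i}$ and the claim that the $M_i$ combine into a matching of $G$). You instead compute a canonical \emph{maximal matching} $M_j$ of each surviving layer and delete both endpoints of every matched edge, so that $\bigcup_j M_j$ is a maximal matching of $G$ by construction and the factor $2$ is the textbook bound $\abs{V(M)} = 2\abs{M} \le 2\tau(G)$. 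This buys a cleaner argument: disjointness of the per-layer matchings is forced by your deletions, whereas in the paper's version an endpoint of an $M_i$-edge need not lie in $S_i$ and so can reappear in a later layer, which makes the cross-layer matching claim the delicate point. The price is the one ingredient you leave open, a logspace-checkable canonical maximal matching on pseudoforests; this is genuinely available with the paper's own machinery, since under the greedy post-order rule (match each vertex to its first child that is unmatched within its own subtree) the predicate ``$v$ is matched to one of its children'' satisfies exactly the recursion of \hyperref[algo:TreeVtxCover]{TreeVtxCover}, $V(M_j)$ is that set together with each designated first unmatched child, and the unique cycle per component is handled by the same two-case split as in Lemma~\ref{lemm:outdeg_1_vc}. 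With that filled in, each $S_j$ meets the hypothesis of Theorem~\ref{thrm:graph_oracles} and the $n^{\Oh{\Delta}}$ time and $\Oh{\Delta \log{n}}$ space bounds follow as you predict.
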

\begin{proof}
Set $G_0 = G$ and $V_0 = \V{G}$. The algorithm works in stages $1, \dotsc, \Delta$ as follows. In Stage $i$, it enumerates the subgraph $H_{i - 1}$ of $G_{i - 1}$ in which each vertex of $u$ of $G_{i - 1}$ only retains the edge to its $i$\exS{th} neighbour $v$ (if it exists) in $G$. Observe that directing every such edge from $u$ to $v$ yields a directed graph $R$ with maximum out-degree $1$.

Applying the procedure of Lemma~\ref{lemm:outdeg_1_vc} with $D = R$ and $G = H_{i - 1}$, the algorithm now enumerates a minimum vertex cover $S_i$ for $H_{i - 1}$ in polynomial time using $\Oh{\log{n}}$ bits of space. It then enumerates the graph $G_i$ by removing the vertex set $S_i$ from $G_{i - 1}$ and outputs the vertices in $S_i$. At the end of Stage $\Delta$, the algorithm terminates.

We now prove the bounds in the claim. Observe that the vertex set of $G_i$ ($i \in [\Delta]$) is precisely $\V{G_{i - 1}} \setminus S_i$. In Stage $i$, the algorithm only considers the vertices in $G_{i - 1}$, so the vertex cover generated by it has no neighbours in vertex covers generated in earlier stages, i.e.\ $S_i \cap S_j = \emptyset$ for $j < i$.

For each $H_{i - 1}$, consider a maximal matching $M_i$ in $H_{i - 1}$. From the way the various sets $S_i$ are generated, it is easy to see that $S = \bigcup_{i = 1}^{\Delta} S_i$ forms a vertex cover for $G$ and additionally, $M = \bigcup_{i = 1}^{\Delta} M_i$ is a maximal matching in $G$. Observe that the each set $S_i$ also covers the matching $M_i$ in $H_{i - 1}$. Since $S_i$ is a minimum vertex cover for $H_{i - 1}$, and the endpoints of edges in $M_i$ form a vertex cover for $H_{i - 1}$, we have $\abs{S_i} \leq 2 \abs{M_i}$.

As $M$ is a maximal matching in $G$, the endpoints of edges in $M$ form a vertex cover for $G$, and we have $\abs{S} = \sum_{i = 1}^{\Delta} \abs{S_i} \leq 2 \cdot \sum_{i = 1}^{\Delta} \abs{M_i} \leq 2 \cdot \sum_{i = 1}^{\Delta} \tau(G)$, where $\tau(G)$ is the vertex cover number of $G$. Thus, the set $S$ output by the algorithm is a $2$-approximate vertex cover.

Now observe that for all $i \in [\Delta]$, $G_i$ and $S_i$ satisfy the hypothesis of Theorem~\ref{thrm:graph_oracles}. Thus, one can enumerate each of the sets $S_i$ in time $n^{\Oh{i}}$ using $\Oh{i \log{n}}$ bits of space. Since the maximum value $i$ takes on is $\Delta$, the algorithm runs in time $n^{\Oh{\Delta}}$ and uses a total of $\Oh{\Delta \log{n}}$ bits of space.
\end{proof}

\subsection{$\pdHS{}$ on families with bounded element multiplicity}\label{ssct:bd_dhs_d_approx}
We now consider the $\pdHS{}$ problem, where an instance consists of a finite ground set $U$, a family $\mathcal{F}$ of subsets of $U$ of size at most $d$ and the objective is to compute a hitting set of minimum size for $\mathcal{F}$.

\begin{theorem}\label{thrm:bd_maxl_is}
	There is an algorithm which takes as input a graph $G$ on $n$ vertices with maximum degree $\Delta$ and enumerates a maximal independent set in $G$. The algorithm runs in time $n^{\Oh{\Delta}}$ and uses $\Oh{\Delta \log{n}}$ bits of space.
\end{theorem}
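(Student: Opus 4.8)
The plan is to build the independent set in $\Delta$ stages, reusing the layering machinery from the proof of Theorem~\ref{thrm:bd_min_vc}. Set $G_0 = G$. In stage $i$ I would form the subgraph $H_{i - 1}$ of the current graph $G_{i - 1}$ in which every surviving vertex $u$ retains only the edge to its $i$\exS{th} neighbour in $G$ (when that neighbour survives); orienting each such edge away from $u$ yields a digraph of maximum out-degree $1$. Applying Lemma~\ref{lemm:outdeg_1_is} to this layer, I would enumerate a (maximum, hence maximal) independent set $I_i$ of $H_{i - 1}$ in polynomial time and $\Oh{\log n}$ bits of space, output the selected vertices, and then pass to $G_i$ by deleting from $G_{i - 1}$ the closed neighbourhood $N_{G_{i - 1}}[I_i]$ of the chosen vertices. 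After stage $\Delta$ the algorithm halts, and the claimed output is $I = \bigcup_{i = 1}^{\Delta} I_i$.

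Correctness amounts to showing that $I$ is both independent and dominating, i.e.\ maximal. Domination should follow as in the vertex-cover argument: since every edge incident with a surviving vertex is that vertex's $j$\exS{th}-neighbour edge for some $j \leq \Delta$, after all $\Delta$ stages no undecided vertex can remain, so every vertex is either selected or adjacent to a selected vertex. Independence across stages would be guaranteed by the deletion step: once $I_i$ is chosen, $N_{G_{i - 1}}[I_i]$ is removed, so no later stage --- which only ever selects surviving vertices --- can pick a neighbour of $I_i$; hence $I_i \cap I_j = \emptyset$ and no $G$-edge joins vertices chosen in different stages, mirroring the disjointness $S_i \cap S_j = \emptyset$ established for Theorem~\ref{thrm:bd_min_vc}.

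For the resource bounds I would invoke Theorem~\ref{thrm:graph_oracles}: each $G_i$ is obtained from $G_{i - 1}$ by deleting $N_{G_{i - 1}}[I_i]$, whose membership test is computable in $\Oh{\log n}$ bits of space given oracle access to $G_{i - 1}$, since it only requires running the Lemma~\ref{lemm:outdeg_1_is} procedure on $H_{i - 1}$ and one adjacency check. With $k = \Delta$ levels of deletion this yields enumeration of every $I_i$ in time $n^{\Oh{\Delta}}$ using $\Oh{\Delta \log n}$ bits of space, matching the claim. The main obstacle I anticipate is the \emph{within-layer} independence: a set $I_i$ independent in $H_{i - 1}$ need not be independent in $G_{i - 1}$, because two selected vertices may be joined by an edge lying in a different layer. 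Resolving this is the crux --- either by pruning $I_i$ down to a subset that is independent in $G_{i - 1}$ via a log-space, bounded-depth selection rule (for instance a lex-first filter that discards a selected vertex whenever an earlier selected neighbour survives), or by redefining the layers so that each selected set is automatically $G$-independent --- while simultaneously preserving the domination guarantee and the bound of $\Delta$ stages needed for Theorem~\ref{thrm:graph_oracles}.
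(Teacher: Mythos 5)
Your construction is the same as the paper's: the same $i$th-neighbour layering into subgraphs $H_{i-1}$ of maximum degree~$1$ (after ignoring orientations), the same use of Lemma~\ref{lemm:outdeg_1_is} on each layer, deletion of the closed neighbourhood of the selected set to form $G_i$, and Theorem~\ref{thrm:graph_oracles} for the time and space bounds; your cross-stage independence and domination arguments also match the paper's. Where you diverge is that you explicitly flag the within-layer independence problem and leave it unresolved, whereas the paper's own proof silently asserts that $S = \bigcup_i S_i$ is independent in $G$ while only arguing that $S_i$ has no neighbours in the \emph{earlier} sets $S_j$.

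The problem you flag is genuine, not cosmetic, and your proposal is therefore incomplete exactly at the point you yourself identify as the crux. Concretely, take $G$ a triangle on $a, b, c$ with adjacency lists $a\colon(b,c)$, $b\colon(a,c)$, $c\colon(a,b)$. Then $H_0$ consists of the first-neighbour edges $ab$ (contributed by $a$ and by $b$) and $ca$ (contributed by $c$), i.e.\ the path $b$--$a$--$c$, whose unique maximum independent set is $\brc{b, c}$ --- an edge of $G$. So the selected set of the very first layer need not be independent in $G$, and the final output need not be an independent set at all. The fixes you gesture at are also not routine: discarding a selected vertex whenever an earlier \emph{kept} selected neighbour survives is a greedy maximal-independent-set computation on $G_{i-1}[I_i]$ whose natural recursive evaluation has depth up to $\abs{I_i}$ and so does not fit in $\Oh{\log n}$ bits per layer, while discarding whenever any earlier selected neighbour exists (kept or not) can leave a vertex with no kept neighbour; and any pruning of $I_i$ endangers domination, because a vertex deleted at stage $i$ on account of a neighbour that is subsequently pruned out of $I_i$ ends up neither selected nor dominated. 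A complete proof needs a within-layer selection rule that is simultaneously independence-preserving in $G_{i-1}$, domination-preserving, and implementable in $\Oh{\log n}$ bits per layer (or a different layer decomposition with these properties); neither your proposal nor the paper's proof supplies one.
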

\begin{proof}
	The argument here is essentially the same as that in the proof of Theorem~\ref{thrm:bd_min_vc} with suitable modifications to compute an independent set instead of a vertex cover.

	Set $G_0 = G$ and $V_0 = \V{G}$. The algorithm works in stages $1, \dotsc, \Delta$ as follows. In Stage $i$, it enumerates the subgraph $H_{i - 1}$ of $G_{i - 1}$ where each vertex $u$ of $G_{i - 1}$ only retains the edge to its $i$\exS{th} neighbour $v$ (if it exists) in $G$. Observe that directing every such edge from $u$ to $v$ yields a directed graph $R$ with maximum out-degree $1$.

	Applying the procedure of Lemma~\ref{lemm:outdeg_1_is} with $D = R$ and $G = H_{i - 1}$, the algorithm now enumerates a maximum independent set $S_i$ in $H_{i - 1}$ in polynomial time using $\Oh{\log{n}}$ bits of space. It then enumerates the graph $G_i$ by removing the vertex set $S_i \cup \Nb{S_i}$ from $G_{i - 1}$ and outputs the vertices in $S_i$. At the end of Stage $\Delta$, the algorithm terminates.

	We now show that $S = \bigcup_{i = 1}^{\delta} S_i$ forms a maximal independent set in $G$. Observe that the vertex set of $G_i$ ($i \in [\delta]$) is precisely the set obtained by removing $S_i$ and all vertices incident with $S_i$ from $\V{G_{i - 1}}$. In Stage $i$, the algorithm only considers the vertices in $G_{i - 1}$, whose vertices have no neighbours in $G_j$ for any $j < i - 1$. Thus, $S_i$ has no neighbours in independent sets generated in earlier stages and $S = \bigcup_{i = 1}^{\delta} S_i$ is an independent set in $G$ as well.

	Assume for a contradiction that $S$ is not maximal. Then there is a vertex $v \in \V{G}$ which is not incident with any vertex in $S$. Suppose $v$ is the $i$\exS{th} neighbour of another vertex in $G$. Either $v$ appears in $H_{i - 1}$ or it is excluded because at an earlier stage, some neighbour of $v$ was included in an independent set $S_{j}$ for some subgraph $H_{j - 1}$ with $j < i$. In the latter case, $v$'s neighbour is in $S$, contradicting the assumption. In the former case, if $v$ is not included in the maximum independent set $S_i$ for $H_{i - 1}$, there is a vertex in $S_i$ adjacent to $v$, which is again a contradiction. Thus, $S$ is a maximal independent set in $G$.

	Now observe that for all $i \in [\Delta]$, $G_i$ and $S_i$ satisfy the hypothesis of Theorem~\ref{thrm:graph_oracles}. Thus, one can enumerate each of the sets $S_i$ in time $n^{\Oh{i}}$ using $\Oh{i \log{n}}$ bits of space. Since the maximum value $i$ takes on is $\Delta$, the algorithm runs in time $n^{\Oh{\Delta}}$ and uses a total of $\Oh{\Delta \log{n}}$ bits of space.
\end{proof}

What follows is the main result of this section, which extends Theorem~\ref{thrm:bd_min_vc} to $\pdHS{}$. Observe that for any instance of $\pdHS{}$, the elements of a maximal non-intersecting subfamily of the input forms a $d$-approximate solution. We show that an input family $\mathcal{F}$ of sets can be decomposed into multiple smaller families where it is possible to find maximal non-intersecting subfamilies in polynomial time and logarithmic space using the algorithm of Theorem~\ref{thrm:bd_maxl_is}. These maximal non-intersecting subfamilies can then be combined to obtain a maximal non-intersecting subfamily of $\mathcal{F}$ whose elements form a $d$-approximate solution.

\begin{theorem}
	There is an algorithm which takes as input a ground set $U$ with $n$ elements, a family $\mathcal{F}$ of subsets of $U$ of size at most $d \in \N$ where each element of $U$ appears at most $\delta$ times, and enumerates a $d$-approximate minimum hitting set for $\mathcal{F}$. The algorithm runs in time $n^{\Oh{d{\delta}^2}}$ and uses $\Oh{d{\delta}^2 \log{n}}$ bits of space.
\end{theorem}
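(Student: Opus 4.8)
The plan is to reduce the problem to computing a \emph{maximal pairwise-disjoint} subfamily of $\mathcal{F}$ and then to produce such a subfamily by invoking the bounded-degree maximal-independent-set machinery of Theorem~\ref{thrm:bd_maxl_is} on an implicitly represented conflict graph. First I would record the approximation guarantee hinted at just before the statement: if $\mathcal{M} \subseteq \mathcal{F}$ is a \emph{maximal} collection of pairwise disjoint sets, then $H = \bigcup_{A \in \mathcal{M}} A$ is a hitting set for $\mathcal{F}$, since by maximality every $B \in \mathcal{F}$ meets some $A \in \mathcal{M}$ and hence meets $H$. Because the members of $\mathcal{M}$ are disjoint, every hitting set must spend at least one element on each of them, so any optimum has size at least $\abs{\mathcal{M}}$; and since each set has at most $d$ elements, $\abs{H} \le d\abs{\mathcal{M}} \le d \cdot \mathrm{OPT}$. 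Thus $H$ is a $d$-approximate hitting set, and it remains only to produce some maximal disjoint subfamily within the stated resource bounds.

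Next I would phrase ``maximal disjoint subfamily'' as a maximal independent set in the \emph{conflict graph} $\mathcal{G}$ whose vertices are the members of $\mathcal{F}$ and whose edges join pairs of intersecting sets: an independent set of $\mathcal{G}$ is exactly a pairwise-disjoint subfamily, and the two notions of maximality coincide. The crucial structural observation is that $\mathcal{G}$ has bounded degree. A set $A$ has at most $d$ elements, and each element of $A$ lies in at most $\delta$ sets, so $A$ meets at most $d(\delta - 1) < d\delta$ other sets; hence $\mathcal{G}$ has maximum degree below $d\delta$. Counting element--set incidences also shows $\abs{\mathcal{F}} \le n\delta$, so $\mathcal{G}$ has only polynomially many vertices. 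I would not materialise $\mathcal{G}$; instead I would present it through the set-family oracle framework of Theorem~\ref{thrm:hypergraph_oracles}, enumerating its vertices by enumerating $\mathcal{F}$, testing adjacency of $A$ and $B$ by scanning the elements of $A$ for membership in $B$, and enumerating the neighbours of $A$ by running, for each element of $A$, over the $\le \delta$ sets containing it.

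With $\mathcal{G}$ available as an oracle, I would run the algorithm of Theorem~\ref{thrm:bd_maxl_is} on $\mathcal{G}$: it proceeds in $\Oh{d\delta}$ stages (one per possible conflict-neighbour index), in each stage isolating the out-degree-$1$ subgraph in which every vertex keeps only the edge to its current $i$\exS{th} conflict neighbour, solving that stage exactly via Lemma~\ref{lemm:outdeg_1_is}, and then deleting the chosen sets together with all sets conflicting with them before recursing. The union of the per-stage solutions is a maximal independent set of $\mathcal{G}$, equivalently a maximal disjoint subfamily $\mathcal{M}$, whose elements I would output as the hitting set; maximality follows exactly as in the proof of Theorem~\ref{thrm:bd_maxl_is}, after checking that a set survives to the end if and only if none of its conflict neighbours was ever selected.

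The hard part will be the resource accounting, i.e.\ composing the oracle correctly rather than the combinatorics. Each stage's deletion test---deciding whether a set conflicts with the subfamily chosen in that stage---must, for a given set, examine the up-to-$\delta$ sets reachable through each shared element, so the per-stage check fed to Theorem~\ref{thrm:hypergraph_oracles} carries a parameter $c = \Oh{\delta}$ rather than $c = \Oh{1}$. Running the $\Oh{d\delta}$ stages of the bounded-degree procedure on top of this oracle, Theorem~\ref{thrm:hypergraph_oracles} composes them to space $\Oh{d\delta \cdot \delta \log n} = \Oh{d\delta^2 \log n}$ and time $n^{\Oh{d\delta^2}}$, matching the claim. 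I would take care that the conflict-neighbour enumeration uses a fixed canonical order with duplicate neighbours (sets sharing more than one element with $A$) suppressed, so that the out-degree-$1$ reduction of each stage is well defined and the composed oracle depth stays at $\Oh{d\delta}$.
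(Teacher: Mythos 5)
Your proof is correct, and it reaches the same two pillars as the paper's argument --- (i) the elements of a maximal pairwise-disjoint subfamily form a $d$-approximate hitting set, and (ii) the intersection/conflict graph has maximum degree at most $d(\delta-1)$, so Theorem~\ref{thrm:bd_maxl_is} applies --- but it organizes the computation differently. The paper first partitions the work into $\delta$ outer stages, assigning each set $A$ to the stage $i$ for which $A$ is the $i$\exS{th} set containing some surviving element, and runs the bounded-degree maximal-independent-set algorithm separately on the intersection graph of each stage's subfamily $\mathcal{H}_{i-1}$, stitching the per-stage families together with Theorem~\ref{thrm:hypergraph_oracles} at cost $c = d(\delta-1)$ per stage; this mirrors the layering of Theorem~\ref{thrm:bd_min_vc} and is what produces the $\delta \cdot d(\delta-1)$ in the exponent. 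You instead observe that the degree bound $d(\delta-1)$ already holds for the conflict graph of \emph{all} of $\mathcal{F}$ (which has at most $n\delta$ vertices), and run Theorem~\ref{thrm:bd_maxl_is} once on that graph presented by an adjacency oracle. This flatter decomposition is simpler and, if one accounts the per-level oracle overhead as $\Oh{\log n}$ bits (a counter over the $\le \delta$ sets containing an element suffices; the $\delta$-fold fan-out costs time, not space), it would even yield the sharper bounds $\Oh{d\delta\log n}$ space and $n^{\Oh{d\delta}}$ time; your conservative choice $c = \Oh{\delta}$ lands exactly at the stated $\Oh{d\delta^2\log n}$ and $n^{\Oh{d\delta^2}}$. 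Two small points of hygiene: the composition across the $\Oh{d\delta}$ stages is really an instance of Theorem~\ref{thrm:graph_oracles} applied to the oracle-presented conflict graph (Theorem~\ref{thrm:hypergraph_oracles} governs deletions from the set family itself), and one should note, as the paper implicitly does, that the argument presumes the sets of $\mathcal{F}$ are nonempty so that a maximal disjoint subfamily indeed covers every set.
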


\begin{proof}
	Set $\mathcal{F}_0 = \mathcal{F}$ and $U_0 = U$. Let $e \in U$ be an element that appears $\delta_{e}$ times in $\mathcal{F}$. From the ordering of the sets in the input, it is possible to determine (in polynomial time and logarithmic space) the $i$\exS{th} ($i \in [\delta_{e}]$) set in which $e$ appears. We call this the $i$\exS{th} set for $e$. The algorithm works in stages $1, \dotsc, \Delta$ as follows. In Stage $i$, it enumerates the subfamily $\mathcal{H}_{i - 1}$ of $\mathcal{F}_{i - 1}$ which includes all sets $A \in \mathcal{F}_{i - 1}$ such that $A$ is the $i$\exS{th} set for some element in $U_{i - 1}$.

	The algorithm now enumerates a maximal non-intersecting subfamily $\mathcal{K}_i$ of $\mathcal{H}_{i - 1}$. Observe that this can be obtained as a maximal independent set in the intersection graph of $\mathcal{H}_{i - 1}$: each set is represented by a vertex in the intersection graph, and an intersection between any two sets is represented by an edge between the corresponding vertices. One can enumerate this graph by producing $[\mathcal{H}_{i - 1}]$ as the vertex set and producing $ij$ ($1 \leq i < j \leq \abs{\mathcal{H}_{i - 1}}$) as an edge whenever the $i$\exS{th} and $j$\exS{th} sets in $\mathcal{H}_{i - 1}$ intersect. Thus, the graph can be enumerated in polynomial time and logarithmic space. Using Theorem~\ref{thrm:bd_maxl_is} on this graph with $\Delta = d(\delta - 1)$, the algorithm computes a maximal non-intersecting subfamily of $\mathcal{H}_{i - 1}$. This step takes time $n^{\Oh{1}}$ and uses $\Oh{d(\delta - 1) \log{n}}$ bits of space. The algorithm then outputs $S_i = \bigcup \mathcal{K}_i$ and enumerates $U_i = U_{i - 1} \setminus S_i$ as the ground set and $\mathcal{F}_i = \setb{A \in \mathcal{F}_{i - 1}}{A \cap S_i = \emptyset}$ as the subfamily for the next stage. At the end of Stage $\delta$, the algorithm terminates.

	Observe that at each stage, the sets in the maximal subfamily computed do not intersect those in any maximal subfamily computed at later stages. Additionally, each set in $\mathcal{F}$ appears in some subfamily $\mathcal{H}_i$. Using arguments similar to those in the proof of Theorem~\ref{thrm:bd_maxl_is}, one can show that the family $\mathcal{K} = \bigcup_{i = 1}^{\delta} \mathcal{K}_i$ is a maximal non-intersecting subfamily of $\mathcal{F}$. Thus, any hitting set $T$ for $\mathcal{F}$ must contain at least one element from each set in $\mathcal{K}$ and $\abs{T} \geq \abs{\mathcal{K}}$. Because $\mathcal{K}$ is a maximal non-intersecting subfamily of $\mathcal{F}$, any set in $\mathcal{F} \setminus \mathcal{K}$ intersects some set in $\mathcal{K}$, i.e.\ each set in $\mathcal{F}$ contains an element appearing in $\mathcal{K}$. The algorithm outputs $S = \bigcup_{i \in [\delta]} S_i$, the set of elements appearing in $\mathcal{K}$, which is a hitting set for $\mathcal{F}$, and since each set in $\mathcal{K}$ has at most $d$ elements, the size of the set output by the algorithm is at most $d \cdot \abs{\mathcal{K}} \leq d \cdot \abs{T}$. Thus, the set output is a $d$-approximate minimum hitting set for $\mathcal{F}$. 

	We now prove the resource bounds. Observe that for all $i \in [\delta]$, $S_i$, $U_i$ and $\mathcal{F}_i$ satisfy the hypothesis of Theorem~\ref{thrm:hypergraph_oracles} with $c = d (\delta - 1)$. Thus, one can enumerate each of the sets $S_i$ in time $n^{\Oh{i d (\delta - 1)}}$ using $\Oh{i d (\delta - 1) \log{n}}$ bits of space. Since the maximum value $i$ takes on is $\delta$, the algorithm runs in time $n^{\Oh{d{\delta}^2}}$ and uses a total of $\Oh{d{\delta}^2 \log{n}}$ bits of space.
\end{proof}

\section{Staggered Greedy Heuristics}\label{sect:hs_vc_del}
In this section, we consider the $\pdHS{}$ and $\pDS{}$ problems. We show that by combining greedy strategies with certain kernelization rules, one can devise space-efficient approximation algorithms for both problems. Algorithms for $\pdHS{}$ can be used as subroutines in solving various \emph{deletion} problems, where the objective is to delete the minimum possible number of vertices from a graph so that the resulting graph satisfies a certain property. As a corollary, we devise approximation algorithms for such problems as well.

\subsection{$\pdHS{}$}
The algorithm of Proposition~\ref{prop:fk_hs_kernel} can be used to approximate $\pdHS{}$ as we show below.
\begin{corollary}\label{corr:hs_sqrt_approx}
	Let $U$ be a ground set with $n$ elements and $\mathcal{F}$ be a family of subsets of $U$ of size at most $d \in \N$. One can enumerate an $\Oh{d n^{1 - 1 / d}}$-approximate minimum hitting set for $\mathcal{F}$ in time $n^{\Oh{d^2}}$ using $\Oh{d^2 \log{n}}$ bits of space.
\end{corollary}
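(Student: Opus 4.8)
The plan is to combine the kernelization of Proposition~\ref{prop:fk_hs_kernel} with a size threshold: when the optimum is small I will output the set of all elements occurring in a small kernel, and when it is large I will fall back on the trivial hitting set $U$. Write $\tau$ for the size of a minimum hitting set of $\mathcal{F}$ and set $t = \ceil{n^{1/d}}$. The one structural fact I need is a \emph{solution-lifting lemma}: if $\tau \le k$ and $\mathcal{F}'$ is the kernel returned by Proposition~\ref{prop:fk_hs_kernel} on input $(\mathcal{F}, k)$, then $\bigcup \mathcal{F}'$ (the set of elements appearing in the kernel) is itself a hitting set for $\mathcal{F}$. I would prove this from the sunflower reduction underlying the kernel: each reduction step discards a set that is a redundant petal of a sunflower with $k+2$ petals, and any hitting set of size at most $k$ of the reduced family must meet the sunflower core, which is contained in the discarded set, so every size-$\le k$ hitting set of $\mathcal{F}'$ is a hitting set of $\mathcal{F}$. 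Taking a \emph{minimal} minimum hitting set $H$ of $\mathcal{F}'$ (so $|H| \le k$ by the equivalence and $H \subseteq \bigcup\mathcal{F}'$), the superset $\bigcup\mathcal{F}'$ is then also a hitting set of $\mathcal{F}$.

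The contrapositive drives the algorithm: if $\bigcup\mathcal{F}'$ fails to hit $\mathcal{F}$, or the kernelizer reports ``no hitting set of size $\le k$'', then $\tau > k$. The algorithm scans $k = 1, 2, \dotsc, t$; for each $k$ it runs Proposition~\ref{prop:fk_hs_kernel}, and if a kernel $\mathcal{F}'$ is produced it \emph{tests} whether $\bigcup\mathcal{F}'$ hits $\mathcal{F}$ by checking, for every set of $\mathcal{F}$, that the set contains some element of $\bigcup\mathcal{F}'$. At the first $k$ that succeeds it enumerates $\bigcup\mathcal{F}'$ and halts; if no $k \le t$ succeeds it enumerates $U$. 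The key economy here is that we only ever \emph{verify} the hitting-set property rather than solve $\pdHS{}$ on the kernel (which would be far too expensive): the test re-runs the kernel enumerator of Proposition~\ref{prop:fk_hs_kernel} as an oracle, costing polynomial time and only $\Oh{d^2\log n}$ bits.

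For the guarantee, let $\hat k$ be the value of $k$ at which the scan halts. By the contrapositive, every $k < \hat k$ has $\tau > k$, so $\tau \ge \hat k$. If the scan halts, the output has size $\Oh{d(\hat k + 1)^d}$, so the ratio is $\Oh{d(\hat k+1)^d}/\tau = \Oh{d(\hat k+1)^{d-1}} = \Oh{d\,t^{d-1}} = \Oh{d\, n^{1-1/d}}$, using $\hat k \le t = \Oh{n^{1/d}}$. If instead the scan fails on all $k \le t$, then $\tau > t \ge n^{1/d}$, and outputting $U$ (size $n$) gives ratio $n/\tau < n/n^{1/d} = n^{1-1/d}$. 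Either way the approximation ratio is $\Oh{d\,n^{1-1/d}}$. For the resources, each of the at most $t \le n$ iterations runs the kernelizer in time $n^{\Oh{d^2}}$ and space $\Oh{d^2\log n}$, and the verification adds only polynomial time and $\Oh{\log n}$ extra bits on top of the kernel oracle; reusing the workspace across iterations keeps the totals at $n^{\Oh{d^2}}$ time and $\Oh{d^2\log n}$ bits, with the output streamed.

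I expect the main obstacle to be the structural lemma — specifically, arguing that the \emph{union of the kernel's elements}, and not merely some abstract size-$\le k$ solution, is a hitting set of $\mathcal{F}$. This is the one place where the argument must use the concrete sunflower structure behind Proposition~\ref{prop:fk_hs_kernel} rather than its black-box existential equivalence, since the union being a hitting set is exactly what lets the scan terminate by $k = \tau$ (and hence with a good ratio) whenever $\tau \le t$. Everything else is a routine threshold and resource analysis.
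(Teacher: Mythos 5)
Your proposal is correct and follows essentially the same route as the paper's proof: scan $k = 1, 2, \dotsc, \ceil{n^{1/d}}$, run the kernelizer of Proposition~\ref{prop:fk_hs_kernel} at each step, output the union of the kernel's elements at the first non-\NO{} answer (giving size $\Oh{d(k+1)^d}$ against $OPT \geq k$), and fall back on $U$ when $OPT \geq n^{1/d}$. Your solution-lifting lemma and the explicit verification that $\bigcup\mathcal{F}'$ hits $\mathcal{F}$ address a step the paper's proof treats as immediate --- that the elements of the kernel form a valid hitting set of the original family --- so your version is, if anything, the more careful of the two.
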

\begin{proof}
	Consider the following algorithm. Starting at $k = 1$, run the algorithm of Proposition~\ref{prop:fk_hs_kernel} and repeatedly increment the value of $k$ until $k = n^{1 / d}$ or the algorithm returns a solution of size $\Oh{d {(k + 1)}^d}$ (i.e.\ it does not return a \NO{} answer) for the first time. If $k$ is incremented until $n^{1 / d}$, then simply return the entire universe as the solution. Clearly, the approximation ratio is $n^{1 - 1 / d}$, as $OPT \geq n^{1 / d}$ (and so the size of the solution returned is $n = n^{1 - 1/d} \cdot n^{1 / d} \leq n^{1 - 1 / d} \cdot OPT$, where $OPT$ is the size of the minimum hitting set).

	If $k < n^{1 / d}$, then the size of the solution produced is $\Oh{d {(k + 1)}^d}$, and we know that $OPT \geq k$, since the algorithm had returned \NO{} answers until this point. So the size of the solution produced is $\Oh{d {(k + 1)}^d} = \Oh{d (k+1)^{d - 1} \cdot (OPT+1)} = \Oh{ d n^{1 - 1/d} \cdot (OPT+1)}$. Thus, we have an $\Oh{ d n^{1 - 1 / d}}$-approximation. The bounds on running time and space used follow from the fact that the algorithm of Proposition~\ref{prop:fk_hs_kernel} runs in time $n^{\Oh{d^2}}$ and uses $\Oh{d^2 \log{n}}$ bits of space.
\end{proof}

What follows is the key result en route to developing a space-efficient approximation algorithm for $\pdHS{}$.
% generalization of Lemma~\ref{lemm:vc_eps_approx} and is proved using very similar arguments.

\begin{lemma}\label{lemm:hs_eps_approx}
	Let $0 < \epsilon \leq 1$. There is an algorithm which takes as input a family $\mathcal{F}$ of $d$-subsets of a ground set $U$ with $n$ elements and $k \in \N$, and either determines correctly that $\mathcal{F}$ has no hitting set of size at most $k$ or enumerates a hitting set of size $\Oh{(d / \epsilon) k^{1 + \epsilon}}$. The algorithm runs in time $n^{\Oh{d^2 + (d / \epsilon)}}$ and uses $\Oh{(d^2 + (d / \epsilon)) \log{n}}$ bits of space. 
\end{lemma}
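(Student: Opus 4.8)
The plan is to combine a single kernelization pass with a \emph{staggered} greedy that constructs the hitting set in a small, bounded number of deletion stages, implementing every deletion through the oracle machinery of Theorem~\ref{thrm:hypergraph_oracles} so that the working family never has to be stored explicitly.

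First I would run the algorithm of Proposition~\ref{prop:fk_hs_kernel} on $\mathcal{F}$ with parameter $k$. If it reports that no hitting set of size at most $k$ exists, the algorithm halts with that answer: this is the only place the promised \NO{} output is produced, and it accounts for the $n^{\Oh{d^2}}$ time and $\Oh{d^2 \log{n}}$ space terms in the bound. Otherwise we have certified that $OPT \le k$, a fact the remainder of the analysis leans on throughout, and we have an equivalent subfamily with $\Oh{(k+1)^d}$ sets. Because this kernel preserves only the \emph{decision} and is not a sub-instance whose hitting sets lift back to $\mathcal{F}$, the greedy construction and all deletions are carried out on the original family; ensuring that the elements we ultimately select form a hitting set for all of $\mathcal{F}$ (rather than merely for the reduced family) may require also collecting the elements the reduction identifies as forced, and making this step faithful is one of the delicate points.

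The staggered greedy then proceeds in stages $1, \dotsc, t$ with $t = \Oh{1/\epsilon}$. Writing $\mathcal{F}_0 = \mathcal{F}$, in stage $i$ I would fix a threshold $\theta_i$, let $S_i$ be the set of all elements whose degree (number of containing sets) in $\mathcal{F}_{i-1}$ is at least $\theta_i$, output $S_i$, and form $\mathcal{F}_i$ by deleting from $\mathcal{F}_{i-1}$ every set that meets $S_i$. Two elementary estimates drive the analysis. Since every surviving set of $\mathcal{F}_i$ has all of its elements of degree below $\theta_i$, while the whole family is hit by at most $k$ elements, a pigeonhole argument gives $\abs{\mathcal{F}_i} < k \theta_i$; and since $\sum_{A \in \mathcal{F}_{i-1}} \abs{A} \le d \abs{\mathcal{F}_{i-1}}$, the batch obeys $\abs{S_i} \le d \abs{\mathcal{F}_{i-1}} / \theta_i$. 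Choosing the $\theta_i$ along a geometric schedule drives $\abs{\mathcal{F}_i}$ down rapidly, so that after $\Oh{1/\epsilon}$ stages the residual family has only $\Oh{k^{1+\epsilon}}$ sets; at that point I would output every element occurring in it, adding at most $\Oh{d k^{1+\epsilon}}$ further elements. As each $S_i$ together with the final batch hits every set of $\mathcal{F}$, the union is a hitting set, and summing the per-stage bounds yields the target size $\Oh{(d/\epsilon) k^{1+\epsilon}}$.

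For the resources, the membership test defining $S_i$—counting, for a fixed element, the sets of $\mathcal{F}_{i-1}$ containing it and comparing against $\theta_i$—uses $\Oh{d \log{n}}$ bits given oracle access to $\mathcal{F}_{i-1}$, so each stage meets the hypothesis of Theorem~\ref{thrm:hypergraph_oracles} with $c = \Oh{d}$; over $\Oh{1/\epsilon}$ stages this contributes $n^{\Oh{d/\epsilon}}$ time and $\Oh{(d/\epsilon) \log{n}}$ space, which combine with the kernelization terms to give the stated bounds. The step I expect to be the main obstacle is calibrating the threshold schedule: the thresholds must shrink the set count fast enough that only $\Oh{1/\epsilon}$ deletion stages are needed—this is exactly what keeps the space polylogarithmic through Theorem~\ref{thrm:hypergraph_oracles}—while simultaneously keeping the accumulated batch sizes within $\Oh{(d/\epsilon) k^{1+\epsilon}}$. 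Reconciling these two constraints, together with handling the regime of small $k$ (where Corollary~\ref{corr:hs_sqrt_approx} gives a cleaner guarantee and can serve as a fallback), is where the real work of the argument lies.
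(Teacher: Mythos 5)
Your overall architecture (kernelize, then a staggered degree-threshold greedy with deletions realized through Theorem~\ref{thrm:hypergraph_oracles}, ending by dumping the residual family's elements) matches the paper's proof, but two of your specific choices break the quantitative analysis. First, you decide to run the greedy on the \emph{original} family $\mathcal{F}$ rather than on the kernel $\mathcal{F}'$. The kernelization of Proposition~\ref{prop:fk_hs_kernel} is not just a \NO{}-detector here: its output bound $\abs{\mathcal{F}'} = \Oh{{(k+1)}^d}$ is what makes the whole scheme work. Your own estimates give $\abs{S_1} \leq d\abs{\mathcal{F}_0}/\theta_1$ and $\abs{\mathcal{F}_i} < k\theta_i$; starting from $\abs{\mathcal{F}_0} = \abs{\mathcal{F}}$, which can be as large as $n^{\Theta(d)}$, you either make $\theta_1$ huge (and then need $\Theta\brn{d\log n / (\epsilon \log k)}$ stages of $k^{\epsilon}$-shrinkage to reach a residual family of $k^{1+\epsilon}$ sets, blowing the space bound to $\Oh{(d/\epsilon)\log^2 n}$ via the per-stage oracle cost) or keep $\theta_1 \approx k^{d-1-\epsilon}$ (and then $\abs{S_1}$ is not $\Oh{k^{1+\epsilon}}$). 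The paper avoids this by running the stages on $\mathcal{F}'$ itself, which is a genuine subfamily of $\mathcal{F}$, so that $\lceil (d-1)/\epsilon \rceil$ stages suffice; the "lifting" worry you raise is handled there, not by abandoning the kernel. Relatedly, your stage count $t = \Oh{1/\epsilon}$ is too small even in the kernelized setting: with a per-stage threshold ratio of $k^{\epsilon}$ you need $\Theta\brn{(d-1)/\epsilon}$ stages to descend from ${(k+1)}^{d}$ sets to $k^{1+\epsilon}$ sets.

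Second, your claim that after the kernelization succeeds "we have certified that $OPT \le k$" is false: Proposition~\ref{prop:fk_hs_kernel} either proves $OPT > k$ or produces an equivalent subfamily, but the latter outcome carries no guarantee that a size-$k$ hitting set exists. Since your unconditional bound $\abs{\mathcal{F}_i} < k\theta_i$ is derived from $OPT \le k$, it can fail, and then the batches and the residual family can exceed the claimed sizes, violating the lemma's promise that any non-\NO{} output has size $\Oh{(d/\epsilon)k^{1+\epsilon}}$. The paper's fix is to keep the check inside the loop: after each stage it tests whether $\abs{\mathcal{F}_j} \le c{(k+1)}^{d - j\epsilon}$ and returns \NO{} otherwise, which is itself a correct certificate of $OPT > k$ because every surviving element hits fewer than $\theta_j$ sets. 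Your argument needs this in-loop test (or some other certification of $OPT \le k$) to be sound.
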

\begin{proof}
	Let $i = \ceil{(d - 1) / \epsilon}$. The algorithm performs $i$ rounds of computation, each using $\Oh{\log{n}}$ bits of space to determine a set of elements (accessible by oracle) to be removed in the next round, or determine that $\mathcal{F}$ has no hitting set of size at most $k$.

	\begin{enumerate}
		\item Use the algorithm of Proposition~\ref{prop:fk_hs_kernel} to obtain a subfamily $\mathcal{F}' \subseteq \mathcal{F}$ over the ground set $U' \subseteq U$ such that
		\begin{itemize}
		 	\item $\abs{\mathcal{F}'} \leq c {(k + 1)}^d,\ \abs{U'} = cd {(k + 1)}^d$, and
		 	\item there exists a hitting set $S \subseteq U$ of size at most $k$ in $\mathcal{F}$ if and only if there exists a hitting set $S' \subseteq U'$ and $S'$ is a hitting set for $\mathcal{F}'$.
	  \end{itemize}

		\item Set $U_0 = U'$ and $\mathcal{F}_0 = \mathcal{F}'$. For $j = \brc{1, 2, \dotsc, i - 1}$, perform the following steps.
		\begin{itemize}
			\item Determine $S_j$, the set of all elements in $U_{j-1}$ which appear in at least $c {(k + 1)}^{d - 1 - j \epsilon}$ sets in $\mathcal{F}_{j - 1}$.
			\item Let $U_j = U_{j - 1} \setminus S_j$ and $\mathcal{F}_j = \setb{A \in \mathcal{F}_{j - 1}}{A \cap S_j = \emptyset}$. If there are more than $c(k+1)^{d - j \epsilon}$ sets in $\mathcal{F}_j$, then return \NO{}.
		\end{itemize}

		\item Determine $S_i$, the set of all elements in $U_{i - 1}$ which are in some set in $\mathcal{F}_{i - 1}$. Output $S = \bigcup_{j = 1}^i S_j$.
	\end{enumerate}

	We now prove the correctness of the algorithm. In Step 1, the algorithm obtains the ground set $U'$ and the family$\mathcal{F}'$, using the algorithm of Proposition~\ref{prop:fk_hs_kernel}. Let $l \in [i - 1]$ such that the algorithm answers \NO{} in Step 2 for $j = l$, and otherwise let $l = i$ if it never returns a \NO{} answer in Step 2.
	\begin{claim}
		For all $j \in [l]$, $\mathcal{F}_j$ has at most $c {(k + 1)}^{d - j \epsilon}$ sets.
	\end{claim}
	Consider the case when the algorithm does not return a \NO{} answer. Observe that the claim holds for the base case $j = 1$: $\mathcal{F}_0$ has $c {(k + 1)}^d$ sets, and since the algorithm does not return a \NO{} answer, we have $\abs{\mathcal{F}_1} \leq c {(k + 1)}^{d - j \epsilon}$. For induction, observe that whenever $\abs{\mathcal{F}_j} \leq c {(k + 1)}^{d - j \epsilon}$, the algorithm ensures that $\abs{\mathcal{F}_{j + 1}} \leq c (k+1)^{d - (j + 1) \epsilon}$; otherwise, it returns a \NO{} answer.

	Suppose the algorithm returns a \NO{} answer at some value of $j$ in Step 2, then there are more than $c {(k + 1)}^{d - j \epsilon}$ sets in $\mathcal{F}_j$, which have survived the repeated removal of sets from $\mathcal{F}_0$ up to this point, and they cannot be hit by any $k$ of the elements in $U_j$, since each element can hit at most $c {(k + 1)}^{d - 1 - j \epsilon}$ sets in $\mathcal{F}_j$. Thus, the algorithm correctly infers that the input does not have a hitting set of size at most $k$.

	Once the algorithm has reached Step 3, the number of sets in the residual family, $\mathcal{F}_{i - 1}$ is at most ${(k + 1)}^{d - \brn{\ceil{(d - 1) / \epsilon} - 1} \cdot \epsilon} < k^{d - \brn{(d - 1) / \epsilon - 1} \cdot \epsilon} = k^{1 + \epsilon}$. The set $S_i$ of elements in $U_{i - 1}$ that appear in some set in $\mathcal{F}_{i - 1}$ is trivially also a hitting set. Observe that the sets of elements removed in earlier stages, i.e. $S_0, \dotsc, S_{i - 1}$ together hit all sets in $\mathcal{F}$ not appearing in $\mathcal{F}_{i - 1}$. Thus, the set $S = \bigcup_{j = 0}^i S_j$ output by the algorithm is a hitting set for $\mathcal{F}$.

	\begin{claim}
		The set $S$ output by the algorithm has at most $\brn{{(d - 1) / \epsilon} + d} k^{1 + \epsilon}$ elements.
	\end{claim}
	For each $j \in [i - 1]$, the algorithm ensures that $\abs{\mathcal{F}_{j - 1}} \leq c {(k + 1)}^{d - (j - 1) \epsilon}$ (otherwise, it returns a \NO{} answer). Thus, the number of elements which appear in at least $c {(k + 1)}^{d - 1 - j \epsilon}$ sets is at most $\brn{c {(k + 1)}^{d - (j - 1) \epsilon}} / \brn{c {(k + 1)}^{d - 1 - j \epsilon}} = k^{1 + \epsilon}$, i.e.\ $\abs{S_j} \leq k^{1 + \epsilon}$.

	In Step 3, the algorithm ensures that $\abs{\mathcal{F}_{i - 1}} \leq k^{d - (i - 1) \epsilon} \leq k^{1 + \epsilon}$. Each set in $\mathcal{F}_{i - 1}$ edges and each of these edges can span at most $d$ elements. Thus, the number of elements in $U_{i - 1}$ which appear in some set in $\mathcal{F}_{i - 1}$ $dk^{1 + \epsilon}$, i.e.\ $\abs{S_i} \leq dk^{1 + \epsilon}$. Therefore, the total number of elements output by the algorithm in all three phases is $\abs{S} = \sum_{j = 1}^i \abs{S_j} \leq (i - 1) k^{1 + \epsilon} + dk^{1 + \epsilon} \leq \brn{\ceil{(d - 1) / \epsilon} + d} k^{1 + \epsilon}$.

	\begin{claim}
		The algorithm runs in time $n^{\Oh{d^2 + (d / \epsilon)}}$ and uses $\Oh{(d^2 + (d / \epsilon))\log{n}}$ bits of space.
	\end{claim}
	Observe that in Step 1, the family $\mathcal{F}_0$ is obtained using the algorithm of Proposition~\ref{prop:fk_hs_kernel}, which runs in time $n^{\Oh{d^2}}$ and uses $\Oh{d^2 \log{n}}$ bits of space (for any constant $d$). The output of the algorithm can now be used as an oracle for $G_0$.

	In Step 2, each successive family $\mathcal{F}_j$ ($j \in [i - 1]$) is obtained from $\mathcal{F}_{j - 1}$ by deleting sets containing elements which appear in at least $k^{1 - j \epsilon}$ sets (this test can be performed using $\Oh{\log{n}}$ bits of space). Thus, given oracle access to $\mathcal{F}_{j - 1}$, an oracle for $\mathcal{F}_j$ can be provided which runs in polynomial time and uses $\Oh{\log{n}}$ bits of space.

	Step 3 involves writing out all elements in $U_{i - 1}$ that appear in some set in $\mathcal{F}_{i - 1}$, which can also be done in $\Oh{\log{n}}$ bits of space given oracle access to $G_{i - 1}$. Since the number of oracles created in Step 2 is $i - 1$, the various oracles together run in time $n^{\Oh{i}}$ and use $\Oh{i \log{n}} = \Oh{(d / \epsilon) \log{n}}$ bits of space (Theorem~\ref{thrm:graph_oracles}). Combined with the $n^{\Oh{d^2}}$ time and $\Oh{d^2 \log{n}}$ bits of space used by the oracle of Step 1, this gives bounds of $n^{\Oh{d^2 + (d / \epsilon)}}$ on the running time and $\Oh{(d^2 + (d / \epsilon)) \log{n}}$ bits on the total space used by the algorithm.
\end{proof}

% generalization of Theorem~\ref{thrm:vc_eps_approx} 
The next result follows from the above lemma. 
\begin{theorem}\label{thrm:hs_eps_approx}
	Let $0 < \epsilon \leq 1$. For instances $(U, \mathcal{F})$ of $\pdHS{}$ with $\abs{U} = n$, one can enumerate an $\Oh{(d/\epsilon) n^{\epsilon}}$-approximate minimum hitting set in time $n^{\Oh{d^2 + (d / \epsilon)}}$ using $\Oh{(d^2 + (d/ \epsilon)) \log{n}}$ bits of space.
\end{theorem}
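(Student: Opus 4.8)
The plan is to use the parameterized subroutine of Lemma~\ref{lemm:hs_eps_approx} as a black box and perform a search over the size bound $k$ to turn it into a genuine approximation algorithm. Concretely, I would run the algorithm of Lemma~\ref{lemm:hs_eps_approx} for $k = 1, 2, 3, \dotsc$ and locate the \emph{first} value $k_0$ for which it does not return \NO{} but instead produces a hitting set (necessarily of size $\Oh{(d / \epsilon) k_0^{1 + \epsilon}}$). Such a $k_0$ exists and is at most $n$, since $U$ itself is a hitting set, so for $k = n$ the subroutine cannot correctly answer \NO{}. The set emitted at $k = k_0$ is the output of the overall algorithm.

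For the approximation guarantee, note that by the choice of $k_0$ as the first success, the subroutine answered \NO{} at $k_0 - 1$ (when $k_0 > 1$), and since a \NO{} answer is always correct, this certifies that $\mathcal{F}$ has no hitting set of size at most $k_0 - 1$, i.e.\ $OPT \geq k_0$. The emitted solution therefore has size $\Oh{(d / \epsilon) k_0^{1 + \epsilon}} = \Oh{(d / \epsilon) k_0 \cdot k_0^{\epsilon}}$; using $k_0 \leq OPT$ together with $k_0 \leq n$ (hence $k_0^{\epsilon} \leq n^{\epsilon}$), this is $\Oh{(d / \epsilon) n^{\epsilon}} \cdot OPT$, which is exactly the claimed factor-$\Oh{(d / \epsilon) n^{\epsilon}}$ approximation. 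The boundary case $k_0 = 1$ (and the degenerate case of an empty family) is handled directly, as the solution size is then $\Oh{d / \epsilon}$ while $OPT \geq 1$.

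To respect the read-only/stream model, I would split the computation into two passes. In the first pass I run the subroutine in a \emph{detection} mode for successive $k$: the algorithm is executed with its output stream suppressed, and only the single bit recording whether it returned \NO{} is retained; this locates $k_0$ using the same $\Oh{(d^2 + (d / \epsilon)) \log{n}}$ bits as a single invocation, plus an $\Oh{\log{n}}$-bit counter for $k$, with space reused across trials. In the second pass I invoke the subroutine once more with $k = k_0$ and actually stream its output. Since there are at most $n$ detection trials (a doubling search would cut this to $\Oh{\log{n}}$, with the certifying \NO{} at $k_0 / 2$ giving $OPT > k_0 / 2$ and the same asymptotic ratio), and each invocation costs $n^{\Oh{d^2 + (d / \epsilon)}}$ time, the total time remains $n^{\Oh{d^2 + (d / \epsilon)}}$ and the total space $\Oh{(d^2 + (d / \epsilon)) \log{n}}$.

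The main subtlety is that Lemma~\ref{lemm:hs_eps_approx} is one-sided: a \NO{} answer is guaranteed correct, but the subroutine is not promised to return \NO{} whenever $OPT > k$, so the success/\NO{} pattern need not be monotone in $k$. This is precisely why I search upward from $k = 1$ and stop at the \emph{first} success: that stopping rule is what supplies the certificate $OPT \geq k_0$ regardless of the subroutine's behaviour at larger values of $k$. The only other point requiring care is the stream model, which forbids emitting and then retracting a candidate solution and thus forces the two-pass detect-then-emit structure above; everything else is a routine reuse of the resource bounds already established in Lemma~\ref{lemm:hs_eps_approx}.
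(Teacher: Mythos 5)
Your proposal is correct and follows essentially the same route as the paper: run the subroutine of Lemma~\ref{lemm:hs_eps_approx} for increasing $k$, stop at the first non-\NO{} answer, and use the \NO{} at $k_0 - 1$ to certify $OPT \geq k_0$, giving the factor $\Oh{(d/\epsilon)k_0^{\epsilon}} = \Oh{(d/\epsilon)n^{\epsilon}}$. The only cosmetic difference is that the paper caps the search at $k = \ceil{n^{1-\epsilon}}$ and returns the whole universe there, whereas you let $k$ run to $n$; both yield the same bound, and your explicit treatment of the first-success stopping rule and the detect-then-emit pass structure is consistent with (indeed slightly more careful than) the paper's argument.
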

\begin{proof}
Consider the following algorithm. Starting with $k = 1$, iteratively apply the procedure of Lemma~\ref{lemm:hs_eps_approx} and increment $k$'s value until it returns a family of size $\Oh{(d / \epsilon) k^{1+\epsilon}}$ or $k=\ceil{n^{1-\epsilon}}$. When $k=\ceil{n^{1-\epsilon}}$ return the entire universe as the solution. In this case, $OPT \geq n^{1-\epsilon}$, the size of the solution produced is $n$ and $n \leq n^{\epsilon} \cdot OPT$, so we have a factor-$n^\epsilon$ approximation.

In the other case, the algorithm returns a family of size $\Oh{(d / \epsilon) k^{1 + \epsilon}}$ for some $k$. Note that $OPT \geq k$ (as the algorithm returned \NO{} so far), so the solution produced is of size $\Oh{(d / \epsilon) k^{\epsilon} k}$, which is $\Oh{(d / \epsilon) n^{\epsilon} OPT}$, i.e.\ we have a factor-$\Oh{(d/\epsilon) n^{\epsilon}}$ approximation. As we merely reuse the procedure of Lemma~
\ref{lemm:hs_eps_approx}, the overall running time is $n^{\Oh{(d^2 + (d / \epsilon)) }}$ and the amount of space used is $\Oh{(d^2 + (d / \epsilon)) \log{n}}$ bits.
\end{proof}

The above theorem allows us to devise space-efficient approximation algorithms for a number of \emph{graph deletion} problems. Let $\Pi$ be a hereditary class of graphs, i.e.\ a class closed under taking induced subgraphs. Let $\Phi$ be a set of forbidden graphs for $\Pi$ such that a graph $G$ is in $\Pi$ if and only no induced subgraph of $G$ is isomorphic to a graph in $\Phi$. Consider the problem $\prob{Del--}\Pi_{fin}$ (described below), defined for classes $\Pi$ with finite sets $\Phi$ of forbidden graphs.
\begin{quote}
	\begin{description}
		\item[\textbf{Instance}] $G$, a graph 
		\item[\textbf{Solution}] a minimum-size set of vertices whose deletion yields a graph in $\Pi$
	\end{description}
\end{quote}

The next result is a combination of the fact that $\prob{Del--}\Pi$ can be formulated as a certain hitting set problem and the procedure of Theorem~\ref{thrm:hs_eps_approx}.
\begin{lemma}\label{lemm:del_pi}
	Let $\epsilon \leq 1$ be a positive number. On graphs with $n$ vertices, one can enumerate $\Oh{(1/\epsilon) n^{\epsilon}}$-approximate solutions for $\prob{Del--}\Pi_{fin}$ in time $n^{\Oh{1 / \epsilon}}$ using  $\Oh{(1 / \epsilon) \log{n}}$ bits of space.
\end{lemma}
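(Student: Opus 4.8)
The plan is to reduce $\prob{Del--}\Pi_{fin}$ to $\pdHS{}$ in an approximation-preserving way and then invoke Theorem~\ref{thrm:hs_eps_approx}. Since $\Phi$ is finite, set $d = \max_{F \in \Phi} \abs{\V{F}}$, which is a fixed constant independent of the input. Given the input graph $G$, I would form the hitting set instance $(U, \mathcal{F})$ with ground set $U = \V{G}$ (so $\abs{U} = n$) and family
\[
\mathcal{F} = \setb{W \subseteq \V{G}}{\abs{W} \leq d \text{ and } G[W] \cong F \text{ for some } F \in \Phi}.
\]
Every member of $\mathcal{F}$ has size at most $d$, so this is a legitimate instance of $\pdHS{}$ with the constant parameter $d$.

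First I would verify that the reduction is exact, and hence approximation-preserving. A vertex set $D \subseteq \V{G}$ is a feasible deletion set, i.e.\ $G - D \in \Pi$, precisely when $G - D$ contains no induced copy of any $F \in \Phi$. Since the graphs $G[W]$ with $W \subseteq \V{G} \setminus D$ range over exactly the induced subgraphs of $G - D$, this is equivalent to requiring that every $W \in \mathcal{F}$ meets $D$, i.e.\ that $D$ is a hitting set for $\mathcal{F}$. Thus feasible deletion sets and hitting sets for $\mathcal{F}$ coincide as a collection, their minimum sizes agree, and any factor-$\alpha$ approximate hitting set for $\mathcal{F}$ is a factor-$\alpha$ approximate deletion set for $G$.

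Next I would argue that $\mathcal{F}$ can be presented to the algorithm of Theorem~\ref{thrm:hs_eps_approx} without ever being stored explicitly. Deciding membership of a candidate $W$ in $\mathcal{F}$ only requires holding its $\leq d$ vertices (each an $\Oh{\log{n}}$-bit index, so $\Oh{d \log{n}} = \Oh{\log{n}}$ bits for constant $d$) and testing, using oracle access to $\E{G}$, whether $G[W]$ is isomorphic to one of the finitely many fixed graphs in $\Phi$; because $\Phi$ is fixed, this comparison uses only $\Oh{1}$ further space and constant time. Hence $\mathcal{F}$, which has $\Oh{n^d}$ members, is served by an oracle running in polynomial time and $\Oh{\log{n}}$ bits of space, exactly the kind of implicit access the procedures of this paper are built to consume.

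Finally I would run the procedure of Theorem~\ref{thrm:hs_eps_approx} on $(U, \mathcal{F})$. As $d$ is constant, its guarantees specialize to a factor-$\Oh{(d / \epsilon) n^{\epsilon}} = \Oh{(1 / \epsilon) n^{\epsilon}}$ approximation, running time $n^{\Oh{d^2 + (d / \epsilon)}} = n^{\Oh{1 / \epsilon}}$, and space $\Oh{(d^2 + (d / \epsilon)) \log{n}} = \Oh{(1 / \epsilon) \log{n}}$ bits; the polynomial-time, logarithmic-space oracle for $\mathcal{F}$ inflates the running time only by a polynomial factor (absorbed into the exponent) and the space only by $\Oh{\log{n}}$ bits, so the stated bounds survive. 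The point needing the most care, and the one I expect to be the main obstacle, is confirming that Theorem~\ref{thrm:hs_eps_approx} (and, underneath it, the kernelization of Proposition~\ref{prop:fk_hs_kernel}) tolerates $\mathcal{F}$ being supplied implicitly by such an oracle rather than as an explicit list, so that its resource bounds, which are phrased in terms of $\abs{U} = n$, remain valid even though the family itself contains $\Oh{n^d}$ sets.
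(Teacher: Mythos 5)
Your proposal is correct and follows essentially the same route as the paper: both reduce $\prob{Del--}\Pi_{fin}$ to $\pdHS{}$ by taking the family of (at most $d$-element) vertex subsets inducing a forbidden graph from $\Phi$, observe that hitting sets coincide exactly with deletion sets, and then invoke Theorem~\ref{thrm:hs_eps_approx} with $d$ constant to absorb the $d$-dependent factors. Your extra care about supplying $\mathcal{F}$ via a polynomial-time, $\Oh{\log{n}}$-space membership oracle is the same (implicit) enumeration step the paper performs, so there is no substantive difference.
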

\begin{proof}
Let $\Phi$ be the (finite) set of forbidden subgraphs characterizing $\Pi$, $d$ be the maximum number of vertices in any graph in $\Phi$ and $G$ be the input graph with $n$ vertices. Start by enumerating the following family.
$$ \mathcal{F}_G = \setb{S \subseteq \V{G}}{G[S]\ \text{contains a graph from}\ \Phi}$$
This can be done by running over all subsets of $V(G)$ of size at most $d$, and checking for each subset $S$ whether $G[S]$ is isomorphic to some graph in $\Phi$. Since there are constantly many graphs in $\Phi$, this procedure takes time $\Oh{n^d}$ and uses $\Oh{d \log{n}}$ bits of space. Now using the procedure of Theorem~\ref{thrm:hs_eps_approx}, enumerate an $\Oh{(d / \epsilon) n^{\epsilon}}$-approximate minimum hitting set for $\mathcal{F}_G$.

Observe that any set of vertices is a hitting set for $\mathcal{F}_G$ if and only if it is a deletion set for $G$ (as an instance of $\prob{Del--}\Pi_{fin}$). Thus, the hitting set enumerated is an $\Oh{(d / \epsilon) n^{\epsilon}} = \Oh{(1 / \epsilon) n^{\epsilon}}$-approximate ($d$ is constant) minimum deletion set for $G$. The procedure runs in time $n^{\Oh{d^2 + (d / \epsilon)}} = n^{\Oh{1 / \epsilon}}$ and uses $\Oh{(d^2 + (d / \epsilon))\log{n}} = \Oh{(1 / \epsilon) \log{n}}$ bits of space. Combined with the enumeration procedure, the overall running time is $n^{\Oh{d \cdot (1 / \epsilon)}} = n^{\Oh{1 / \epsilon}}$ and the amount of space used is $\Oh{d \log{n} + (1 / \epsilon) \log{n}} = \Oh{(1 / \epsilon) \log{n}}$ bits.
\end{proof}

The following list defines problems for which we obtain polylogarithmic-space approximation algorithms using the preceding lemma.
\begin{description}
    \item[\pTVD{}]\hfill\\
		\textbf{Instance:} $(G, k)$, where $G$ is a graph and $k \in \N$\\
		\textbf{Question:} Is there a set $S \subseteq \V{G}$ with $\abs{S} \leq k$ such that $G - S$ has no triangles?
    \item[\pTFVS{}]\hfill\\
		\textbf{Instance:} $(D, k)$, where $D$ is a a tournament and $k \in \N$\\
		\textbf{Question:} Is there a set $S \subseteq \V{D}$ with $\abs{S} \leq k$ such that $G - S$ is acyclic?
    \item[\pCVD{}]\hfill\\
		\textbf{Instance:} $(G, k)$, where $G$ is a graph and $k \in \N$\\
		\textbf{Question:} Is there a set $S \subseteq \V{G}$ with $\abs{S} \leq k$ such that $G - S$ is a disjoint union of cliques, i.e.\ a cluster graph?
    \item[\pSVD{}]\hfill\\
		\textbf{Instance:} $(G, k)$, where $G$ is a graph and $k \in \N$\\
		\textbf{Question:} Is there a set $S \subseteq \V{G}$ with $\abs{S} \leq k$ such that $G - S$ can be partitioned into a clique and an independent set, i.e.\ such that $(G - S)$ is a split graph?
    \item[\pThVD{}]\hfill\\
		\textbf{Instance:} $(G, k)$, where $G$ is a graph and $k \in \N$\\
		\textbf{Question:} Is there a set $S \subseteq \V{G}$ with $\abs{S} \leq k$ such that $G - S$ is threshold graph? A threshold graph is one which can be constructed from a single vertex by a sequence of operations that either add an isolated vertex, or add a vertex which dominates all the other vertices.
    \item[\pCoVD{}]\hfill\\
		\textbf{Instance:} $(G, k)$, where $G$ is a graph and $k \in \N$\\
		\textbf{Question:} Is there a set $S \subseteq \V{G}$ with $\abs{S} \leq k$ such that $G - S$ contains no induced paths of length $4$, i.e.\ it is a cograph?
\end{description}

For all the problems appearing above, the target graph classes are known to be characterized by a finite set of forbidden induced subgraphs (see e.g.\ Cygan et al.~\cite{CFK+2015book}) and so the problems can be formulated as $\prob{Del--}\Pi$. By setting $\epsilon$ to a small positive constant or $(1 / \log{n})$, we obtain the following corollary to Lemma~\ref{lemm:del_pi}.
\begin{corollary}\label{corr:phidel2}
	On graphs with $n$ vertices, one can enumerate
	\begin{itemize}
		\item $\Oh{n^{\epsilon}}$-approximate solutions in time $n^{\Oh{1 / \epsilon}} = n^{\Oh{1}}$ using $\Oh{(1 / \epsilon) \log{n}} = \Oh{\log{n}}$ bits of space for any positive constant $\epsilon \leq 1$, and

		\item $\Oh{\log n}$-approximate solutions in time $n^{\Oh{\log{n}}}$ using $\Oh{\log^2 n}$ bits of space
	\end{itemize}
	for the problems $\pVC{}$, $\pTVD{}$,\\$\pThVD{}$, $\pCVD{}$, $\pSVD{}$,\\$\pCoVD{}$ and $\pTFVS{}$.
\end{corollary}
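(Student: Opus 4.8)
The plan is to exhibit each of the seven problems as an instance of $\prob{Del--}\Pi_{fin}$ and then invoke Lemma~\ref{lemm:del_pi} with two different choices of $\epsilon$, simplifying the resulting bounds in each case. Since the corollary is essentially a specialization of the lemma, the only real content is (i) confirming that every target class admits a finite forbidden-induced-subgraph characterization, and (ii) checking the arithmetic for the two settings of $\epsilon$.

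First I would record the finite forbidden-induced-subgraph characterizations of the target classes, all of which are standard (see Cygan et al.~\cite{CFK+2015book}). The class of edgeless graphs (relevant to $\pVC{}$) is characterized by $\brc{K_2}$; triangle-free graphs by $\brc{K_3}$; cluster graphs by $\brc{P_3}$; cographs by $\brc{P_4}$; split graphs by $\brc{2K_2, C_4, C_5}$; and threshold graphs by $\brc{2K_2, P_4, C_4}$. For $\pTFVS{}$, a tournament is acyclic if and only if it contains no directed triangle, so the single forbidden configuration is the directed $3$-cycle. In every case the forbidden set $\Phi$ is finite, hereditary membership in the class is equivalent to excluding all members of $\Phi$ as induced subgraphs, and the maximum number $d$ of vertices in any member of $\Phi$ is a fixed constant (at most $5$). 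Consequently each deletion problem is precisely $\prob{Del--}\Pi_{fin}$, so Lemma~\ref{lemm:del_pi} applies.

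Next I would apply the lemma with $\epsilon$ any fixed positive constant at most $1$. Because $1/\epsilon = \Oh{1}$ and $d$ is constant, the approximation factor $\Oh{(1/\epsilon) n^{\epsilon}}$ simplifies to $\Oh{n^{\epsilon}}$, the running time $n^{\Oh{1/\epsilon}}$ becomes $n^{\Oh{1}}$, and the space $\Oh{(1/\epsilon)\log{n}}$ becomes $\Oh{\log{n}}$, which establishes the first bullet. For the second bullet I would instead set $\epsilon = 1/\log{n}$. Then $n^{\epsilon} = n^{1/\log{n}} = \Oh{1}$ and $1/\epsilon = \log{n}$, so the factor becomes $\Oh{(1/\epsilon) n^{\epsilon}} = \Oh{\log{n}}$, the running time becomes $n^{\Oh{1/\epsilon}} = n^{\Oh{\log{n}}}$, and the space becomes $\Oh{(1/\epsilon)\log{n}} = \Oh{\log^2{n}}$, giving the second bullet.

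I expect no substantive obstacle: the bulk of the argument is substituting constants into Lemma~\ref{lemm:del_pi} and reading off the two specializations. The one place requiring care is $\pTFVS{}$, which concerns tournaments rather than undirected graphs; here the enumeration step of Lemma~\ref{lemm:del_pi} would build the family from vertex triples inducing a directed triangle rather than from undirected induced subgraphs, but the correspondence between hitting sets of this family and feedback vertex sets of the tournament is immediate, so the reduction and its resource bounds go through unchanged.
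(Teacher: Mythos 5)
Your proposal is correct and follows exactly the route the paper takes: cite finite forbidden-induced-subgraph characterizations for each target class and specialize Lemma~\ref{lemm:del_pi} at $\epsilon$ a constant and at $\epsilon = 1/\log{n}$. In fact you supply more detail than the paper, which merely points to Cygan et al.\ for the characterizations and states the two substitutions without the arithmetic or the remark about tournaments.
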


\subsection{$\pDS{}$}\label{ssct:dom_set}
In this section, we describe approximation algorithms for $\pDS{}$ restricted to certain graph classes. A problem instance consists of a graph $G = (V, E)$ and $k \in\N$, and the objective is to determine if there is a \emph{dominating set} of size at most $k$, i.e.\ a set $S \subseteq V$ of at most $k$ vertices such that $S \cup \Nb{S} = V$. 

% for restrictions of the \textsc{dominating set} problem  to various classes of graphs. 
%As we saw in the introduction, there is an easy $\sqrt n$ approximation factor approximation algorithm using $\Oh{\sqrt{n} \log{n}}$ bits of space.
%Here, we are given an undirected graph $G=(V,E)$ and the goal is to find a subset $S \subseteq V$ of smallest size such that for any $v \notin S$, there exists a neighbour in $S$.

The first result of this section concerns graphs excluding $C_4$ (a cycle on $4$ vertices) as a subgraph. On such graphs, one can enumerate $\Oh{\sqrt{n}}$-approximations in time $n^{\Oh{1}}$ and $\Oh{\log{n}}$ bits of space using a known kernelization algorithm~\cite{RS2008Algorithmica}.

%It is known that the standard greedy algorithm (that repeatedly picks and deletes a vertex with highest degree in the remaining graph) results in a $\ln n$ ratio approximation algorithm and is known to be optimal under standard complexity theoretic assumptions. We don't know how to implement this in poly logarithmic number of bits of space. Hence we look at approximation for dominating set in some special classes of graphs. Dominating Set problem is hard for the parameterized complexity class $W[2]$. One of the earliest fixed-parameter algorithms for Dominating Set is in the class of graphs that do not have a cycle on $4$ vertices as a subgraph (not necessarily induced). Inspired by this algorithm, we provide an approximation algorithm that can be implemented in $\Oh{\log n}$ bits of space with approximation ratio $\sqrt n$. This is then generalized to even more general $d$-degenerate graphs albeit at the cost of another  $\Oh{\log n}$ factor increase in space.

\subsubsection{\tops{$C_4$}{C4}-Free Graphs}\label{ssct:c4_ds_approx} 
%Along the lines of our strategy for $\pVC{}$ above, we first show the $\pDS{}$ kernelization result of Philip et al.~\cite{PRS2012TALG} can be used to obtain solutions.
Any vertex $v \in \V{G}$ of degree at least $2k+1$ must be in any dominating set of size at most $k$, as any other vertex (including a neighbour of $v$) can dominate at most $2$ vertices in the neighbourhood (as there will be a $C_4$ otherwise). Using this, we establish the following result.

\begin{lemma}
	There is an algorithm which takes as input a $C_4$-free graph $G$ on $n$ vertices and $k \in \N$, and either determines that $G$ has no dominating set of size at most $k$, or outputs a dominating set of size $\Oh{k^2}$. The algorithm runs in time $n^{\Oh{1}}$ and uses $\Oh{\log{n}}$ bits of space.
\end{lemma}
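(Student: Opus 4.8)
The plan is to build directly on the observation preceding the statement: in a $C_4$-free graph, any vertex $v$ with $\deg{v} \geq 2k + 1$ must belong to every dominating set of size at most $k$, since any other vertex can dominate at most two of $v$'s neighbours. First I would let $H = \setb{v \in \V{G}}{\deg{v} \geq 2k + 1}$ be the set of \emph{high-degree} vertices. Because each one is forced into any size-$\leq k$ solution, if $\abs{H} > k$ then no dominating set of size at most $k$ exists, and the algorithm returns \NO{}. This is the first step, and its correctness is immediate from the observation.

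Otherwise $\abs{H} \leq k$, and $H$ already dominates $H \cup \Nb{H}$. The remaining vertices to be dominated are $R = \V{G} \setminus \brn{H \cup \Nb{H}}$, none of which lies in or is adjacent to $H$. The crucial bound is on $\abs{R}$: every vertex of $R$ has degree at most $2k$ (it is not in $H$), and in any dominating set $D$ of size at most $k$ we have $H \subseteq D$, so the low-degree part $D \setminus H$ must dominate all of $R$. As each vertex of $D \setminus H$ dominates at most $2k + 1$ vertices, we get $\abs{R} \leq k(2k + 1)$. Hence the second step is: if $\abs{R} > k(2k+1)$, return \NO{} (again correct, since it certifies that no size-$\leq k$ dominating set can cover $R$); otherwise output $H \cup R$. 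This set dominates all of $\V{G}$ because $H$ dominates $H \cup \Nb{H}$ while each vertex of $R$ dominates itself, and its size is $\abs{H} + \abs{R} \leq k + k(2k+1) = \Oh{k^2}$, as required.

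The main obstacle is the resource bound: realizing all of this within $\Oh{\log{n}}$ bits, which forbids storing $H$ or $R$ explicitly. I would answer membership on the fly rather than materialize either set. Testing $v \in H$ is just counting $\deg{v}$ with a single $\Oh{\log{n}}$-bit counter, and $\abs{H}$ (capped at $k+1$) is computed by one pass over $\V{G}$ reusing that counter. Testing $w \in R$ requires $\deg{w} \leq 2k$ together with $\deg{u} \leq 2k$ for every neighbour $u$ of $w$, a doubly-nested scan reusing $\Oh{\log{n}}$ bits, so $\abs{R}$ (capped at $k(2k+1)+1$) is likewise computable by a single pass. Since both $H$-membership and $R$-membership are predicates checkable in $\Oh{\log{n}}$ space, this is exactly the oracle-based setting handled by Theorem~\ref{thrm:graph_oracles} with a constant number of levels; the final output loop then enumerates $v \in \V{G}$ and emits those passing the $H$- or $R$-membership test. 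Every step runs in $n^{\Oh{1}}$ time and $\Oh{\log{n}}$ bits, giving the claimed bounds.
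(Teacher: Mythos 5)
Your proof is correct and follows essentially the same route as the paper's: force the vertices of degree at least $2k+1$ into the solution (rejecting if there are more than $k$ of them), bound the set of still-undominated vertices by $\Oh{k^2}$ via the degree-$\leq 2k$ domination argument (rejecting otherwise), and output the union, with membership tested on the fly in $\Oh{\log n}$ bits. The only cosmetic difference is your rejection threshold $k(2k+1)$ versus the paper's slightly tighter $(2k+1)(k-\abs{S})$, which does not affect correctness or the $\Oh{k^2}$ bound.
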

\begin{proof}
	Consider the following algorithm.
	\begin{enumerate}
	\item Let $S$ be the set of vertices with degree more than $2k$. If $|S|$ is more than $k$, return \NO{}.

	\item The set $S$ dominates all vertices in $\Nb{S}$. If $\abs{V \setminus (S \cup \Nb{S})} > (2k + 1) \cdot (k - \abs{S})$ return \NO{}, as each vertex in $V \setminus S$ can dominate at most $2k+1$ vertices including itself. 

    \item Output $S \cup (V \setminus (S \cup \Nb{S}))$.
	\end{enumerate}

	Recall that any vertex $v \in \V{G}$ of degree at least $2k+1$ must be in any dominating set of size at most $k$. Correctness is now immediate from the the description of the algorithm. When it outputs vertices, it outputs $S$, which has at most $k$ vertices from Step 1, and the number of remaining vertices in Step 2 is $\Oh{k^2}$, so it outputs $\Oh{k^2}$ vertices overall. To see that the space used is $\Oh{\log{n}}$ bits, observe that membership in each of the sets output is determined by predicates that test degrees of vertices individually, and these predicates can by computed in logarithmic space. Thus, by Theorem~\ref{thrm:graph_oracles}, the algorithm uses a total of $\Oh{\log{n}}$ bits of space.
\end{proof}

The proof of the following corollary uses arguments very similar to those in the proof of Theorem~\ref{thrm:hs_eps_approx}, so we omit it.
\begin{corollary}
	There is an algorithm which takes as input a $C_4$-free graph $G$ on $n$ vertices, and enumerates an $\Oh{\sqrt{n}}$-approximate minimum dominating set for $G$. The algorithm runs in polynomial time and uses $\Oh{\log{n}}$ bits of space.
\end{corollary}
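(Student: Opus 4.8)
The plan is to use the preceding lemma's kernelization-style procedure as a black box inside a search over the parameter $k$, exactly as Corollary~\ref{corr:hs_sqrt_approx} does with Proposition~\ref{prop:fk_hs_kernel} and as the proof of Theorem~\ref{thrm:hs_eps_approx} does with Lemma~\ref{lemm:hs_eps_approx}. Concretely, I would fix the threshold $K = \ceil{\sqrt{n}}$ and, starting from $k = 1$, repeatedly invoke the lemma's algorithm while incrementing $k$, stopping as soon as either the procedure returns a dominating set (necessarily of size $\Oh{k^2}$) for the first time, or $k$ reaches $K$ with the procedure still answering \NO{}.

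For the approximation guarantee I would distinguish the two stopping cases. If the search reaches $k = K$ and the lemma still answers \NO{}, then $G$ has no dominating set of size at most $K = \ceil{\sqrt{n}}$, so $OPT > K \geq \sqrt{n}$; in this case the algorithm outputs the whole vertex set $\V{G}$, which is trivially a dominating set of size $n$, giving a ratio of $n / OPT < n / \sqrt{n} = \sqrt{n}$. Otherwise, let $k_0 \leq K$ be the first value at which the lemma succeeds; since it returned \NO{} for every $k < k_0$, there is no dominating set of size at most $k_0 - 1$, whence $OPT \geq k_0$. The enumerated set has size $\Oh{k_0^2}$, so the ratio is $\Oh{k_0^2} / OPT \leq \Oh{k_0^2 / k_0} = \Oh{k_0} \leq \Oh{\sqrt{n}}$. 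Both branches therefore yield an $\Oh{\sqrt{n}}$-approximate minimum dominating set.

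For the resource bounds, each invocation of the lemma's procedure runs in time $n^{\Oh{1}}$ and uses $\Oh{\log{n}}$ bits of space, and the search performs at most $K = \Oh{\sqrt{n}}$ invocations, so the total running time remains polynomial while the working space is reused across iterations. The only persistent state is the counter $k$ and the located value $k_0$, each fitting in $\Oh{\log{n}}$ bits. Since the output cannot be stored, I would first run the search only to identify $k_0$ (or to detect that the cap was reached), and then re-invoke the lemma's procedure once at $k = k_0$ to enumerate the dominating set onto the output stream; in the capped case, $\V{G}$ is streamed directly. This keeps the total space at $\Oh{\log{n}}$ bits.

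The main point to get right is the balancing at the threshold $K = \ceil{\sqrt{n}}$: the cap must be chosen so that the ``kernel succeeds'' branch contributes a ratio of $\Oh{k_0} = \Oh{\sqrt{n}}$ (which forces $k_0 \leq \sqrt{n}$), while the ``all-\NO{}'' branch simultaneously guarantees $OPT > \sqrt{n}$, so that outputting all of $\V{G}$ still lies within factor $\sqrt{n}$ of optimal. Beyond this single calibration, the argument is routine and mirrors the earlier proofs, which is precisely why the authors omit it.
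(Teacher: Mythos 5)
Your proposal is correct and matches the argument the paper intends: it is the same incremental search over $k$ with the threshold $\ceil{\sqrt{n}}$, the same two-case ratio analysis, and the same reuse of the lemma's polynomial-time, $\Oh{\log{n}}$-space kernelization, exactly mirroring the proofs of Corollary~\ref{corr:hs_sqrt_approx} and Theorem~\ref{thrm:hs_eps_approx} that the authors cite when omitting this proof.
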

%We leave it as an open problem to design an algorithm which improves on the approximation ratio while using logarithmic or even polylogarithmic space on these class of graphs.
%
\subsubsection{Bounded-Degeneracy Graphs}\label{ssct:dgn_ds_approx}
A graph is called $d$-degenerate if there is a vertex of degree at most $d$ in every subgraph of $G$. Examples include planar graphs, which are $5$-degenerate and graphs with maximum degree $d$, which are trivially $d$-degenerate.

There is a generalization of the polynomial kernel for $\pDS{}$ on $C_4$-free graphs (used in Section~\ref{ssct:c4_ds_approx}) to $K_{i, j}$-free graphs for any fixed $i, j \in \N$~\cite{PRS2012TALG} ($K_{i, j}$ is the complete bipartite graph with $i$ vertices in one part and $j$ vertices in the other). The class of $K_{i, j}$-free graphs includes $C_4$-free graphs and for $i \leq j$, $(i+1)$-degenerate graphs. This kernel however, does not seem amenable to modifications that would allow its use in computing approximate solutions using logarithmic or even polylogarithmic space. To design a space-efficient approximation algorithm for $d$-degenerate graphs, we resort instead to the factor-$\Oh{d^2}$ approximation algorithm of Jones et al.~\cite{JLR+2017SIDMA}. We make several adaptations to achieve an $\Oh{\log^2{n}}$ bound on the space used.

Let $G$ be a $d$-degenerate graph on $n$ vertices. As every subgraph of $G$ has a vertex with degree at most $d$, the number of edges in $G$ is at most $dn$. The following lemma is an immediate consequence of this.

\begin{lemma}\label{lemm:subgr_d_deg}
	In any $p$-vertex subgraph of a $d$-degenerate graph, at least $p/2$ vertices are of degree at most $2d$.
\end{lemma}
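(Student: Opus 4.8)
The plan is to prove the lemma by a degree-counting (averaging) argument applied to the subgraph itself. First I would observe that $d$-degeneracy is hereditary: every subgraph of a $d$-degenerate graph is again $d$-degenerate, since the defining property---that every subgraph contains a vertex of degree at most $d$---is automatically inherited by subgraphs of subgraphs. Hence the $p$-vertex subgraph $H$ in question is $d$-degenerate, and by the observation preceding the lemma it has at most $dp$ edges. This immediately gives $\sum_{v \in \V{H}} \deg{v} = 2\abs{\E{H}} \le 2dp$, so the average degree in $H$ is at most $2d$.

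Next I would convert this bound on the total degree into a bound on the number of low-degree vertices by a Markov-type double count. Let $A = \setb{v \in \V{H}}{\deg{v} > 2d}$ be the set of high-degree vertices. Each vertex of $A$ contributes strictly more than $2d$ to the degree sum, so $2d\abs{A} < \sum_{v \in A} \deg{v} \le \sum_{v \in \V{H}} \deg{v} \le 2dp$. From here the goal is to extract $\abs{A} \le p/2$, which would say that the complementary set of vertices of degree at most $2d$ has size at least $p/2$, exactly as claimed.

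The hard part will be nailing the constant. The crude inequality above only yields $\abs{A} < p$, and the clean form of the averaging principle---at least half the vertices have degree at most twice the average---gives the weaker conclusion that at least $p/2$ vertices have degree at most $2 \cdot 2d = 4d$ rather than $2d$. Pushing the threshold down to exactly $2d$ is the delicate step: it would follow at once if one could improve the global estimate to $\sum_{v \in \V{H}} \deg{v} \le dp$ (equivalently $\abs{\E{H}} \le dp/2$), but the degeneracy bound supplies only $\abs{\E{H}} \le dp$. I would therefore either accept the factor-$2$ slack and carry a bound of $4d$ through the downstream argument---where only degree $\Oh{d}$ matters for the $d$-degenerate dominating-set approximation---or attempt to exploit the degeneracy ordering directly, bounding each vertex's number of later neighbours by $d$ and charging each high-degree vertex against its low-degree ``private'' neighbours, in the spirit of the tree case where the number of degree-$\ge 3$ vertices is dominated by the number of leaves. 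I expect the counting route to be the robust one, with the precise value of the constant being the single point that needs care.
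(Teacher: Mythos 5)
Your write-up stalls exactly where you say it does, and your diagnosis is the correct one --- trust it over the lemma. The paper offers no proof beyond the remark that the statement is an ``immediate consequence'' of the bound $\abs{\E{H}} \leq dp$; that is precisely the averaging argument you carry out, and as you compute it yields only that at least $p/2$ vertices have degree at most $4d$ (equivalently, that at most $\frac{2d}{2d+1}\,p$ vertices have degree exceeding $2d$). The constant cannot be repaired, because the statement with threshold $2d$ is false for every $d \geq 2$. Take $d = 2$ and group the vertices into blocks $B_1, \dotsc, B_t$ with $B_i = \brc{x_i, y_i, z_i}$, joining every vertex of $B_i$ to both $x_{i+1}$ and $y_{i+1}$. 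Listing the blocks in order, every vertex has exactly two later neighbours, so the graph is $2$-degenerate; yet for $2 \leq i \leq t - 1$ each of $x_i$ and $y_i$ has degree $3 + 2 = 5 > 2d$, so $2(t-2)$ of the $3t$ vertices exceed the threshold, which is more than half once $t \geq 9$. The same pattern with blocks of size $d + 1$, each vertex sending its $d$ forward edges to $d$ designated vertices of the next block, gives $d$-degenerate graphs in which a $\frac{d}{d+1}$ fraction of the vertices have degree $2d + 1$. Your second proposed route (charging along the degeneracy ordering) cannot rescue the constant either: these counterexamples are built directly from such an ordering.

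So the right move is your first option: prove and use the Markov-type statement that at least $p/2$ vertices have degree at most $4d$. That does follow in one line, since heredity of degeneracy gives $\abs{\E{H}} \leq dp$ for the $p$-vertex subgraph $H$ itself, whence $\sum_{v} \deg{v} \leq 2dp$ and fewer than $p/2$ vertices can have degree exceeding $4d$. This weaker form is all the downstream argument needs: in Algorithm~\ref{algo:DgnDomSet} one replaces the test $\degr{G[W^*]}{v} \leq 2d$ by $\degr{G[W^*]}{v} \leq 4d$, a constant fraction of $W_h$ is still handled in each round, the number of rounds remains $\Oh{\log{n}}$, and the charging argument goes through with $2d$ replaced by $4d$, so the approximation factor is still $\Oh{d^2}$.
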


The following is a description of our algorithm.

\begin{algorithm}
\KwIn{$G = (V, E)$, a $d$-degenerate graph}
\KwOut{$S$, an $\Oh{d^2}$-approximate minimum dominating set for $G$}

	$W, W_h \gets V$\;
	$W_l, Y, B, B_h, B_l \gets \emptyset$\;
	\While(\tcp*[f]{there are vertices in $W_h$ to be dominated}){$W_h \neq \emptyset$}{
		$W^* \gets W \cup B_h$\;
		$S \gets \setb{v \in W_h}{\degr{G[W^*]}{v} \leq 2d}$\;	
		$Y \gets Y \cup \Nbr{G[W^*]}{S}$\tcp*[f]{$ Y $ is the partial solution}\;
		$B \gets \Nb{Y}$\;
		$W \gets V \setminus (Y \cup B)$\;\label{algo:DgnDomSet.Wu}
		$B_h \gets \setb{v \in B}{\degr{G[W]}{v} \geq 2d + 1}$\;
		$B_l \gets B \setminus B_h$\;
		$W^* \gets W \cup B_h$\;
		$W_h \gets \setb{v \in W}{v\ \text{is not isolated in}\ G[W^*]}$\;
		$W_l \gets W \setminus W_h$\;
	}
\Return{$Y \cup W_l$}

\caption{DgnDomSet: find an approximate minimum dominating set}\label{algo:DgnDomSet}
\end{algorithm}

% Roughly speaking, in any subgraph of $G$, a good fraction of the vertices have low degree. 
The algorithm starts by picking the neighbours of all vertices (they form the set $S$) of degree at most $2d$, and repeatedly finds such vertices in smaller and smaller subgraphs of $G$, picking all their neighbours into the solution as well. As each such vertex or one of its neighbours must be in any dominating set, this will result in an $\Oh{d}$-approximate solution if we manage to find a vertex that dominates (at least one and) at most $2d$ of the non-dominated vertices (set $W$ on Line~\ref{algo:DgnDomSet.Wu}). This may not happen in the intermediate steps as more and more vertices are dominated by those vertices picked earlier. So the algorithm carefully partitions the set of undominated vertices.

\begin{figure}
\begin{center}
\includegraphics[scale=1]{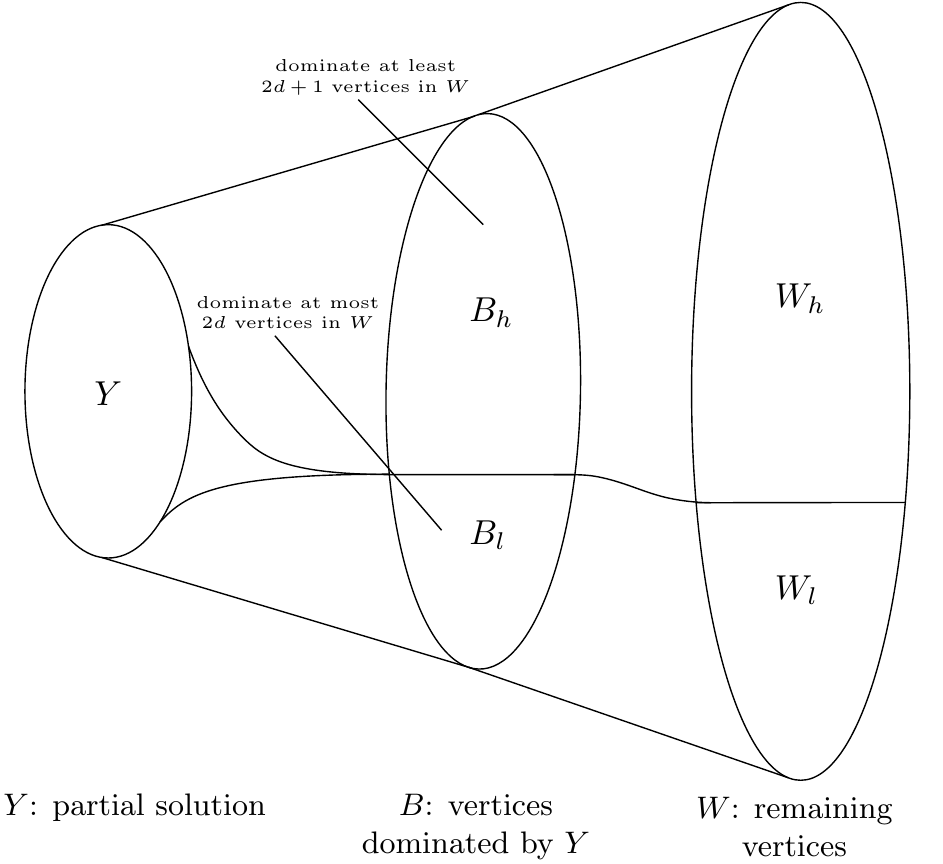}
\end{center}
\caption{Partitioning of the vertices in the algorithm for $\pDS{}$ on $d$-degenerate graphs.}\label{fig:dgn_ds_partition}
\end{figure}

Let $Y$ be the set of vertices picked at any point, $B$ be the set of vertices (other than those in $Y$) dominated by $Y$, and $W$ be the set of vertices in $V \setminus (Y \cup B)$ (see Figure~\ref{fig:dgn_ds_partition}). The goal is to dominate vertices in $W$, and we try to do so by finding (the neighbours of) low degree vertices from $B \cup W$. So we start finding low degree (at most $2d$) vertices in $B \cup W$ to pick their neighbours. First we look for such vertices in $B$, and so we further partition $B$ into $B_h$, those vertices of $B$ with at least $2d+1$ neighbours in $W$ and $B_l = B \setminus B_h$.

% See Figure~\ref{fig:dgn_ds_partition} in Appendix~\ref{apdx:dgn_ds}.

First, we remove (for later consideration) vertices of $W$ that have no neighbours in $W \cup B_h$, let they be $W_l$  and focus on the induced subgraph $G[B_h \cup W_h]$ where $W_h = W \setminus W_l$. Here, we are bound to find low degree vertices from $W_h$ (as vertices in $B_h$ have high degree) as long as $W_h$ is non-empty, and so we repeat the above procedure of picking the neighbours of all low degree vertices from $W_h$. Eventually, when $W_h$ becomes empty, if $W_l$ is non-empty, we simply pick all vertices of $W_l$ into the solution. What follows is a pseudocode description of the algorithm.

% See Appendix~\ref{apdx:dgn_ds} for a pseudocode description of the algorithm.

%As at least the vertex or one of its neighbours must be in any solution, this will result in a $2d +1$-approximation algorithm. However, as we are ignoring vertices from $B_l$, we tradeoff the approximation ratio to $\Oh{d^2}$ (details later).

If we treat a round as the step where we find all vertices in $W_h$ with at most $2d$ neighbours in $W_h$, then as at least a fraction of the vertices of $W_h$ are dominated in each round (Lemma~\ref{lemm:subgr_d_deg}), the number of rounds is $\Oh{\log n}$. Each round just requires identifying vertices based on their degrees in the resulting subgraph, the $i$-th round can be implemented in 
$\Oh{i \log n}$ bits using Theorem~\ref{thrm:graph_oracles} resulting in an $\Oh{\log^2 n}$ bits implementation.

%, $Y$ denotes the set of vertices in the (partial) solution, $W$ denotes the set of vertices yet to be dominated, $B$ is th set of dominated vertices other than the vertices in $Y$. $B_h$ is the set of vertices with degree at least $2d+1$, and $W_h$ is the set of vertices with neighbours in $W \cup B_h$.
% \temp{Check and fill}

The approximation ratio of $\Oh{d^2} $ can be proved formally using a charging argument (see Jones et al.~\cite{JLR+2017SIDMA}, Theorem 4.9). We give an informal explanation here. First we argue the approximation ratio of $(2d+1)$ for the base case when $W_h$ is empty. Isolated vertices in $W_l$ are isolated vertices in $G$ and hence they need to be picked in the solution.
The number of non-isolated vertices in $W_l$ is at most $2d |B_l|$ as their neighbours are only in $B_l$ (otherwise, by definition, those vertices will be in $W_h$). As vertices in $B_l$ have degree at most $2d$, $|W_l| \leq 2d |B_l|$ and as at least one vertex of $B_l \cup W_l$ must be picked to dominate a vertex in $W_l$, we have the approximation ratio of $(2d+1)$ for those vertices.

In the intermediate step, if we did not ignore vertices in $B_l$ to dominate a vertex in $W_h$, a $(2d+1)$- approximation is clear. For, a vertex or one of its at most $2d$ neighbours must be picked in the dominating set. However, a
vertex in $W_h$ maybe dominated by a vertex in $B_l$, but by ignoring $B_l$, we maybe picking $2d$ vertices to dominate it. As a vertex in $B_l$ can dominate at most $2d$ such vertices of $W_h$, we get an approximation ratio of $\Oh {d^2} $. 

The next theorem formalizes the above discussion.
\begin{theorem}
	There is an algorithm which takes as input a $d$-degenerate graph on $n$ vertices and enumerates an $\Oh{d^2}$-approximate minimum dominating for it. The algorithm runs in time $n^{\Oh{\log{n}}}$ and uses $\Oh{\log^2{n}}$ bits of space.
\end{theorem}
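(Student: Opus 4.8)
The plan is to prove four things in sequence: that the main loop runs for only $\Oh{\log n}$ iterations, that the returned set $Y \cup W_l$ is a dominating set, that its size is within a factor $\Oh{d^2}$ of the optimum, and that the whole computation uses $\Oh{\log^2 n}$ bits of space and $n^{\Oh{\log n}}$ time. The approximation guarantee is purely combinatorial and coincides with that of the algorithm of Jones et al.~\cite{JLR+2017SIDMA}, so the genuinely new content is the space-efficient simulation via Theorem~\ref{thrm:graph_oracles}; accordingly I would spend most effort on the round count and the resource analysis and import the charging argument for the ratio.

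For the round count, fix an iteration. Every vertex of $B_h$ has at least $2d+1$ neighbours in $W$, and since the vertices of $W_l$ are isolated in $G[W^*]$ these neighbours all lie in $W_h$; hence each vertex of $B_h$ has degree at least $2d+1$ in $G[W_h \cup B_h]$. Applying Lemma~\ref{lemm:subgr_d_deg} to $G[W_h \cup B_h]$, at least half of its $\abs{W_h} + \abs{B_h}$ vertices have degree at most $2d$, and as none of these can be a $B_h$-vertex, at least $\abs{W_h}/2$ of them lie in $W_h$ and hence belong to $S$. Each such vertex is non-isolated in $G[W^*]$, so adding its $G[W^*]$-neighbourhood to $Y$ places it in $\Nb{Y}$ and removes it from the next $W_h$; thus $\abs{W_h}$ at least halves each iteration, giving $\Oh{\log n}$ rounds. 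When the loop halts with $W_h = \emptyset$, every undominated vertex is isolated in $G[W^*]$, i.e.\ lies in $W_l$; since $Y$ dominates $B$ by construction and $W_l$ is returned outright, $Y \cup W_l$ dominates $\V{G}$.

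For the ratio I would follow the charging argument of Jones et al. Each low-degree vertex $v \in S \subseteq W_h$ is undominated, so any dominating set contains $v$ or one of its at most $2d$ neighbours; charging the at most $2d$ neighbours we pick for $v$ to that optimal vertex gives a factor $2d$ whenever $v$'s optimal dominator lies in $W^*$. The extra factor of $d$ arises solely from vertices of $W_h$ whose optimal dominator is a vertex $u \in B_l$ that the algorithm sets aside: $u$ has at most $2d$ neighbours in $W$, so it dominates at most $2d$ such vertices, and for each we may pick up to $2d$ neighbours, yielding $\Oh{d^2}$. The terminal contribution $W_l$ is bounded separately: its isolated vertices must belong to any dominating set, and its non-isolated vertices number at most $2d\abs{B_l}$ since their neighbours lie only in $B_l$, each of degree at most $2d$.

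Finally, the resource bounds follow from Theorem~\ref{thrm:graph_oracles}, which is where I expect the care to be needed. I would regard the tuple of sets $(Y, B, W, B_h, B_l, W_h, W_l)$ as the state after a round; each of these is obtained from the previous round's state by membership predicates that only count degrees in induced subgraphs, each computable in $\Oh{\log n}$ bits given oracle access to the previous state. Theorem~\ref{thrm:graph_oracles} then lets the round-$i$ state be queried in time $n^{\Oh{i}}$ using $\Oh{i \log n}$ bits, and since there are $\Oh{\log n}$ rounds the maximal oracle nesting depth is $\Oh{\log n}$, yielding total space $\Oh{\log^2 n}$ bits and total time $n^{\Oh{\log n}}$. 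The main obstacle is confirming that the several interdependent set computations inside a single iteration genuinely compose into one deletion step whose membership predicate uses only $\Oh{\log n}$ bits, so that the recursion of Theorem~\ref{thrm:graph_oracles} applies cleanly once per round rather than accumulating additional space for each derived set within a round.
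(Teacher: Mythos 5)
Your proposal is correct and follows essentially the same route as the paper: the same algorithm (\hyperref[algo:DgnDomSet]{DgnDomSet}), the round count via Lemma~\ref{lemm:subgr_d_deg}, the approximation ratio imported from the charging argument of Jones et al.~\cite{JLR+2017SIDMA}, and the resource bounds from $\Oh{\log n}$ nested applications of Theorem~\ref{thrm:graph_oracles}. Your halving argument for $\abs{W_h}$ is in fact spelled out more carefully than in the paper, which only asserts that a constant fraction of $W_h$ is dominated per round.
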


\section{Randomization}\label{sect:randomization}
In this section, we devise approximation algorithms for restricted versions of $\pIS{}$ and $\pDS{}$ using hash families constructible in logarithmic space to derandomize known randomized sampling procedures.

\subsection{$\pIS{}$ on graphs with bounded average degree}\label{ssct:avd_is_approx}
On general graphs, the problem is unlikely to have a non-trivial (factor-$(n^{1 - \epsilon})$) approximation algorithm~\cite{Has1999ActaMath}. However, if the graph has average degree $d$, then an independent set satisfying the bound of the next lemma is a $(2d)$-approximate solution.
Note that graphs of bounded average degree encompass planar graphs and graphs of bounded degeneracy. It is also known that $2d$ is the best approximation ratio possible up to polylogarithmic factors in $d$~\cite{AKS2011TOC,Cha2016JACM}.

\begin{proposition}[Alon and Spencer~\cite{AS2008book}, Theorem 3.2.1]\label{prop:avd_is}
	If a graph on $n$ vertices has average degree $d$, then it has an independent set of size at least $n / (2d)$.
\end{proposition}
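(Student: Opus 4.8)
The plan is to use the probabilistic deletion method, which is the standard route to this Tur\'an-type bound and matches the cited source (Alon and Spencer, Theorem 3.2.1). First I would translate the average-degree hypothesis into an edge count: since the degrees of the $n$ vertices sum to $nd$, the number of edges of $G$ is $m = nd/2$. The idea is then to sample a random vertex subset and repair it into an independent set by deleting a few vertices, bounding the expected net size from below.

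Concretely, I would include each vertex of $G$ independently with probability $p$ to form a random set $T \subseteq \V{G}$, and then delete one endpoint from every edge whose both endpoints lie in $T$; the surviving set $I$ is independent by construction. By linearity of expectation, $\Ex{\abs{T}} = pn$, and the expected number of edges internal to $T$ is $p^2 m = p^2 nd/2$, since a fixed edge lands inside $T$ precisely when both of its endpoints are sampled. Because the repair step removes at most one vertex per internal edge, we obtain $\Ex{\abs{I}} \geq \Ex{\abs{T}} - p^2 m = pn - p^2 nd/2$.

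It then remains to optimize over $p$. Differentiating $pn - p^2 nd/2$ and setting the derivative to zero yields $p = 1/d$, and substituting back gives $n/d - n/(2d) = n/(2d)$. Since $d \geq 1$ (the case $d < 1$ is vacuous, as then $n/(2d) > n$ exceeds the trivial upper bound of $n$ on the size of any independent set), the choice $p = 1/d$ is a legitimate probability. As the expected size of $I$ is at least $n/(2d)$, some realization of the random choices must produce an independent set of size at least $n/(2d)$, establishing the claim. I expect no serious obstacle here; the only points requiring a little care are verifying that the repair step removes at most one vertex per internal edge, so that the inequality $\Ex{\abs{I}} \geq \Ex{\abs{T}} - p^2 m$ holds, and confirming that the optimal $p = 1/d$ lies in the admissible range $[0,1]$.
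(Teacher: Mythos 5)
Your argument is the standard probabilistic deletion method and coincides with the derivation the paper itself sketches around Lemma~\ref{lemm:avd_is_bound} (the paper states the proposition as a cited result of Alon and Spencer without proof), so this is essentially the same approach and it is correct. One small caveat: your dismissal of the case $d<1$ as vacuous is wrong for $1/2 \le d < 1$, where $n/(2d) \le n$ and the claimed bound can actually fail (e.g.\ a single edge plus two isolated vertices gives $n=4$, $d=1/2$, independence number $3 < 4 = n/(2d)$); the proposition genuinely requires the hypothesis $d \ge 1$, which Alon and Spencer include and which is what makes $p = 1/d$ admissible.
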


In what follows, we develop a logarithmic-space procedure that achieves the above bound. Let $G = (V, E)$ be a graph on $n$ vertices with average degree $d$. Consider a set $S \subseteq V$ obtained by picking each vertex in $V$ independently with probability $p =  1 / d$. Let $m_S$ be the number of edges with both endpoints in $S$. The following bound appears as an intermediate claim in the proof of Proposition~\ref{prop:avd_is} (see Alon and Spencer~\cite{AS2008book}, Theorem 3.2.1). We use it here without proof.

\begin{lemma}\label{lemm:avd_is_bound}
	$\Ex{\abs{S} - m_S} = n / (2d)$.
\end{lemma}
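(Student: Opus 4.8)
The plan is to prove the identity by a direct application of linearity of expectation, splitting $\Ex{\abs{S} - m_S}$ into the two pieces $\Ex{\abs{S}}$ and $\Ex{m_S}$ and evaluating each via indicator variables. First I would handle $\Ex{\abs{S}}$. Writing $\abs{S} = \sum_{v \in \V{G}} X_v$, where $X_v$ is the indicator of the event that $v$ is selected, each $X_v$ has expectation $\Pr{v \in S} = p = 1/d$. By linearity of expectation, $\Ex{\abs{S}} = \sum_{v \in \V{G}} \Ex{X_v} = n \cdot (1/d) = n/d$.

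Next I would compute $\Ex{m_S}$. For each edge $e = uv \in \E{G}$, let $Y_e$ be the indicator that both endpoints $u$ and $v$ lie in $S$, so that $m_S = \sum_{e \in \E{G}} Y_e$. Because the vertices are chosen \emph{independently}, the two events $u \in S$ and $v \in S$ are independent, giving $\Ex{Y_e} = p^2 = 1/d^2$. The key auxiliary fact is the edge count: a graph on $n$ vertices with average degree $d$ satisfies $\sum_{v \in \V{G}} \deg{v} = dn$, and since each edge is counted twice in this degree sum, $\abs{\E{G}} = dn/2$. Linearity of expectation then yields $\Ex{m_S} = \abs{\E{G}} \cdot p^2 = (dn/2)(1/d^2) = n/(2d)$.

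Finally, combining the two computations and using linearity once more, $\Ex{\abs{S} - m_S} = \Ex{\abs{S}} - \Ex{m_S} = n/d - n/(2d) = n/(2d)$, which is the claimed identity. I do not expect any genuine obstacle here: the argument is a routine linearity-of-expectation calculation. The only two points requiring care are invoking the independence of the vertex-selection events to factor $\Ex{Y_e}$ as $p^2$, and correctly translating the average-degree hypothesis into the edge count $\abs{\E{G}} = dn/2$; everything else is arithmetic.
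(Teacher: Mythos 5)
Your proof is correct and is precisely the standard argument from Alon and Spencer (Theorem~3.2.1) that the paper cites; the paper itself states this lemma without proof, deferring to that reference. One small observation worth making explicit: your computation of $\Ex{m_S}$ only uses independence of the pair of events $u \in S$ and $v \in S$ for each single edge $uv$, i.e.\ pairwise independence suffices --- which is exactly the property the paper relies on later when derandomizing via a $2$-universal hash family.
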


Consider the set $I$ obtained by arbitrarily eliminating an endpoint of each edge in $G[S]$. Observe that $G[I]$ has no edges, i.e.\ $I$ is an independent set whose expected size is $\Ex{\abs{S} - m_S} = n / (2d)$.

Derandomizing this sampling procedure is simple: simply run through the functions of a $2$-universal hash family $\mathcal{F}$ for $\brs{[n] \to [d]}$ and for each $f \in \mathcal{F}$, pick a vertex $v \in V$ into $S$ if and only if $f(v) = 1$. Because the range of the functions is $[d]$, the sampling probability is $\Pr{v \in S} = 1 / d$. Recall that Lemma~\ref{lemm:avd_is_bound} only requires the sampling procedure to be pairwise independent, so the expectation bound remains the same: $\Ex{\abs{S} - m_S} = n / (2d)$. While going through $\mathcal{F}$, select the function $f \in \mathcal{F}$ which maximizes $|S| - m_S$, where $S = \setb{v \in V}{f(v) = 1} $ and $m_S$ is the number of edges $uv \in E$ with $f(u) = f(v) = 1$. 
Using the construction of Proposition~\ref{prop:cw_uhash}, this step can be performed in polynomial time using $\Oh{\log{n}}$ bits of space and $f$ can be used as an oracle for $S$ at the same space cost.

The next step, in which vertices are deleted arbitrarily from each pair of adjacent vertices in the sample $S$, is tricky to carry out in small space. This is because for any edge $uv$ in $G[S]$, it is not possible to determine whether either of the endpoints survive the deletion procedure without additional information about the other edges incident with $u$ and $v$. However, we can achieve this by using the ordering induced on the vertices by the input encoding to ensure that vertices in $S$ are retained only if they are the smallest (in the input ordering) vertices in their neighbourhoods in $G[S]$. Using this, we prove the following lemma.

\begin{lemma}\label{lemm:is_small_nbd}
	Let $T$ be the set of vertices $v \in S$ such that $v$ is the smallest vertex in its neighbourhood in $G[S]$. The set $T$ is independent in $G$, has size $\abs{T} \geq \abs{S} - m_S$, and one can enumerate $T$ in polynomial time using $\Oh{\log n}$ bits of space.
\end{lemma}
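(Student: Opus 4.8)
The plan is to prove the three assertions—independence of $T$, the size bound $\abs{T} \ge \abs{S} - m_S$, and the logarithmic-space enumeration—separately, exploiting throughout the fact that $T$ consists precisely of the vertices of $S$ that are local minima with respect to the input ordering within $G[S]$ (a vertex is in $T$ exactly when every one of its $G[S]$-neighbours is larger than it, isolated vertices of $G[S]$ qualifying vacuously).

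For independence I would argue by contradiction. Suppose $u, v \in T$ are adjacent in $G$; since both lie in $S$, the edge $uv$ also lies in $G[S]$, so each is a neighbour of the other there. Taking $u$ to be the smaller of the two in the input ordering, $u$ is then a $G[S]$-neighbour of $v$ that is smaller than $v$, which contradicts $v$ being the smallest vertex in its $G[S]$-neighbourhood. Hence no two vertices of $T$ are adjacent, and $T$ is independent in $G$.

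For the size bound the key is to control $\abs{S \setminus T}$. I would orient every edge of $G[S]$ from its larger endpoint to its smaller endpoint. Under this orientation a vertex has out-degree zero exactly when all of its $G[S]$-neighbours are larger than it, i.e.\ exactly when it belongs to $T$; conversely every vertex of $S \setminus T$ has out-degree at least $1$. Since the orientation makes each of the $m_S$ edges the tail of exactly one vertex, the total out-degree equals $m_S$, and therefore $\abs{S \setminus T} \le m_S$. Rearranging gives $\abs{T} = \abs{S} - \abs{S \setminus T} \ge \abs{S} - m_S$, as required.

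For the enumeration, recall from the discussion preceding the lemma that the selected hash function $f$ already serves as an $\Oh{\log n}$-space oracle for $S$, with $v \in S$ iff $f(v) = 1$. To enumerate $T$ I would iterate over all $v \in \V{G}$, and for each $v$ with $f(v) = 1$ scan its neighbours $u$ in $G$ one at a time, testing whether $f(u) = 1$ and $u < v$; I output $v$ exactly when no such smaller $S$-neighbour is found. This needs only a constant number of vertex-indexed counters together with evaluations of $f$, so it uses $\Oh{\log n}$ bits of space, and since it inspects each of the $\Oh{n^2}$ potential edges a constant number of times it runs in polynomial time. I expect the only genuine subtlety to lie in this step: as noted just before the lemma, deleting one endpoint per edge of $G[S]$ cannot be done edge-by-edge in small space, and the whole point of the local-minimum rule is that membership of $v$ in $T$ depends solely on $v$ and its immediate neighbourhood, which is decidable on the fly using $f$. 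Once this localisation is in place, the independence and counting arguments above are immediate.
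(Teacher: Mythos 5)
Your proof is correct, and two of its three parts (independence via the smaller endpoint of a putative edge in $T$, and the neighbourhood-scanning enumeration using $f$ as an oracle for $S$) coincide with the paper's argument. Where you genuinely diverge is the size bound. The paper counts connected components of $G[S]$: writing $t$ for their number and using that an $l$-vertex component has at least $l-1$ edges, it gets $t \geq \abs{S} - m_S$, and then observes that each component contributes at least one vertex to $T$, namely its globally smallest vertex. You instead orient every edge of $G[S]$ from its larger to its smaller endpoint and note that each vertex of $S \setminus T$ has out-degree at least $1$ while the total out-degree is exactly $m_S$, giving $\abs{S \setminus T} \leq m_S$ directly. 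Both arguments are elementary and yield the same bound, but yours is arguably tighter in spirit and more self-contained: it bounds $\abs{S \setminus T}$ rather than passing through the component count, and it avoids the (implicit, unproved in the paper) observation that every component of $G[S]$ contains a local minimum. Your argument also sidesteps a small blemish in the paper's write-up, where the component bound is stated as $t \geq n - m_S$ when it should read $t \geq \abs{S} - m_S$. No gaps in your proposal.
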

\begin{proof}
	Determining if $v \in S$ is the smallest vertex in its neighbourhood in $G[S]$ involves enumerating the neighbourhood of $v$ in the induced subgraph $G[S]$ which can be performed in polynomial time using $\Oh{\log{n}}$ bits of additional space. As we pick only one vertex from each neighbourhood, the picked set $T$ is independent and it is trivial to see that the overall procedure is polynomial-time and uses $\Oh{\log{n}}$ bits of space.

	Let $C_1, \dotsc, C_t$ be the connected components of $G[S]$. Consider the difference between the number of vertices and the number of edges in each component. Any component with $l$ vertices contains at least $l - 1$ edges. For $i \in [t]$, denote by $n_i$ the number of vertices in $C_i$ and by $m_i$, the number of edges. We have $\sum_{i = 1}^t (n_i - 1) \leq \sum_{i = 1}^t m_i$, i.e.\ $\sum_{i = 1}^t n_i - t \leq \sum_{i = 1}^t m_i = m_S$, which implies that $t \geq n - m_S$. As we pick at least one vertex (the smallest vertex) from each component in $T$, we have $\abs{T} \geq t \geq n - m_S$.
\end{proof}

We now have the following theorem as a direct consequence of the above results.

\begin{theorem}\label{thrm:is_2d_approx}
	There is a an algorithm which takes as input a graph $G$ on $n$ vertices with average degree $d$, and enumerates a $(2d)$-approximate maximum independent set in $G$. The algorithm runs in time $n^{\Oh{1}}$ and uses $\Oh{\log{n}}$ bits of space.
\end{theorem}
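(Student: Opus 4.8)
The plan is to assemble the three ingredients that the preceding discussion has already set up: the existence bound (Proposition~\ref{prop:avd_is}), the pairwise-independence expectation bound (Lemma~\ref{lemm:avd_is_bound}), and the space-efficient ``pruning'' step (Lemma~\ref{lemm:is_small_nbd}). The first thing I would do is pin down the approximation ratio at a conceptual level. Since any independent set in $G$ has at most $n$ vertices, the optimum $OPT$ satisfies $OPT \leq n$; thus producing \emph{any} independent set of size at least $n / (2d)$ automatically yields a solution within a factor of $n / (n / (2d)) = 2d$ of optimal. So the entire task reduces to deterministically emitting an independent set of size at least $n / (2d)$ in polynomial time and $\Oh{\log{n}}$ bits of space.

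Next I would describe the derandomized sampling loop. Using Proposition~\ref{prop:cw_uhash}, enumerate a $2$-universal hash family $\mathcal{F}$ for $\brs{[n] \to [d]}$; this costs polynomial time and $\Oh{\log{n}}$ bits. For each $f \in \mathcal{F}$, let $S_f = \setb{v \in \V{G}}{f(v) = 1}$, so that each vertex lands in $S_f$ with probability $1/d$ and the selection is pairwise independent. Because Lemma~\ref{lemm:avd_is_bound} only requires pairwise independence, averaging $\abs{S_f} - m_{S_f}$ over $f \in \mathcal{F}$ gives $n / (2d)$, so there must exist at least one $f^\star \in \mathcal{F}$ with $\abs{S_{f^\star}} - m_{S_{f^\star}} \geq n / (2d)$. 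The algorithm iterates once through $\mathcal{F}$, computing $\abs{S_f} - m_{S_f}$ for each $f$ by two counters (one incremented while scanning vertices $v$ with $f(v) = 1$, the other while scanning edges $uv$ with $f(u) = f(v) = 1$), and retains the index of the maximizing $f^\star$ together with the best value seen. Storing a single family index plus a pair of $\Oh{\log n}$-bit counters keeps this phase within $\Oh{\log{n}}$ bits.

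Finally, with $f^\star$ recovered (reconstructible as an oracle for $S = S_{f^\star}$ at $\Oh{\log n}$ space cost), I would invoke Lemma~\ref{lemm:is_small_nbd} to enumerate the set $T$ of vertices that are smallest in their neighbourhood within $G[S]$. That lemma guarantees $T$ is independent, satisfies $\abs{T} \geq \abs{S} - m_S \geq n / (2d)$, and is enumerable in polynomial time and $\Oh{\log n}$ bits. Emitting $T$ completes the algorithm, and the $(2d)$-approximation follows from the ratio bound above.

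The main obstacle is not the combinatorics---which is entirely supplied by the cited lemmas---but the \emph{space composition}: each phase (enumerating $\mathcal{F}$, evaluating the objective, selecting the maximizer, and running the pruning step of Lemma~\ref{lemm:is_small_nbd}) must be shown to run sequentially while sharing only $\Oh{\log n}$ bits, rather than accumulating space across phases. The key point to verify carefully is that the winning hash function $f^\star$ is passed to the pruning step merely as a logarithmic-size seed used to regenerate $S$ on demand, so that no intermediate set is ever materialized in memory; granting this, every stage stays within $\Oh{\log n}$ bits and the total running time remains $n^{\Oh{1}}$.
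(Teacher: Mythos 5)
Your proposal is correct and follows essentially the same route as the paper: derandomize the $p = 1/d$ sampling via the $2$-universal hash family of Proposition~\ref{prop:cw_uhash}, select the $f$ maximizing $\abs{S_f} - m_{S_f}$ using the pairwise-independence expectation bound of Lemma~\ref{lemm:avd_is_bound}, and prune to an independent set via Lemma~\ref{lemm:is_small_nbd}, with the winning hash function kept only as a logarithmic-size seed. Your explicit observation that $OPT \leq n$ makes any independent set of size at least $n/(2d)$ a $(2d)$-approximation is the same (implicit) reasoning the paper uses.
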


\subsection{$\pDS{}$ on $d$-regular graphs}\label{ssct:reg_ds_approx}
In what follows, we use similar techniques as above to devise a factor-$(\log{(d + 1)} + 1)$ approximation algorithm for $\pDS{}$ restricted to $d$-regular graphs.

\begin{proposition}[Alon and Spencer~\cite{AS2008book}, Theorem 1.2.2] 
	Any graph on $n$ vertices with minimum degree $d$ has a dominating set of size at most $n (\log{(d + 1)} + 1) / (d + 1)$.
\end{proposition}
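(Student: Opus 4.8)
The plan is to prove this by the first-moment method, exactly as in Alon and Spencer. I would fix a sampling probability $p \in [0, 1]$ to be optimised at the end and form a random set $X \subseteq \V{G}$ by including each vertex independently with probability $p$. On its own $X$ need not dominate $G$, so I would repair it deterministically: let $Y = \V{G} \setminus (X \cup \Nb{X})$ be the set of vertices that are neither sampled nor adjacent to a sampled vertex. By construction every vertex of $G$ either lies in $X$, is adjacent to a vertex of $X$, or lies in $Y$, so $X \cup Y$ is a dominating set for any outcome of the sampling. It therefore suffices to control the expected size of $X \cup Y$.

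The central calculation is to estimate $\Ex{\abs{X \cup Y}} = \Ex{\abs{X}} + \Ex{\abs{Y}}$, the two sets being disjoint. The first term is immediate: $\Ex{\abs{X}} = np$. For the second, note that a fixed vertex $v$ lands in $Y$ exactly when neither $v$ nor any of its $\deg{v}$ neighbours is sampled, an event of probability $(1 - p)^{1 + \deg{v}}$. The minimum-degree hypothesis gives $\deg{v} \geq d$, so this probability is at most $(1 - p)^{d + 1}$, and summing over the $n$ vertices yields $\Ex{\abs{Y}} \leq n(1 - p)^{d + 1}$. Applying the elementary inequality $1 - p \leq e^{-p}$ then gives $\Ex{\abs{X \cup Y}} \leq np + n e^{-p(d + 1)}$.

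The last step is to optimise the right-hand side over $p$. Choosing $p = \log(d + 1) / (d + 1)$ balances the two contributions: the first becomes $n \log(d + 1) / (d + 1)$, while the second becomes $n e^{-\log(d + 1)} = n / (d + 1)$, so their sum is precisely $n(\log(d + 1) + 1)/(d + 1)$ (here $\log$ denotes the natural logarithm, as in the cited theorem). Since the expected size of the dominating set $X \cup Y$ is at most this quantity, some outcome of the sampling must achieve a value no larger, and that realisation of $X \cup Y$ is the dominating set we want.

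I expect the proof to be short, with the only subtleties being bookkeeping rather than genuine obstacles. The point most easily overlooked is that the relevant exponent is $1 + \deg{v}$ and not $\deg{v}$, since $v$ itself must also escape the sample for $v$ to be undominated; this is exactly what forces the $(d + 1)$ in the denominator of the final bound. The optimisation of $p$ is the crux of the argument, but it reduces to a one-line calculus computation, and one should also check that the chosen $p$ lies in $[0, 1]$, which holds because $\log(d + 1) \leq d + 1$. All the conceptual work is in setting up $X \cup Y$ so that domination holds deterministically while the size is controlled only in expectation.
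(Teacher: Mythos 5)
Your proof is correct and follows the same first-moment argument the paper sketches after the proposition statement: sample each vertex with probability $p = \log(d+1)/(d+1)$, add the undominated vertices, bound the expectation by $n(p + (1-p)^{d+1})$, and conclude by the probabilistic method. Your write-up is a fuller version of that same outline, including the correct exponent $1 + \deg{v}$ and the sanity check that $p \leq 1$.
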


On a $d$-regular graph, because the size of any dominating set is at least $n / (d + 1)$, the approximation ratio achieved is $\log{(d + 1)} + 1$.

Now we outline the proof of the above proposition to show how it can be derandomized. Consider a $d$-regular graph $G$ on $n$ vertices. Picking each vertex of $G$ with probability $p = \log{(d + 1)} / (d + 1)$ yields a set $S$ with expected size $\Ex{\abs{S}} = np$. By adding in the vertices not dominated by $S$, we obtain a dominating set $W =  S \cup (V \setminus (S \cup \Nb{S}))$. The expected size of this set is $\Ex{\abs{W}} \leq n(p + {(1 - p)}^{d + 1})$, and it can be shown that this quantity is $n(\log{(d + 1)} + 1) / (d + 1)$.

Note that the expectation bounds only need the sampling of the vertices to be pairwise independent. Consider a $2$-universal hash family $\mathcal{F}$ for $\brs{[n] \to [d + 1]}$, and define $S_f = \setb{v \in \V{G}}{f(v) \leq \log{(d + 1)} + 1}$ and $W_f = S_f \cup (V \setminus (S_f \cup \Nb{S_f}))$. Over functions $f = \mathcal{F}$, the sampling probability $\Pr{v \in S_f}$ is $\floor{(\log{(d + 1)} + 1) / (d + 1)}$. Because $\mathcal{F}$ is a $2$-universal hash family, there is a function $f \in \mathcal{F}$ for which $W_f$ achieves the expectation bound for $\abs{W}$ above.

The sampling procedure can now be derandomized as follows. Begin by enumerating $\mathcal{F}$ in logarithmic space using Proposition~\ref{prop:cw_uhash}. For each $f \in \mathcal{F}$, determine $\abs{W_f}$, and output $W_f$ for the first function $f$ for which $\abs{W_f} \geq n(\log{(d + 1)} + 1) / (d + 1)$.

We thus have the following result.
\begin{theorem} 
	There is an algorithm which takes as input a $d$-regular graph $G$ on $n$ vertices and enumerates a $(\log{(d + 1)} + 1)$-approximate minimum dominating set for $G$. The algorithm runs in time $n^{\Oh{1}}$ and uses $\Oh{\log{n}}$ bits of space.
\end{theorem}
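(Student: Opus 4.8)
The plan is to take the randomized construction already sketched above---sample a set $S$ by including each vertex with probability $p = \log{(d+1)}/(d+1)$ and then repair it to the set $W = S \cup (V \setminus (S \cup \Nb{S}))$---and to replace the internal randomness with a $2$-universal hash family enumerable in logarithmic space. First I would isolate the one fact that needs no probabilistic reasoning: for \emph{every} choice of $S$, the set $W$ is a dominating set, because every vertex lying outside $S \cup \Nb{S}$ is explicitly added to $W$. Hence the output is always a valid dominating set, and the whole argument reduces to controlling $\abs{W}$.

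For the size, I would invoke the bound $\Ex{\abs{W}} \leq n\brn{p + {(1-p)}^{d+1}} = n(\log{(d+1)}+1)/(d+1)$ from the preceding discussion and transfer it to the derandomized setting. Using Proposition~\ref{prop:cw_uhash}, enumerate a $2$-universal family $\mathcal{F}$ for $\brs{[n] \to [d+1]}$ in time $n^{\Oh{1}}$ and $\Oh{\log{n}}$ bits of space, and for each $f \in \mathcal{F}$ set $S_f = \setb{v \in \V{G}}{f(v) \leq \log{(d+1)}+1}$ and $W_f = S_f \cup (V \setminus (S_f \cup \Nb{S_f}))$, the threshold being chosen so that $\Pr{v \in S_f}$ matches the sampling probability $p$ up to rounding. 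An averaging argument over $\mathcal{F}$ then guarantees at least one $f$ with $\abs{W_f} \leq n(\log{(d+1)}+1)/(d+1)$; the algorithm scans $\mathcal{F}$, uses the first such $f$ as an oracle for $S_f$, and streams out $W_f$ by evaluating the membership predicate vertex by vertex.

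It remains to account for resources. For a fixed $f$, membership of $v$ in $S_f$ is the single comparison $f(v) \leq \log{(d+1)}+1$, and membership of $v$ in $V \setminus (S_f \cup \Nb{S_f})$ is decided by checking $v \notin S_f$ and scanning $v$'s neighbourhood to confirm that no neighbour lies in $S_f$; both use $\Oh{\log{n}}$ bits. Computing $\abs{W_f}$ is then a single pass with a counter, and the outer loop over $\mathcal{F}$ reuses the same $\Oh{\log{n}}$ bits, so the whole procedure runs in time $n^{\Oh{1}}$ using $\Oh{\log{n}}$ bits of space, as claimed.

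The hard part will be the transfer of the expectation bound to the hash family. Linearity gives $\Exs{f}{\abs{S_f}} = np$ at once, but the undominated term contributes $\sum_{v} \Pr{\brn{\Nb{v} \cup \brc{v}} \cap S_f = \emptyset}$, and evaluating this as ${(1-p)}^{d+1}$ per vertex appeals to the \emph{joint} behaviour of the $d+1$ closed-neighbourhood values of $f$, whereas a $2$-universal family only controls pairs. The crux is therefore to confirm that these pairwise guarantees still force $\Exs{f}{\abs{W_f}}$ to meet the target (so that the averaging step in the second paragraph is legitimate); this is the sole place where the passage from full independence to the logspace-constructible family is delicate, and it is where I would concentrate the verification, adjusting the sampling probability or the family as needed to keep the approximation factor at $\log{(d+1)}+1$ without inflating the space past $\Oh{\log{n}}$ bits.
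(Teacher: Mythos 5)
Your construction is the same as the paper's: sample $S_f$ via a logspace-enumerable $2$-universal family for $\brs{[n] \to [d+1]}$, repair to $W_f = S_f \cup (V \setminus (S_f \cup \Nb{S_f}))$, select by exhaustive search over the family a function whose $W_f$ meets the expectation bound, and compare against the lower bound $n/(d+1)$ on any dominating set of a $d$-regular graph. The observation that $W_f$ is a dominating set for every $f$, and the resource accounting, also match the paper.

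The step you single out as delicate is, however, a genuine gap --- and it is not one you can discharge by routine verification, because the claim as stated is false. Under mere pairwise independence, the per-vertex failure probability $\Pr{\brn{\Nb{v} \cup \brc{v}} \cap S_f = \emptyset}$ cannot be bounded by ${(1-p)}^{d+1}$: that estimate multiplies $d+1$ events and needs $(d+1)$-wise (or at least $\Om{\log{d}}$-wise) independence. The best generic bound from pairwise independence is Chebyshev applied to $X = \sum_{u \in \Nb{v} \cup \brc{v}} \mathbf{1}\brs{u \in S_f}$, namely $\Pr{X = 0} \leq (1-p)/\brn{(d+1)p} \approx 1/\log{(d+1)}$, which inflates the expected undominated part to about $n/\log{(d+1)}$ and the approximation factor to $\thet{d/\log{d}}$ rather than $\log{(d+1)} + 1$. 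The paper itself merely asserts that ``the expectation bounds only need the sampling of the vertices to be pairwise independent'' without proof, so your proposal faithfully reproduces the paper's argument including its unsupported step; closing it would require either a $k$-wise independent family for $k = \thet{\log{d}}$ (still enumerable in $\Oh{\log{d} \cdot \log{n}}$ bits of space, which would change the stated space bound only by that factor) or a different selection rule, and neither your write-up nor the paper supplies this.
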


%\begin{lemma}
%\begin{enumerate}
%		\item $W$ is a dominating set for $G$ and $\Exp{W} = n \brn{p + {(1 - p)}^{d + 1}}$. 
%		\item Any dominating set for $G$ of size at most $n \brn{p + {(1 - p)}^{d + 1}}$ is an $\Oh{\log{d}}$-approximate minimum dominating set.
%	\end{enumerate}
%\end{lemma}

%We now describe an algorithm to deterministically obtain a dominating set whose size not larger than the bound of the above lemma.
%\begin{algorithm}
%\KwIn{$G = (V, E)$, a $d$-regular graph}
%\KwOut{$D$, an $\Oh{\log{d}}$-approximate minimum dominating set}
%
%	$\mathcal{F} \gets \algoi{UHash}{n, k = 2, t = 2, p_0 = d^{-1 / (d - 1)}, p_1 = 1 - d^{-1 / (d - 1)}}$\;\temp{aaa}
%	\ForEach{$f \in \mathcal{F}$}{
%		$S \gets \setb{v \in V}{f(v) = 1}$\;
%		$T \gets \setb{v \in V}{v \notin{S}\ \text{and}\ \Nb{v} \cap S = \emptyset}$\tcp*{determine set of non-dominated vertices}
%		\Return{$S \cup T$}\;
%	}
%
%\caption{RegDomSet: find an approximate minimum dominating set}\label{algo:RegDomSet}
%\end{algorithm}
%
%\begin{theorem}
%	On $d$-regular instances $G$ of $\pDS{}$ with $n$ vertices, Algorithm~\ref{algo:RegDomSet} computes an $\Oh{\log{d}}$-approximate minimum dominating set in space $\Oh{\log{n}}$ and time $\Oh{n^c}$ ($c > 0$, a constant).
%%\end{theorem}
\section{Conclusion}
We devised space efficient approximation algorithms for $\pdHS{}$ (and its restriction $\pVC{}$), $\pIS{}$ and $\pDS{}$ in some special classes of graphs.

The algorithms all require random access to their inputs. It is possible to translate them for the streaming model: each time the algorithm reads an element from the input, it makes a single pass. In this way, the space bounds for the translated algorithms remain essentially the same (additive overhead of $\Oh{\log{n}}$ bits for reading in passes), but they require as many passes over their inputs as the running times of the original algorithms. 

We consider our contribution as simply drawing attention to a direction in the study of approximation algorithms, and believe that it should be possible to improve the approximation ratios and the space used for the problems considered here.  Obtaining a constant-factor or even factor-$\Oh{\log{n}}$ approximation algorithm for $\pVC{}$ and a factor-$\Oh{\log n}$ approximation algorithm for $\pDS{}$ on general graphs using $\Oh{\log{n}}$ bits of space are some specific open problems of interest.

%\begin{acknowledgements}
%If you'd like to thank anyone, place your comments here
%and remove the percent signs.
%\end{acknowledgements}

% Authors must disclose all relationships or interests that 
% could have direct or potential influence or impart bias on 
% the work: 
%
% \section*{Conflict of interest}
%
% The authors declare that they have no conflict of interest.

\bibliography{external/references}
\end{document}